\documentclass[11pt,a4paper]{article}

\usepackage[T1]{fontenc}
\usepackage[utf8]{inputenc}
\usepackage[british]{babel}

\usepackage{settings_tmp}
\usepackage{settings_figures}

\usepackage[round]{natbib}
\usepackage[hidelinks]{hyperref}
\hypersetup{
  pdftitle={Graph determination problem from knowledge of its subgraphs},
  pdfauthor={Coralie Zens et al.}
}
\title{Graph determination problem from knowledge of its subgraphs}

\author[1]{Coralie Zens}
\author[1]{Dominique Barth}
\author[2,3]{Dimitri Watel}
\author[4]{Marc-Antoine Weisser}

\affil[1]{DAVID, Universit{\'e} de Versailles-St-Quentin, France}
\affil[2]{ensIIE, Evry-Courcouronnes, France}
\affil[3]{C{\'e}dric, CNAM, France}
\affil[4]{LISN, CentraleSup{\'e}lec, Universit{\'e} Paris-Saclay, France}
\date{}

\begin{document}

\maketitle

\begin{abstract}
    
    In the context of graph reconstruction studies, this paper addresses 
    the problem defined as follows: given a sample of vertex-coloured graphs, 
    each associated with a number, does there exist a connected graph 
    containing exactly the given number of induced subgraphs isomorphic 
    to each graph in the sample, such that every edge of the solution graph 
    is included in at least one of these subgraphs. This paper first shows 
    that this problem is $\Sigma_2^P$-complete. The associated optimisation 
    problem is to determine the minimum number of vertices in a solution graph, 
    if one exists. 
    For instances consisting of two paths, we prove NP-completeness in general 
    and give a polynomial-time dynamic programming algorithm
    when one path is monochromatic.
    
\end{abstract}

\begin{keywords}
    Graph Reconstruction, Computational Complexity, Dynamic Programming
\end{keywords}

\section{Introduction}\label{sec:intro}

We study the following graph-realisation problem.
The input consists of a collection 
\begin{math}\mathcal{F}=\{G_1,\ldots,G_{|\mathcal{F}|}\}\end{math}
of pairwise non-isomorphic vertex-coloured graphs, an integer $P_i$ for each graph $G_i \in \mathcal{F}$, and an integer $\alpha$.
The question is whether there exists a connected graph $H$ on at most $\alpha$ vertices that contains exactly $P_i$ induced subgraphs isomorphic to $G_i$, for every $i$, and in which every edge belongs to at least one of these prescribed copies.
We call this problem \emph{Supergraph-Sample-Coverage} (SGSC).
Its optimisation version asks for a solution graph of minimum order.
We refer to \cite{Berge73} for standard graph-theoretic terminology.

One motivation arises from chemo-informatics, where molecular structures are represented by graphs and structural information is inferred from physical observables such as vibrational spectra~\citep{Bougueroua25,Konkoli1998I,Konkoli1998III}.

Local structural patterns are often used to characterise molecular configurations and to interpret spectral features. 
This motivates combinatorial abstractions in which such patterns are represented by small graphs and their copies within a larger structure are explicitly constrained.
From this perspective, one may view the inverse problem of predicting a molecular structure from spectral data as the task of constructing a graph that satisfies prescribed constraints on the copies of a collection of patterns, in line with recent formulations of inverse problems in molecular structure elucidation from spectra~\citep{Lemm2024}.
In our formulation, the graphs $G_i$ represent such patterns and the integers $P_i$ encode prescribed multiplicities of these patterns.
The edge-covering condition requires every relation in the reconstructed graph to be supported by an observed pattern.
This yields a combinatorial abstraction inspired by structure-from-spectrum problems, linking spectral considerations with constructive questions in graph theory.

SGSC is related to several classical graph problems, but differs from each of them in a significant way.
In the Graph Reconstruction problem, a graph $G=(V,E)$ is to be uniquely determined from its deck, namely the multiset\footnote{Unlike a set, a multiset allows repeated elements.} of the $|V|$ induced subgraphs $G[V\setminus v]$, one for each $v\in V$~\citep{Ulam60collection,Kelly57congruence}.
The Reconstruction Conjecture, usually attributed to \citeauthor{Kelly57congruence} and \citeauthor{Ulam60collection}, states that every graph with at least three vertices is determined up to isomorphism by its deck, and remains open in general~\citep{Hosaka2022}.
While reconstruction asks whether the original graph is uniquely determined by its vertex-deleted subgraphs, SGSC asks whether there exists any graph satisfying prescribed counts of induced subgraphs.
Introduced by~\cite{Kratsch94gr}, the Legitimate-Deck problem asks whether a given multiset is the deck of some graph.
Like SGSC, it is an existence problem based on induced-subgraph data.
However, its input is a multiset of explicitly given graphs, whereas SGSC specifies the required multiplicities of graph isomorphism types.
Consequently, translating an instance of Legitimate-Deck into an SGSC instance requires accounting for graph isomorphism among the input graphs, so the correspondence is not a direct polynomial-time reduction.

Another related notion is that of an induced-universal graph, which contains every graph in a given family as an induced subgraph~\citep{Chung90}.
Likewise, induced-subgraph isomorphism~\citep{Garey79si} asks whether a single prescribed graph occurs as an induced subgraph, and contain-or-exclude problems~\citep{Ko91three} require certain patterns to occur while others are excluded.
Detecting a prescribed induced subgraph already encompasses classical $NP$-complete problems~\citep{Garey79si}.
Moreover, \citeauthor{Ko91three} showed that related contain-or-exclude problems can be $\Sigma_2^P$-complete; see also~\cite{Stockmeyer76,compendium02} for background on the polynomial hierarchy.
All these notions impose essentially qualitative constraints on induced subgraphs: they specify which patterns must or must not occur, but not how many times they occur.
In contrast, SGSC imposes exact multiplicity constraints.
In particular, accommodating one pattern may inevitably create additional copies of another pattern, which SGSC must explicitly rule out.
These additional counting constraints lead to fundamentally different combinatorial and algorithmic questions.
However, contain-or-exclude problems share a similarity with the exact-count condition in SGSC, which combines the existence of the required copies with the absence of any additional copies.
Finally, instances in which a prescribed multiplicity is equal to one also connect SGSC with the complexity of unique-solution problems~\citep{papadimitriou1984complexity,valiant1986np,Marx08}.

Our contributions are threefold.
First, we prove that SGSC is $\Sigma_2^P$-complete even when exactly three colours are used.
Second, we show that SGSC is $NP$-hard when the sample consists of two paths.
Third, we study a tractable subclass of these two-path instances, where one of the paths is monochromatic.
We characterise the feasible instances and give a polynomial-time dynamic programming algorithm that constructs a minimum-order solution graph.

The remainder of the paper is organised as follows.
Section~\ref{sec:model} introduces the notation and formally defines SGSC.
Section~\ref{sec:comp} establishes its computational complexity.
Finally, Section~\ref{sec:2chain} analyses the two-path tractable instances and presents the dynamic programming algorithm.

\section{Preliminaries}\label{sec:model} 

Before formally defining the problem studied in this paper, we provide key definitions. 

A coloured graph is defined as a tuple $(V, E,\sigma)$ where $V$ is the vertex set, $E$ is the edge set, and \begin{math}\sigma: V\to \mathbb{N}\end{math} is a vertex-colouring function.
This colouring is not necessarily proper, i.e., adjacent vertices may share the same colour.
More generally, a graph, subgraph, set of subgraphs, or set of vertices is said to be of colour $c\in\mathbb{N}$, or coloured in $c$, if all vertices it contains have colour $c$.
For the remainder of this paper, we assume all graphs under consideration are coloured. 

Given two graphs $G= (V,E,\sigma)$ and $G'= (V',E',\sigma')$, an \textbf{isomorphism} $f$ between $G$ and $G'$ is a bijective function from $V$ to $V'$ such that for every $u\in V$, \begin{math}\sigma'(f(u)) = \sigma(u)\end{math}, and for every $u,v \in V$, $(u, v) \in E$ if and only if $(f(u), f(v)) \in E'$.
When such a function exists, graphs $G$ and $G'$ are isomorphic, denoted $G\approx G'$.

Given two graphs $G= (V,E,\sigma)$ and $G'= (V',E',\sigma')$, the graph $G'$ is \textbf{included} in $G$ if and only if $G'$ is isomorphic to an induced subgraph of $G$.

\begin{definition}\label{def:ech}
    A \textbf{sample} is a set of connected graphs $\mathcal{F}=\{G_1, \ldots, G_{|\mathcal{F}|}\}$ such that for $i,j$ with $1\leq i, j\leq |\mathcal{F}|$ and $i\neq j$, $G_i$ is not isomorphic to $G_j$ (see Figure \ref{fig:exg}). 
    
    A sample $\mathcal{F}$ is called \textbf{without inclusion} if and only if, for $i,j$ with $1\leq i, j\leq |\mathcal{F}|$ and $i\neq j$, $G_i$ is not included in $G_j$. 
    Conversely, if it does not satisfy this condition, sample $\mathcal{F}$ is \textbf{with inclusion}.
\end{definition}

Figure \ref{fig:exg} illustrates the concepts of Definition \ref{def:ech}.

\begin{figure}[t]
    \centering
    \hfill
    \begin{subfigure}{0.23\textwidth}
        \centering
        \begin{adjustbox}{max width=0.7\linewidth, max height=0.08\textheight}
            \begin{tikzpicture}[node distance=1.5cm]
                \node[circleNode] (x1) {};
                \node[triangleNode, right of=x1, right=0cm] (x2) {~~};
                \node[triangleNode, above of=x1, right=0.6cm] (x3) {~~};
                
                \draw[-] (x1) -- (x2);
                \draw[-] (x1) -- (x3);
                \draw[-] (x2) -- (x3);
            \end{tikzpicture}
        \end{adjustbox}
        \caption{Graph $G_1$.}     
    \end{subfigure}\hfill
    \begin{subfigure}{0.13\textwidth}
        \centering
        \begin{adjustbox}{max width=\linewidth, max height=0.08\textheight}
            \begin{tikzpicture}[node distance=2cm]
                \node[draw, rectangle, fill=red!20, minimum size=.8cm] (x1) {};
                \node[triangleNode, below of=x1] (x2) {~~};
                
                \draw[-] (x1) -- (x2);
            \end{tikzpicture}
        \end{adjustbox}
        \caption{Graph $G_2$.} 
    \end{subfigure}\hfill
    \begin{subfigure}{0.3\textwidth}
        \centering
        \begin{adjustbox}{max width=\linewidth, max height=0.08\textheight}
            \begin{tikzpicture}[node distance=1.5cm]
                \node[circleNode] (x1) {};
                \node[triangleNode, left of=x1] (x2) {~~};
                \node[triangleNode, right of=x1] (x3) {~~};
                \node[diamondNode, below of=x1, below=-0.7cm] (x4) {~~\quad~};
                
                \draw[-] (x1) -- (x2);
                \draw[-] (x1) -- (x3);
                \draw[-] (x1) -- (x4);
            \end{tikzpicture}
        \end{adjustbox}
        \caption{Graph $G_3$.}
    \end{subfigure}\hfill
    \begin{subfigure}{0.23\textwidth}
        \centering
        \begin{adjustbox}{max width=\linewidth, max height=0.08\textheight}
            \begin{tikzpicture}[node distance=1.5cm]
                \node[draw, rectangle, fill=red!20, minimum size=.8cm] (x1) {};
                \node[triangleNode, right of=x1, right=0cm] (x2) {~~};
                \node[triangleNode, above of=x1, above=0cm] (x3) {~~};
                \node[circleNode, right of=x3] (x4) {};
                
                \draw[-] (x1) -- (x2);
                \draw[-] (x1) -- (x3);
                \draw[-] (x2) -- (x3);
                \draw[-] (x2) -- (x4);
                \draw[-] (x3) -- (x4);
            \end{tikzpicture}
        \end{adjustbox}
        \caption{Graph $G_4$.}
    \end{subfigure}\hfill
    \caption{The sample $\mathcal{F}=\{G_1,G_2,G_3,G_4\}$, a set of coloured graphs. Each form represents a colour. The sample is \emph{with inclusion} as $G_1$ is isomorphic to one induced subgraph of $G_4$, and $G_2$ is isomorphic to two of them.}
    \label{fig:exg}
\end{figure}

\begin{definition}\label{def:emp}
    Given a sample $\mathcal{F}$, a \textbf{fingerprint} of $\mathcal{F}$ is a vector $P$ of positive integers of dimension $|\mathcal{F}|$. 
    
    The $i^{th}$ component $P_i$ of the vector $P$ is the \textbf{number of copies} of the graph $G_i$. 
    A fingerprint is \textbf{unary} if and only if $P_i=1$ for all $1\leq i\leq |\mathcal{F}|$. 
\end{definition}

Given a graph $G=(V, E,\sigma)$, the multiset of induced subgraphs of $G$ is denoted as $I^G$, and the multiset of induced subgraphs of $G$ with $1\leq k \leq |V|$ vertices is denoted as $I^G_k$. 
Given a graph $H$ and a sample $\mathcal{F}$, the \textbf{cover $C$} of $H$ by $\mathcal{F}$ is a subset of $I^H$ defined by 
\begin{displaymath}
    C = \{h\in I^H: \exists 1\leq i\leq |\mathcal{F}| \text{ such that }h\approx G_i\in \mathcal{F} \}.
\end{displaymath}

Let $H$ be a graph, $\mathcal{F}$ be a sample, and $C$ be the cover of $H$ by $\mathcal{F}$.
An edge $e \in E(H)$ is \textbf{covered by $\mathcal{F}$} if it belongs to an induced subgraph $h \in C$.
A graph $H$ is \textbf{edge-covered by $\mathcal{F}$} if every edge of $H$ is covered by $\mathcal{F}$.

Given a graph $H$ and a sample \begin{math}\mathcal{F}=\{G_1, \ldots, G_{|\mathcal{F}|}\}\end{math}, the \textbf{signature} of $\mathcal{F}$ in $H$ is the vector 
\begin{displaymath}
    S^H = (S^H_1, \ldots, S^H_{|\mathcal{F}|})\qquad\text{with}\qquad
    S^H_i =\left|\{h\in I^H: h \approx G_i \}\right|.
\end{displaymath}

Given a graph $H$, a sample $\mathcal{F}$, and a fingerprint $P$ of $\mathcal{F}$, the \textbf{discrepancy} $\epsilon$ between $H$ and $P$ is
\begin{displaymath}
    \epsilon(H, P)=\sum\limits_{1\leq i \leq |\mathcal{F}|} \left| {P_i-S^H_i}\right|,
\end{displaymath}
where $S^H$ is the signature of $\mathcal{F}$ in $H$. 

\begin{figure}[ht]
    \centering
    \begin{subfigure}{0.4\textwidth}
        \centering
        \begin{adjustbox}{max width=\linewidth, max height=0.1\textheight}
            \begin{tikzpicture}[node distance=1.5 cm]
                \node[triangleNode] (p1) {~~~} ;
                \node[triangleNode, above of=p1] (p2) {~~~} ;
                \node[triangleNode, below of=p1] (p3) {~~~} ;
                \node[draw, rectangle, fill=red!20, minimum size=.8 cm, left of=p1] (p4) {} ;
                \node[draw, rectangle, fill=red!20, minimum size=.8 cm, left of=p2] (p5) {} ;
                \node[circleNode, right of=p1] (p6) {} ;
                \node[diamondNode, right of=p6] (p7) {~~\quad~} ;
                
                \draw[-] (p1) -- (p2) ;
                \draw[-] (p1) -- (p3) ;
                \draw[-] (p2) -- (p4) ;
                \draw[-] (p3) -- (p4) ;
                \draw[-] (p2) -- (p5) ;
                \draw[-] (p1) -- (p6) ;
                \draw[-] (p2) -- (p6) ;
                \draw[-] (p3) -- (p6) ;
                \draw[-] (p6) -- (p7) ;
            \end{tikzpicture}
        \end{adjustbox}
        \caption{Graph $H$.}
    \end{subfigure}
    \begin{subfigure}{0.45\textwidth}
        \centering
        \begin{adjustbox}{max width=\linewidth, max height=0.12\textheight}
            \begin{tikzpicture}[node distance=1.5 cm]
                \node[triangleNode] (p1) {~~~} ;
                \node[triangleNode, above of=p1] (p2) {~~~} ;
                \node[triangleNode, below of=p1] (p3) {~~~} ;
                \node[draw, rectangle, fill=red!20, minimum size=.8 cm, left of=p2, left=0cm] (p4) {} ;
                \node[circleNode, right of=p1] (p6) {} ;
                \node[diamondNode, right of=p6] (p7) {~~\quad~} ;                
                
                \draw[-] (p1) -- (p2) ;
                \draw[-] (p1) -- (p3) ;
                \draw[-] (p1) -- (p4) ;
                \draw[-] (p2) -- (p4) ;
                \draw[-] (p3) -- (p4) ;
                \draw[-] (p1) -- (p6) ;
                \draw[-] (p2) -- (p6) ;
                \draw[-] (p3) -- (p6) ;
                \draw[-] (p6) -- (p7) ;
                \draw[-] (p4) to[out=90, in=90] (p6) ;
            \end{tikzpicture}
        \end{adjustbox}
        \caption{Graph $H'$.}
    \end{subfigure}
    \caption{Two graphs having signatures of $\mathcal{F}$ from Fig.\ref{fig:exg} equal to the vector $(2,3,1,0)$.}
    \label{fig:exH}
\end{figure}

Finally, given sample $\mathcal{F}$ shown in Figure~\ref{fig:exg}, graphs $H$ and $H'$ in Figure~\ref{fig:exH} have signatures of $\mathcal{F}$ equal to the vector $(2,3,1,0)$. 
The discrepancy $\epsilon$ between $H$ and a fingerprint $P=(1,3,2,0)$ of $\mathcal{F}$ is two, as it is for $H'$. 
What differentiates $H$ and $H'$ is that $H$ is edge-covered by $\mathcal{F}$ while the curved edge in $H'$ is not covered by $\mathcal{F}$.

The decision problem considered is formally defined hereafter. 

\begin{problem}[Supergraph-Sample-Coverage (SGSC)] \label{pb:SGSC}
    \item []\textbf{ Given :} a sample $\mathcal{F}$ of coloured graphs, a fingerprint $P$ and an integer $\alpha>0$.
    \item []\textbf{ Question :}  Does a connected graph $H$ exist, with at most $\alpha$ vertices, such that $\epsilon(H, P)=0$ and is edge-covered by $\mathcal{F}$?
\end{problem}

The optimisation problem associated with this decision problem is the minimisation of $\alpha$, which consists in determining a minimum-order graph satisfying $\epsilon(H, P)=0$ and edge-covered by $\mathcal{F}$. 
This latter condition is referred to as the edge-covering constraint.
In the following, we assume that the values of the fingerprint $P$ are polynomially bounded with respect to the size of the input sample $\mathcal{F}$. This assumption is natural in the considered applications.

For a given instance of SGSC, a connected graph $H$ is said to be a solution graph if $\epsilon(H, P)=0$ and $H$ is edge-covered by $\mathcal{F}$, independently of $\alpha$ and the order of $H$. 
We allow zero entries in the fingerprint, i.e., $P_i=0$ indicates that no induced subgraph isomorphic to $G_i$ may appear in a solution graph.
We define optimal graphs for an instance of SGSC as solution graphs with minimum order. 
For all instances of SGSC, if there is an optimal graph with at most $\alpha$ vertices, then it is a valid solution graph. 
On the other hand, if an optimal graph has more than $\alpha$ vertices, then there is no valid solution graph for this instance. 

For the following section, we assume familiarity with standard complexity classes such as $P$, $NP$, and co-$NP$. 
We also use the second level of the polynomial hierarchy, denoted $\Sigma_2^P$, consisting of decision problems solvable by an $NP$ machine with access to an $NP$ oracle \citep{Stockmeyer76,compendium02}. 
A problem is said to be $\Sigma_2^P$-complete if it belongs to $\Sigma_2^P$ and every problem in $\Sigma_2^P$ can be reduced to it in polynomial time. 
Notably, this class includes the unique-solution SAT problem, which was studied by \cite{papadimitriou1984complexity}, and \cite{valiant1986np}.

\section{The complexity of SGSC}\label{sec:comp}

As discussed in Section \ref{sec:intro}, SGSC differs significantly from related problems. 
Induced-universal graph problems require every graph in a given family to appear as an induced subgraph, whereas SGSC imposes exact counting constraints on induced subgraphs, leading to fundamentally different questions. 
\cite{Kratsch94gr} showed that \textsc{Legitimate-Deck} (LD) is a problem at least as hard as the graph isomorphism problem (GI); however, no precise complexity classification is known for LD. 
Moreover, LD considers a given multiset of subgraphs without resolving graph isomorphism between them, which prevents a direct polynomial-time reduction to SGSC.
These structural and algorithmic differences motivate a dedicated complexity analysis of SGSC. 

We now prove that SGSC is $\Sigma_2^P$-complete. 

\begin{theorem}\label{the:SGSCSigComp}
    SGSC is $\Sigma_2^P$-Complete on graphs with at least $3$ colours.
\end{theorem}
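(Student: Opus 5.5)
The proof splits into membership in $\Sigma_2^P$ and $\Sigma_2^P$-hardness. For membership, I first bound the size of a candidate solution. If $H$ is a connected solution graph with at least two vertices, every vertex lies on an edge. By the edge-covering constraint that edge lies in a copy of some $G_i$. Hence
\[
|V(H)|\le \sum_{i} P_i\,|V(G_i)|,
\]
which is polynomial because the fingerprint values are polynomially bounded. We may therefore replace $\alpha$ by $\min(\alpha,\sum_i P_i|V(G_i)|)$. The $\Sigma_2^P$ procedure then works as follows.

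\begin{enumerate}
\item \emph{Existential phase.} Guess $H$ on at most that many vertices. For every $i$, guess $P_i$ pairwise distinct vertex subsets of $H$ together with isomorphisms to $G_i$. For every edge of $H$, guess one induced copy of some $G_i$ containing it, together with its isomorphism. All of this is checkable in polynomial time, which gives connectivity, $S^H_i\ge P_i$ and edge-coverage.
\item \emph{Universal phase.} For every $i$ and every choice of $P_i+1$ pairwise distinct vertex subsets with maps to $G_i$, at least one map is not an isomorphism onto an induced subgraph. This is a co-$NP$ condition and yields $S^H_i\le P_i$.
\end{enumerate}

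Together the two phases give $\epsilon(H,P)=0$, so the whole check is an $\exists\forall$ statement with polynomial witnesses and SGSC lies in $\Sigma_2^P$.

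For hardness, I would reduce from $\exists\forall$-3SAT: decide whether $\exists x\,\forall y\,\varphi(x,y)$, with $\varphi$ in 3-CNF. Equivalently, decide whether some $x$ makes $\neg\varphi(x,\cdot)$ unsatisfiable. The intended mechanism has two parts. Unary fingerprint entries force a rigid ``skeleton'' of $H$. A single pattern $G_0$ with $P_0=0$ plays the role of the universal quantifier, since an induced copy of $G_0$ in $H$ should correspond exactly to a falsifying $y$-assignment. The colours are used as follows.
\begin{itemize}
\item Colour $1$ builds the backbone. It consists of a distinguished ``root'' path whose uniqueness is enforced by $P=1$ on a long rigid coloured pattern.
\item Colour $2$ marks the $x$-variable gadgets. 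For each $x_j$, the sample contains two alternative patterns encoding $x_j$ true or false, and the counts force exactly one of them to attach to the backbone. The existential choice of which one is the guessed assignment.
\item Colour $3$ marks $y$-literal vertices and clause vertices. Each clause is joined to its literal vertices, so that the clause gadget edges are covered by small prescribed patterns with exact counts.
\end{itemize}
The forbidden pattern $G_0$ is built as a rooted subgraph selecting one literal vertex per $y$-variable and, for some clause, requiring that all three of its literals be falsified. A copy of $G_0$ must use the unique backbone, which anchors it to the chosen $x$-gadgets, so copies of $G_0$ correspond to pairs ($y$-assignment, falsified clause) under the chosen $x$. Then $P_0=0$ is satisfiable exactly when $\forall y\,\varphi(x,y)$ holds.

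The step I expect to be the main obstacle is making the counting rigid. I must show that every solution graph is, up to the $x$-choice, exactly the intended construction, and that the gadgets do not accidentally create extra copies of the prescribed patterns, or copies of $G_0$ that mix gadgets in unintended ways. My first tactic is to make all non-$G_0$ patterns contain long colour-$1$ paths of distinct lengths attached at specific positions. Induced copies can then only sit at their intended locations, and I will verify this by a case analysis on where the colour-$1$ vertices of a copy can map. The second part of this check is to confirm that the edge-covering constraint forbids adding any edge outside the gadgets.
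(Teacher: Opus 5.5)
Your membership argument is correct, and bounding $|V(H)|\le\sum_iP_i|V(G_i)|$ via edge coverage is slightly more careful than the paper. The hardness half, however, fails at its starting point. The problem you reduce from, $\exists x\,\forall y\,\varphi(x,y)$ with $\varphi$ in 3-CNF, is in $NP$ and is not $\Sigma_2^P$-hard. The reason is that $\forall y$ distributes over the conjunction, and for fixed $x$ a single clause holds for all $y$ iff one of its $x$-literals is true or its $y$-literals contain a complementary pair, which is checkable in polynomial time. Your pattern $G_0$ mirrors this: a copy witnesses a $y$-assignment falsifying \emph{one} clause, which is a polynomial-time condition once $x$ is fixed. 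So even a fully rigid version of your construction would only give $NP$-hardness. The $\Sigma_2^P$-complete form has a 3-DNF matrix, i.e.\ it asks for an $x$ making a 3-CNF $\psi(x,\cdot)$ unsatisfiable. The counted structure must then certify a $y$ satisfying \emph{all} clauses at once, which needs a pattern meeting one gadget per clause. The paper's clique achieves this with one vertex $s_{C,\tau}$ per clause.

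Beyond this, you give no actual construction, and the step you flag as the main obstacle is essentially the whole proof. The one mechanism you do describe cannot work as stated. Two non-isomorphic patterns for ``$x_j$ true'' and ``$x_j$ false'' have multiplicities fixed by the instance, so counts cannot let the solution choose between them. The existential choice has to be the freedom in \emph{how} a single pattern of fixed multiplicity is realised.

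The paper's key idea makes rigidity trivial. After reducing $\exists\exists!$3-SAT to $\exists\exists!$ Clique, it builds the following instance:
\begin{itemize}
\item $H_1$ is the entire coloured graph ($V_x$ colour $1$, $V_y$ colour $2$), with multiplicity $1$.
\item $H_2$ is a $(k+l)$-clique ($k$ vertices of colour $1$, $l$ of colour $2$) plus a colour-$3$ vertex adjacent exactly to the colour-$1$ ones, also with multiplicity $1$.
\item The order bound is $\alpha=|V_x|+|V_y|+1$.
\end{itemize}
Any solution is then $H_1$ plus one colour-$3$ vertex $w$. Edge coverage forces $N(w)$ to equal the colour-$1$ part $W_x$ of the unique $H_2$-copy, and this is the existential choice. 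Multiplicity exactly $1$ of $H_2$ then encodes both existence and uniqueness of $W_y$.

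Your forbidden-pattern idea can be completed in the same style once you switch to the correct source problem. Give the paper's $H_2$ (with $k=n$, $l=p+m$) multiplicity $0$. Add a choice pattern, $K_n$ in colour $1$ plus the colour-$3$ vertex, with multiplicity $1$. Keep $\alpha=|V(H_1)|+1$.
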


To prove Theorem \ref{the:SGSCSigComp}, we need various intermediate results. 
We first show that SGSC belongs to $\Sigma_2^P$. 
We then introduce an intermediate problem, prove that it is $\Sigma_2^P$-hard via a reduction from $\exists\exists!$ 3-SAT, and finally establish the $\Sigma_2^P$-completeness of SGSC by reducing this intermediate problem to SGSC.
Beforehand, we present two reductions, one from the $k$-Clique problem and one from its complement, to prove that SGSC is $NP$-hard and co-$NP$-hard, respectively. 
These reductions introduce the main ideas used in the proofs of this section.

Given a graph $G$ and an integer $k$, the $k$-Clique problem asks whether there exists a clique of size $k$ in $G$. 
This problem and its complement are, respectively, $NP$-complete and co-$NP$-complete \cite{Garey79si}.

We consider a graph $G$ monochromatic of colour $c\in\mathbb{N}$ and one $(m+1)$-clique $K_{m+1}$ that contained $m$ vertices of colour $c$ and one vertex of colour $c'\in\mathbb{N}$ with $c'\neq c$.
Given an instance of SGSC with a sample $\mathcal{F} = \{G, K_{m+1}\}$, a fingerprint $P=(1,1)$ and $\alpha=|G|+1$, this instance has a valid solution graph if and only if $G$ contains at least one $m$-clique.
Therefore, SGSC is at least as hard as the $k$-Clique problem and is $NP$-hard.

We now consider a graph $G$ monochromatic of colour $c\in\mathbb{N}$ and one $m$-clique $K_m$ with all vertices of colour $c$. 
We define the graph $G'$ as the graph that includes $G$ and $K_m$ plus one vertex $u$ of colour $c'\in\mathbb{N}$ with $c'\neq c$ which is adjacent to two arbitrary vertices in $G$ and $K_m$, respectively.
Given an instance of SGSC with a sample $\mathcal{F} = \{G', K_m\}$, a fingerprint $P=(1,1)$ and $\alpha=|G|+m+1$, this instance has a valid solution graph if and only if $G$ does not contain an $m$-clique.
Therefore, SGSC is at least as hard as the co-$k$-Clique problem and is co-$NP$-hard.

We now return to the proof of Theorem~\ref{the:SGSCSigComp}. 
The reductions presented above introduce several ideas that will later be used in the $\Sigma_2^P$-hardness construction. 
We begin by proving that SGSC belongs to $\Sigma_2^P$.

\begin{lemma}\label{lem:SGSCSig}
    SGSC is contained in $\Sigma_2^P$.
\end{lemma}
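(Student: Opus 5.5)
We place SGSC in $\Sigma_2^P$ by exhibiting an NP machine with an NP oracle, or equivalently a $\exists\forall$ formula with polynomially bounded quantifiers and a polynomial-time matrix. The existential guess is a graph $H$ on at most $\alpha$ vertices, given by its adjacency matrix and colouring. Without loss of generality we may assume $\alpha \le \sum_i P_i |V(G_i)|$. Indeed, in a connected edge-covered solution graph with at least one edge, every vertex lies in some copy of some $G_i$. If instead $H$ is a single vertex, then $H$ is itself a copy of a single-vertex $G_i$. So every solution has order at most $\sum_i P_i|V(G_i)|$. Since the $P_i$ are assumed polynomially bounded, this bound is polynomial, and any larger $\alpha$ can be truncated to it. Hence the certificate $H$ has polynomial size.

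Given $H$, we must verify four things: connectivity, that $S^H_i = P_i$ for every $i$, and edge coverage. Connectivity is checked in polynomial time. The key point for the counts is that the exact equality $S^H_i=P_i$ splits into a lower bound $S^H_i\ge P_i$ and an upper bound $S^H_i\le P_i$.

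The lower bounds, together with edge coverage, are existential statements, so we fold them into the first guess. We additionally guess, for each $i$, $P_i$ pairwise distinct vertex subsets $U_{i,1},\dots,U_{i,P_i}$ of $V(H)$, each with an explicit bijection to $V(G_i)$. In polynomial time we check that each bijection is a colour-preserving isomorphism between $H[U_{i,j}]$ and $G_i$. Here we count induced subgraphs as vertex subsets, so distinct copies mean distinct subsets. We also check that every edge of $H$ lies inside some guessed $U_{i,j}$. This is valid: any covering copy can be taken among the guessed ones, because once the count equals $P_i$, the guessed copies are all the copies.

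The upper bound $S^H_i\le P_i$ is a universal statement and is handled by the $\forall$ quantifier. For every $i$ and every subset $U\subseteq V(H)$ of size $|V(G_i)|$ not among the guessed $U_{i,j}$, and every bijection $f:U\to V(G_i)$, we require that $f$ is not an isomorphism between $H[U]$ and $G_i$. This is a co-NP check, since a counterexample consists of a polynomial-size $(i,U,f)$ verified in polynomial time. Equivalently, the NP oracle is queried on ``is there an extra copy?'' and we accept iff the answer is no.

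The whole predicate is therefore $\exists (H,\{U_{i,j},\phi_{i,j}\})\,\forall (i,U,f)\;R$ with $R$ polynomial-time, which places SGSC in $\Sigma_2^P$. The main obstacle is keeping the certificate polynomial. This rests on the bound on $|V(H)|$ and on the polynomially bounded fingerprint. Without the fingerprint assumption, we would need to argue differently, for example by counting via the oracle, and the listing of $P_i$ copies would not be polynomial.
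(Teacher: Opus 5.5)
Your proof is correct and follows the paper's route: guess $H$ together with $P_i$ distinct copies of each $G_i$, which also witness edge coverage, and discharge the upper bound $S^H_i\le P_i$ by a co-NP check. Guessing explicit bijections and bounding $|V(H)|$ by $\sum_i P_i|V(G_i)|$ only make explicit what the paper leaves implicit. One small slip: the paper allows zero entries in $P$, so a single-vertex solution need not be a copy of any $G_i$ (for example when all $P_i=0$); truncate $\alpha$ to $\max\{1,\sum_i P_i|V(G_i)|\}$ instead.
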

\begin{proof}
    Let $\mathcal{F} = \{G_1, \ldots, G_{|\mathcal{F}|}\}$ be a set of graphs, $P$ a vector of $|\mathcal{F}|$ integers and $\alpha$ a positive integer. We consider a non-deterministic algorithm that constructs a graph $H$ of order at most $\alpha$. 

    Given two graphs $G$ and $H$ and an integer $k$, consider the problem of deciding whether $H$ contains $k$ distinct induced subgraphs isomorphic to $G$. This problem is $NP$-complete since it generalises the induced subgraph isomorphism problem. 
    Given access to an oracle solving the previous problem, we show that one can verify the number of copies in $H$ for each graph $G_i$ in $\mathcal{F}$. With an additional verification in polynomial time that every edge is covered, this allows us to verify whether $H$ is a valid solution graph for the given instance.
    
    Counting the copies of a graph $G_i$ amounts to ensuring that $H$ contains at least $k$ induced subgraphs isomorphic to $G_i$ and at most $k$ such subgraphs. 
    We analyse the complexity of these two tasks.
    To confirm that $H$ has at least $k$ induced subgraphs isomorphic to $G_i$, we may directly use the oracle; however, we need to verify that every edge is covered, so we will proceed differently.
    One can guess non-deterministically $k$ distinct vertex subsets, denoted $V_1, \ldots, V_k$, and verify for each subset $V_j$ that the induced subgraph is isomorphic to $G_i$. 
    These $k$ graph isomorphism tests can be verified efficiently using the oracle.
    During these tests, each edge occurring in each subgraph can be labelled as covered. Thus, after dealing with graphs $G_i$ for all $i$, we can check in polynomial time that each edge in $H$ is covered by at least one copy of one such graph $G_i$.
    Conversely, verifying that $H$ contains at most $k$ such subgraphs amounts to checking that no collection of $k+1$ distinct vertex subsets induces subgraphs isomorphic to $G_i$. This corresponds to a co-$NP$ condition with access to the oracle. 
    The case $k=0$ corresponds to verifying that no induced subgraph isomorphic to $G_i$ exists in $H$, which is naturally captured by the co-$NP$ part of the argument.

    Combining these two verifications, one can check, with access to an $NP$ oracle, whether $H$ contains exactly $k$ induced subgraphs isomorphic to $G_i$, while labelling each edge of $H$ covered by one such subgraph. 
    Consequently, one can verify whether $H$ contains exactly $P_i$ induced subgraphs isomorphic to $G_i$ for each $1\leq i\leq|\mathcal{F}|$ and that every edge of $H$ is covered by $\mathcal{F}$ with an $NP$ oracle.
    Since the graph $H$ can be guessed non-deterministically and subsequently verified with access to an $NP$ oracle, SGSC belongs to $\Sigma_2^P$.
\end{proof}

We now introduce the satisfiability problem $\exists\exists!$3-SAT and use the graph problem $\exists\exists!$ Clique as an intermediate problem. 
The reduction below is designed to preserve, for each fixed assignment of the existential variables, the exact number of assignments of the variables subject to the uniqueness quantifier.

Recall that a 3-CNF formula is a conjunction of clauses, each containing exactly three literals.
In what follows, $\uplus$ stands for the union of two non-intersecting sets; the two non-intersecting sets are a partition of this union.

\begin{problem}[$\exists\exists!$ 3-SAT Problem] 
    \item []\textbf{ Given :} a Boolean formula $\varphi$ in 3-CNF form with variables $x_1,\ldots,x_n,y_1,\ldots,y_p$ and $m$ clauses.
    \item []\textbf{ Question :} Does there exist an assignment to $x_1,\ldots,x_n$ such that there exists a unique assignment to $y_1,\ldots,y_p$ for which the formula $\varphi(x_1,\ldots,x_n,y_1,\ldots,y_p)$ is satisfied?
\end{problem}

\begin{problem}[$\exists\exists!$ Clique Problem] 
    \item []\textbf{ Given :} a graph $G = (V_x\uplus V_y,E)$ and two positive integers $k$ and $l$.
    \item []\textbf{ Question :} Does there exist a subset $W_x\subseteq V_x$ of size $k$ such that there exists a unique subset $W_y\subseteq V_y$ of size $l$ for which $W_x\uplus W_y$ induces a $(k+l)$-clique in $G$?
\end{problem}

By Theorem 5 of~\cite{Marx08}, $\exists\exists!$3-SAT is $\Sigma_2^P$-complete. 
We next give a reduction that preserves, for every fixed assignment of the variables in $x$, the number of satisfying assignments of the variables in $y$.

\begin{lemma}\label{lem:exists-unique-clique-hard}
The $\exists\exists!$ Clique problem is $\Sigma_2^P$-hard.
\end{lemma}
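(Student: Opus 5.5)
We reduce from $\exists\exists!$3-SAT, which is $\Sigma_2^P$-complete by Theorem~5 of~\cite{Marx08}. We adapt the classical Karp reduction from 3-SAT to Clique so that, for every fixed assignment of $x$, the number of satisfying assignments of $y$ equals exactly the number of admissible sets $W_y$. Let $\varphi$ have variables $x_1,\ldots,x_n,y_1,\ldots,y_p$ and clauses $C_1,\ldots,C_m$.

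\textbf{Construction.}
\begin{itemize}
\item We put in $V_x$ two vertices $x_i^{T},x_i^{F}$ for each $i\leq n$, and set $k=n$.
\item We put in $V_y$ two vertices $y_j^{T},y_j^{F}$ for each $j\leq p$.
\item For each clause $C_r$ we add to $V_y$ one vertex $c_{r,\tau}$ for each of the (at most $7$) partial assignments $\tau$ of the variables of $C_r$ that satisfy $C_r$.
\item We set $l=p+m$.
\end{itemize}
Two variable vertices are adjacent iff they belong to different variables. A variable vertex $z^{b}$ is adjacent to $c_{r,\tau}$ iff $z$ does not occur in $C_r$, or $z$ occurs and $\tau(z)=b$. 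Two clause vertices $c_{r,\tau}$ and $c_{s,\tau'}$ are adjacent iff $r\neq s$ and $\tau,\tau'$ agree on their common variables. The construction is polynomial.

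\textbf{Key steps.}
\begin{enumerate}
\item A $(k+l)$-clique contains at most one vertex per variable pair, since $x_i^T x_i^F$ and $y_j^T y_j^F$ are non-edges, and at most one vertex per clause group, since vertices of the same clause are non-adjacent.
\item Since $|W_x|=n$ with $W_x\subseteq V_x$, $W_x$ picks exactly one vertex per $x$-variable, i.e.\ it is an assignment $a$ of $x$. Conversely, every assignment of $x$ yields such a set.
\item Fix $W_x$ encoding $a$. Any admissible $W_y$ has size $p+m$ and at most one vertex per $y$-pair and per clause group, so it picks exactly one of each. It therefore encodes an assignment $b$ of $y$ together with one satisfying local assignment $\tau_r$ per clause.
\item Adjacency to variable vertices forces $\tau_r$ to equal the restriction of $(a,b)$ to the variables of $C_r$. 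Hence $\tau_r$ is uniquely determined by $(a,b)$, and it exists iff $(a,b)$ satisfies $C_r$.
\item Conversely, if $(a,b)$ satisfies $\varphi$, these choices are pairwise consistent, so they form a clique.
\end{enumerate}
Thus, for fixed $W_x$, the sets $W_y$ are in bijection with the assignments $b$ such that $\varphi(a,b)$ holds. Uniqueness is therefore preserved, and the instance $(G,k,l)$ is a yes-instance iff $\varphi$ is.

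\textbf{Main obstacle.} The delicate point is the exact bijection rather than mere existence. In Karp's reduction one chooses a true literal per clause, and several choices can witness the same assignment, which would destroy uniqueness. Using clause vertices indexed by full satisfying local assignments, which the variable vertices force, removes this multiplicity. We must also check that $W_x$ cannot contain clause vertices, which holds because clause vertices lie in $V_y$. Finally, the cardinality counting in step~3 must force exactly one vertex per group.
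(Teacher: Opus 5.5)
Your proposal is correct and is essentially the paper's reduction: two vertices per variable, one clause vertex per satisfying local assignment of each clause, $k=n$, $l=p+m$, and, for each fixed $W_x$, a bijection between the satisfying $y$-assignments and the sets $W_y$ of size $p+m$ completing $W_x$ to a clique. The only difference is that you join clause vertices of different clauses only when their local assignments agree, whereas the paper joins all such pairs; this changes nothing, because, as your step 4 shows, agreement is already forced by adjacency to the selected variable vertices.
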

\begin{proof}
    Let $\varphi$ be an instance of $\exists\exists!$ 3-SAT with variables $x=(x_1,\ldots,x_n)$ and $y=(y_1,\ldots,y_p)$ and clause set $\mathcal C=\{C_1,\ldots,C_m\}$. 
    We construct an instance $(H(\varphi),V_x,V_y,k,l)$ of $\exists\exists!$ Clique.

    For every variable $x_i$, create two vertices $x_i^0$ and $x_i^1$, and let \begin{math}X_i=\{x_i^0,x_i^1\}.\end{math}
    The vertex $x_i^b$ represents the assignment $x_i=b$. 
    Set $k=n$ and 
    \begin{displaymath}
        V_x=\biguplus_{i=1}^n X_i
    \end{displaymath}

    Similarly, for every variable $y_j$, create two vertices $y_j^0$ and $y_j^1$, let $Y_j=\{y_j^0,y_j^1\}$, and interpret $y_j^b$ as the assignment $y_j=b$.

    For each clause $C\in\mathcal C$, let $\operatorname{Var}(C)$ be the set of distinct variables occurring in $C$. 
    For every assignment
    \begin{displaymath}
        \tau:\operatorname{Var}(C)\longrightarrow\{0,1\}
    \end{displaymath}
    that satisfies $C$, create a vertex $s_{C,\tau}$. 
    Let
    \begin{displaymath}
        S_C=\{s_{C,\tau}:\tau\text{ satisfies }C\}.
    \end{displaymath}
    Since $|\operatorname{Var}(C)|\leq3$, every set $S_C$ has at most eight  vertices. 
    Finally, set $l=p+m$ and 
    \begin{displaymath}
        V_y=
        \left(\biguplus_{j=1}^pY_j\right)
        \uplus
        \left(\biguplus_{C\in\mathcal C}S_C\right).
    \end{displaymath}

    It remains to define the edges of $H(\varphi)$. 
    Each of the sets $X_i$, $Y_j$, and $S_C$ is an independent set. 
    Add all possible edges between vertices representing distinct variables; that is, between $X_i$ and $X_{i'}$ for $i\neq i'$, between $Y_j$ and $Y_{j'}$ for $j\neq j'$, and between every $X_i$ and every $Y_j$. 
    Also add all possible edges between $S_C$ and $S_{C'}$ whenever $C\neq C'$. 
    Finally, join $s_{C,\tau}$ to a variable vertex $z^b$, where $z\in\{x_1,\ldots,x_n,y_1,\ldots,y_p\}$, if and only if either $z\notin\operatorname{Var}(C)$ or $\tau(z)=b$.
    These are all the edges of $H(\varphi)$.

    We prove that the reduction preserves the required quantifiers. 
    First, every clique $W_x\subseteq V_x$ of size $n$ contains exactly one vertex from every independent set $X_i$. 
    Hence, $W_x$ defines a unique complete assignment $a_x$ of the variables in $x$, and every such assignment defines exactly one clique $W_x$ of size $n$.

    Fix such a clique $W_x$ and its assignment $a_x$. We claim that the satisfying assignments $a_y$ of $y$ for $\varphi(a_x,y)$ are in bijection with the subsets $W_y\subseteq V_y$ of size $p+m$ such that $W_x\uplus W_y$ induces a clique.

    Suppose first that $a_y$ satisfies $\varphi(a_x,y)$. 
    Select the vertex $y_j^{a_y(y_j)}$ from each $Y_j$. 
    For every clause $C$, let
    \begin{displaymath}
        \tau_C=(a_x\cup a_y)|_{\operatorname{Var}(C)}.
    \end{displaymath}
    Since $a_x\cup a_y$ satisfies $\varphi$, the assignment $\tau_C$ satisfies $C$, and therefore $s_{C,\tau_C}\in S_C$. 
    Define
    \begin{displaymath}
        W_y(a_y)=
        \{y_j^{a_y(y_j)}:1\leq j\leq p\}
        \cup
        \{s_{C,\tau_C}:C\in\mathcal C\}.
    \end{displaymath}
    This set has size $p+m$. By construction, all its vertices are pairwise adjacent and are adjacent to every vertex of $W_x$. 
    Thus, $W_x\uplus W_y(a_y)$ induces an $(n+p+m)$-clique.

    Conversely, let $W_y\subseteq V_y$ have size $p+m$ and suppose that $W_x\uplus W_y$ induces a clique. The set $V_y$ is partitioned into the $p+m$ independent sets \begin{math}Y_1,\ldots,Y_p,S_{C_1},\ldots,S_{C_m}.\end{math}
    Consequently, $W_y$ contains exactly one vertex from each of these sets.
    Its choices in $Y_1,\ldots,Y_p$ define a complete assignment $a_y$ of the variables in $y$. 
    For every clause $C$, write $s_{C,\tau}$ for the unique vertex of $W_y\cap S_C$. 
    Since this vertex is adjacent to all the variable vertices selected by $W_x\uplus W_y$, the definition of the edges gives
    \begin{displaymath}
    \tau=(a_x\cup a_y)|_{\operatorname{Var}(C)}.
    \end{displaymath}
    But $s_{C,\tau}\in S_C$ only if $\tau$ satisfies $C$. 
    Therefore, $a_x\cup a_y$ satisfies every clause of $\varphi$.

    The two constructions are inverse to each other. 
    Indeed, once $a_x$ and $a_y$ are fixed, the selected vertex in every $Y_j$ is fixed, and the only vertex of $S_C$ adjacent to all selected variable vertices is $s_{C,(a_x\cup a_y)|_{\operatorname{Var}(C)}}$. 
    Hence,
    \begin{displaymath}
        \bigl|\{a_y:\varphi(a_x,a_y)=1\}\bigr|
        =
        \bigl|\{W_y\subseteq V_y:|W_y|=p+m,
                \ H(\varphi)[W_x\uplus W_y]\text{ is a clique}\}\bigr|.
    \end{displaymath}
    It follows, in particular, that there exists exactly one satisfying assignment $a_y$ for the fixed assignment $a_x$ if and only if there exists exactly one corresponding subset $W_y$ for the fixed clique $W_x$. 
    Since assignments of $x$ are themselves in bijection with the cliques $W_x\subseteq V_x$ of size $n$, we obtain
    \begin{displaymath}
        \exists x\;\exists!y\;\varphi(x,y)
        \quad\Longleftrightarrow\quad
        \exists W_x\subseteq V_x\;\exists!W_y\subseteq V_y:
        \begin{array}{l}
            |W_x|=n,\ |W_y|=p+m,\\
            H(\varphi)[W_x\uplus W_y]\text{ is a clique}.
        \end{array}
    \end{displaymath}

    The graph $H(\varphi)$ has at most $2n+2p+8m$ vertices. 
    Its vertex set and edge set can therefore be constructed in polynomial time. 
    Thus, the mapping is a polynomial-time reduction from $\exists\exists!$ 3-SAT to $\exists\exists!$ Clique, and the latter problem is $\Sigma_2^P$-hard.
\end{proof}

We may now use $\exists\exists!$ Clique as the intermediate problem in the reduction to SGSC which proves Theorem \ref{the:SGSCSigComp}.

\begin{proof}{Proof of Theorem \ref{the:SGSCSigComp}}

    Consider an instance of $\exists\exists!$ Clique with a graph $G = (V_x\uplus V_y,E)$ not coloured and two positive integers $k$ and $l$.
    
    We describe a reduction from the problem $\exists\exists!$ Clique to SGSC as follows. 
    We construct a coloured graph $H_1 = (V_x\uplus V_y,E,\sigma_G)$ such that 
    \begin{displaymath}
        \forall v\in V_x, \sigma_G(v)=1\qquad \text{and}\qquad \forall v\in V_y, \sigma_G(v)=2. 
    \end{displaymath}
    Then we construct a coloured graph $H_2$ containing a $(k+l)$-clique of $k$ vertices of colour $1$ and $l$ vertices of colour $2$, and one vertex $u$ of colour $3$ connected to every vertex of colour $1$.
    The construction of $H_1$ and $H_2$ takes $O(|V_x|+|V_y|)$ and $O(k+l)$ times, respectively.
    The instance of SGSC is the sample $\mathcal{F}=\{H_1, H_2\}$, the fingerprint $P=(1,1)$ and $\alpha=|V_x|+|V_y|+1$.
    
    Assume that there exists a subset $W_x\subseteq V_x$ of size $k$ such that there is a unique subset $W_y\subseteq V_y$ of size $l$ for which $W_x\uplus W_y$ induces a $(k+l)$-clique in $G$.
    We construct a graph $H$ including the graph $H_1$, and one vertex $v$ of colour $3$ adjacent to every vertex in $W_x$. 
    
    Since $H_1$ contains no vertex of colour $3$, the subgraph $H_1$ included in $H$ is the unique subgraph of $H$ isomorphic to $H_1$.
    Since all vertices in $V_x$ and $V_y$ are coloured by $\sigma_G$ in colour $1$ and $2$, respectively, and $W_x\uplus W_y$ induces a $(k+l)$-clique in $G$, the subgraph in $H$ induced by $W_x \uplus W_y$ is a $(k+l)$-clique in $H$ with $k$ vertices of colour $1$ and $l$ vertices of colour $2$. 
    Hence, since vertex $v$ of colour $3$ is adjacent to the $k$ vertices of colour $1$ in $W_x$, the subgraph in $H$ induced by $W_x \uplus W_y\uplus \{v\}$ is isomorphic to $H_2$. 
    
    In graph $H$, vertex $v$ is the unique vertex of colour $3$ adjacent to any vertex in $W_x$, so, with the uniqueness of $W_y$, it ensures that there exists a unique subgraph isomorphic to $H_2$. 
    The graph $H$ has exactly $\alpha = |V_x|+|V_y|+1$ vertices. 
    In $H$, every edge with both endpoints in $V_x\uplus V_y$ is covered by $H_1$. 
    In particular, every edge with both endpoints in $W_x\uplus W_y$ is also covered by the unique subgraph isomorphic to $H_2$ in $H$.
    Moreover, every edge in $H$ that does not belong to $H_1$ is incident to $v$ and is covered by the unique subgraph isomorphic to $H_2$ in $H$. 
    Consequently, $H$ is edge-covered by $\mathcal{F}$.
    Hence, if there exists a valid solution graph for the instance of $\exists\exists!$ Clique, then there exists a valid solution graph for the instance of SGSC obtained by the reduction.
    
    Assume that there exists a valid solution graph $H$ for the instance of SGSC obtained by the reduction.
    In graph $H$, there is only one subgraph isomorphic to $H_1$ and only one subgraph isomorphic to $H_2$. 
    Moreover, every edge belongs to at least one of those subgraphs, and we have $|V(H)|\leq|V_x|+|V_y|+1$.
    By definition of $H_1$, this subgraph has $|V_x|$ vertices of colour $1$, $|V_y|$ vertices of colour $2$ and no one of colour $3$.
    By definition of $H_2$, this subgraph has $k$, $l$ and $1$ vertices of colour $1$, $2$ and $3$, respectively.
    Hence, the subgraph of $H$ without the vertex of colour $3$ is isomorphic to $H_1$; the vertices of colour $1$ are in $V_x$, the ones of colour $2$ in $V_y$, and we call $w$ the one of colour $3$.
    The graph $H_2$ contains a $(k+l)$-clique induced by its $k$ vertices of colours $1$ and $l$ vertices of colour $2$, and the $k$ vertices of colour $1$ are adjacent to the vertex of colour $3$.
    Since $H$ contains a subgraph isomorphic to $H_2$, there is a subset $W_x\subseteq V_x$ of size $k$ which is the $k$ vertices adjacent to $w$, and there is a subset $W_y\subseteq V_y$ of size $l$ for which $W_x\uplus W_y\uplus \{w\}$ induces the subgraph isomorphic to $H_2$. 
    Since there is a unique subgraph isomorphic to $H_2$ in $H$, the subset $W_y$ must be the unique subset in $V_y$ with which $W_x$ induces a $(k+l)$-clique in $H$.
    Because the graph $H_1$ is actually the graph $G$ with a colouring, the subset of vertices $W_x\uplus W_y$ induces a $(k+l)$-clique in $G$, and there exists a valid solution graph for the instance of $\exists\exists!$ Clique Problem.
    
    By Lemma \ref{lem:SGSCSig} and this reduction, we conclude that SGSC is $\Sigma_2^P$-complete.
\end{proof}

The $\Sigma_2^P$-completeness of SGSC established above holds in the general setting, where the sample may involve at least three colours. 
This leaves open whether this high complexity is driven by the richness of the sample topology, or persists even for much simpler classes of samples.
We show that the latter holds: when the sample consists of only two prescribed paths, SGSC is $NP$-hard. 
Thus, the complexity of the general problem cannot be attributed solely to the complexity of the sample topology.
We note that we have not established membership in $NP$ for these instances. 
In particular, we are not aware of a straightforward polynomial-size certificate for the corresponding instances.

More precisely, we consider SGSC for samples consisting of two coloured paths, with an arbitrary number of colours, and the fingerprint is unary. 
	
\begin{theorem}
	SGSC is $NP$-hard even if the sample $\mathcal{F}$ contains only two paths and if the fingerprint $P$ is unary. 
\end{theorem}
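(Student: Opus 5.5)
The plan is to reduce from an NP-hard string-packing problem, e.g.\ \textsc{3-Partition} (strongly NP-hard, so numbers may be encoded in unary as path lengths), to SGSC with $\mathcal{F}=\{A,B\}$ two coloured paths and $P=(1,1)$. The guiding observation is structural. In any solution graph $H$, every edge is covered by the unique induced copy of $A$ or the unique induced copy of $B$. So $V(H)$ is the union of these two copies, and $|V(H)|=|A|+|B|-|S|$, where $S$ is the set of shared vertices. Minimising the order therefore means maximising the overlap $S$. Since both copies are induced, $H[S]$ must be the same linear forest whether it is read inside $A$ or inside $B$.

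The first step is to make this characterisation precise. I would show that $S$ decomposes into maximal coloured segments. Each segment is a subpath of $A$ and a subpath of $B$ with the same colour sequence, possibly reversed. Moreover, two segments that are consecutive in one path cannot be joined by an edge in the other path, and no edge of $H$ joins a vertex of $A\setminus S$ to one of $B\setminus S$. Conversely, any family of pairwise ``separated'' matched segments yields a connected graph $H$ that is edge-covered by $\{A,B\}$, as long as the overlap is nonempty.

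The second step is to guarantee that $H$ contains no extra induced copy of $A$ or of $B$. To achieve this, I would give each path private sentinel colours at both ends, and also use private separator colours between blocks. Each private colour occurs exactly once in $A$ and never in $B$ (respectively, exactly once in $B$ and never in $A$). Any induced copy of $A$ in $H$ must then use the unique sentinel vertices. Following the path from one sentinel along the prescribed colour sequence, I would argue that it is forced to coincide with the original copy: at each step, a competing neighbour of the required colour would have to come from $B$, and there it would be adjacent to a separator of the wrong colour.

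The third step is the gadget itself. Take an instance $a_1,\dots,a_{3m}$ with target $T$. Let $A$ consist of blocks of colour $1$ of lengths $a_i$, separated by single private separator vertices. Let $B$ consist of $m$ bins. Bin $j$ has room for three blocks of colour $1$, separated internally by ``gap'' vertices of a colour allowed only in $B$, and its total length is calibrated so that the blocks of $A$ can be matched inside it exactly when they sum to $T$. I would then set $\alpha=|A|+|B|-\sum_i a_i$. With this choice, a solution of order at most $\alpha$ exists if and only if every block of $A$ is fully matched into a bin without violating the separation condition, which happens if and only if the \textsc{3-Partition} instance is a yes-instance. The separation condition is exactly what forces the gaps in $B$ to sit between matched blocks.

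The main obstacle I expect is the interplay between the second and third steps. Ruling out spurious induced copies of the long monochromatic-heavy paths is delicate, because sharing long runs of colour $1$ can create alternative colour-$1$ subpaths mixing $A$- and $B$-vertices. I would first try to kill these with enough distinct private colours, which is allowed since the number of colours is arbitrary. I would also check carefully that partial or split matchings of a block can never reach the bound $\alpha$, since such a matching could otherwise beat the intended overlap.
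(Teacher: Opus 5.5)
You take a genuinely different route from the paper, which reduces from \textsc{Maximum Independent Set} on cubic Hamiltonian graphs and sets $\alpha=|P_1|+K+2|V|+|E|$, so that every vertex of $P_2$ not coloured $c_{P2}$ must be identified with a vertex of $P_1$. Your overlap-counting framework is the same mechanism, and it is sound. With $\alpha=|A|+|B|-\sum_i a_i$, every colour-$1$ vertex of $A$ lies in the copy of $B$. Each block of $A$ then lies contiguously inside one maximal colour-$1$ run of $B$, and two blocks sharing a run are separated by an unmatched vertex.

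The gap is the bin gadget, which carries the whole equivalence. If the gaps inside a bin have a colour occurring only in $B$, they sit at fixed positions of the fixed path $B$, so each bin is cut into colour-$1$ slots of fixed lengths. Unmatched vertices of $B$ cost nothing, because $|B|$ is fixed. So your bound $\alpha$ only requires each block to fit into some slot. That is a per-slot packing condition, and it says nothing about the three blocks of a bin summing to $T$. For instance, if every slot has length at least $\max_i a_i$, every \textsc{3-Partition} instance becomes a yes-instance. No fixed choice of slot lengths can encode a sum constraint on a triple that the solution chooses.

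To repair it, make bin $j$ a single maximal colour-$1$ run of exactly $T+2$ vertices, and use $B$-private colours only between bins and at the ends. The in-bin spacers are then unmatched colour-$1$ vertices whose positions the solution chooses. If run $j$ receives $k_j\geq 1$ blocks of total length $\sigma_j$, then $\sigma_j+k_j-1\leq T+2$. Summing over the $r$ occupied runs gives $mT+3m-r\leq r(T+2)$, so $r=m$ and every run is tight, i.e.\ $\sigma_j+k_j=T+3$. Then $T/4<a_i<T/2$ forces $k_j=3$ and $\sigma_j=T$.

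Your step two must also be redone. The competing colour-$1$ neighbours are now spacers, which are adjacent to no separator, so the ``wrong separator colour'' argument does not apply. Argue instead as follows. The colour-$1$ subgraph of the constructed $H$ is exactly the disjoint union of the runs of $B$. Any copy of $A$ (resp.\ $B$) contains every vertex whose colour is private to $A$ (resp.\ $B$). Then propagate block by block from a sentinel. Placing the blocks of each run in increasing index order and forward orientation makes this propagation immediate. Otherwise a path such as (first vertex of block $i$, spacer, first vertex of block $i+1$) could locally serve as the image of block $i$ when $a_i=3$.

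With these changes the reduction from \textsc{3-Partition} works. It is arguably simpler than the paper's, since it needs neither a Hamiltonian cycle nor vertex-indexed colours.
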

\begin{proof}
	The reduction is from the \textsc{Maximum Independent Set} problem, which is $NP$-complete for cubic Hamiltonian graphs when the Hamiltonian cycle is given~\citep{Fleischner2010-2742}. 
	Let $G=(V,E)$ be such a graph, let $C=(v_1,\ldots,v_n,v_1)$ be its Hamiltonian cycle, and let $K$ be an integer. 
	We ask whether $G$ contains an independent set of size $K$.
	
	We build an instance of SGSC as follows. 
	The sample $\mathcal{F}$ contains two paths $P_1$ and $P_2$ using $n + 5$ colours $(c_1, c_2, \dots, c_n, c_K, c_{E1}, c_{E2}, c_{P1}, c_{P2})$. 
	Figure~\ref{fig:example} contains an illustration of the two paths.
	
	The path $P_1$ encodes $G$. 
	For every vertex $v_i\in V$, it contains, in this order, the vertices $a_i, k_i, a_i', x_i$ with colours $c_i, c_K, c_i$ and $c_{P1}$ respectively. 
	For every edge $e = (v_i, v_{i+1})\in C$ (where $v_{n + 1} = v_1$), we append $b_i, r_{ei}, s_e, r_{e i+1}$ with colours $c_i, c_{E2}, c_{E1}$ and $c_{E2}$. 
	For every remaining edge $e = (v_i, v_j)$ of $G$, starting with the third incident edge of $v_1$, we append $b_i', r_{ei}, s_e, r_{e j}, b_j', x_e$ with colours $c_i, c_{E2}, c_{E1}, c_{E2}, c_j, c_{P1}$. 
	Since $G$ is cubic, those edges not in $C$ are a matching in $G$ and then $P_1$ contains exactly four vertices of colour $c_i$: $a_i,a_i',b_i$, and $b_i'$.
	
	The path $P_2$ encodes the choice of the independent set and verifies its
	independence. 
	For every $j \in \llbracket 1; K \rrbracket$ the path first contains  $\kappa_j$ and $\psi_j$ with colours $c_K$ and $c_{P2}$. 
	For every $v_i$ of $G$, we then append $\alpha_i, \varphi_i, \beta_i, \phi_i$ respectively with colours $c_i, c_{P2}, c_i, c_{P2}$. 
	Finally, for every edge index $j \in \llbracket 1; |E| \rrbracket$, we add $\gamma_j, \lambda_j, \theta_j$ with colours $c_{E1}, c_{E2}$ and $c_{P2}$.
	
	\begin{figure}[ht!]
		\centering
		\begin{tikzpicture}[]
			\node[label] at (-1.5,1.8) {\Large $P_1$};
			
			\node[ordinary] (a1) at (0,1) {$a_i$\\[-1mm]\scriptsize $c_i$};
			\node[ordinary, right=1mm of a1] (k1) {$k_i$\\[-1mm]\scriptsize $c_K$};
			\node[ordinary, right=1mm of k1] (ap1) {$a_i'$\\[-1mm]\scriptsize $c_i$};
			\node[pathvertex, right=1mm of ap1] (x1) {$x_i$\\[-1mm]\scriptsize $c_{P1}$};
			
			\draw[conn] (a1) -- (k1) -- (ap1) -- (x1);
			
			\node[group, fit=(a1)(k1)(ap1)(x1)] (vertexblock) {};
			\node[label, above=1mm of vertexblock] {for vertices $v_i\in V$};
			
			\node[ordinary, right=12mm of x1] (b1) {$b_i$\\[-1mm]\scriptsize $c_i$};
			\node[edgevertex, right=1mm of b1] (re1) {$r_{e i}$\\[-1mm]\scriptsize $c_{E2}$};
			\node[edgevertex, right=1mm of re1] (se1) {$s_e$\\[-1mm]\scriptsize $c_{E1}$};
			\node[edgevertex, right=1mm of se1] (re2) {$r_{e,i+1}$\\[-1mm]\scriptsize $c_{E2}$};
			
			\draw[conn] (x1) -- (b1) -- (re1) -- (se1) -- (re2);
			
			\node[group, fit=(b1)(re1)(se1)(re2)] (cycleblock) {};
			\node[label, above=1.2mm of cycleblock] (cyclelabel)
			{for edges $e=(v_i,v_{i+1})$ of $C$};
			\node[label, above=-1mm of cycleblock] (cyclelabel)
			{where $v_{n+1} = v_1$};
			
			\node[ordinary, below=12mm of re2] (bp) {$b_i'$\\[-1mm]\scriptsize $c_i$};
			\node[edgevertex, left=1mm of bp] (re3) {$r_{e i}$\\[-1mm]\scriptsize $c_{E2}$};
			\node[edgevertex, left=1mm of re3] (se2) {$s_e$\\[-1mm]\scriptsize $c_{E1}$};
			\node[edgevertex, left=1mm of se2] (re4) {$r_{e j}$\\[-1mm]\scriptsize $c_{E2}$};
			\node[ordinary, left=1mm of re4] (bj) {$b_j'$\\[-1mm]\scriptsize $c_j$};
			\node[pathvertex, left=1mm of bj] (xe) {$x_e$\\[-1mm]\scriptsize $c_{P1}$};
			
			\draw[conn] (re2) -- (bp) -- (re3) -- (se2) --
			(re4) -- (bj) -- (xe);
			
			\node[group, fit=(bp)(re3)(se2)(re4)(bj)(xe)] (otherblock) {};
			\node[label, above=1mm of otherblock] (otherlabel)
			{for other edges $e=(v_i,v_j)$};
			\node[label, below=-2.8mm of otherlabel] 
			{starting with $v_i = v_1$};
			
			\node[label] at (-1.5,-3.5) {\Large $P_2$};
			
			\node[selectedvertex] (kap) at (0,-3.5) {$\kappa_j$\\[-1mm]\scriptsize $c_K$};
			\node[pathvertex, right=1mm of kap] (psi) {$\psi_j$\\[-1mm]\scriptsize $c_{P2}$};
			
			\draw[conn] (kap) -- (psi);
			
			\node[group, fit=(kap)(psi)] (choiceblock) {};
			\node[label, above=1mm of choiceblock]
			{for $j=\llbracket 1; K \rrbracket$};
			
			\node[ordinary, right=12mm of psi] (alpha) {$\alpha_i$\\[-1mm]\scriptsize $c_i$};
			\node[pathvertex, right=1mm of alpha] (phi1) {$\varphi_i$\\[-1mm]\scriptsize $c_{P2}$};
			\node[ordinary, right=1mm of phi1] (beta) {$\beta_i$\\[-1mm]\scriptsize $c_i$};
			\node[pathvertex, right=1mm of beta] (phi2) {$\phi_i$\\[-1mm]\scriptsize $c_{P2}$};
			
			\draw[conn] (psi) -- (alpha) -- (phi1) -- (beta) -- (phi2);
			
			\node[group, fit=(alpha)(phi1)(beta)(phi2)] (vertexcheck) {};
			\node[label, above=1mm of vertexcheck]
			{for vertices $v_i\in V$};
			
			\node[edgevertex, right=12mm of phi2] (gamma) {$\gamma_j$\\[-1mm]\scriptsize $c_{E1}$};
			\node[edgevertex, right=1mm of gamma] (lambda) {$\lambda_j$\\[-1mm]\scriptsize $c_{E2}$};
			\node[pathvertex, right=1mm of lambda] (theta) {$\theta_j$\\[-1mm]\scriptsize $c_{P2}$};
			
			\draw[conn] (phi2) -- (gamma) -- (lambda) -- (theta);
			
			\node[group, fit=(gamma)(lambda)(theta)] (edgecheck) {};
			\node[label, above=1mm of edgecheck]
			{for $j=\llbracket 1; |E| \rrbracket$};
		\end{tikzpicture}
		\caption{Illustration of the two paths in the sample $\mathcal{F}$.}
		\label{fig:example}
	\end{figure}
	
	The fingerprint $P$ is unary and the value of $\alpha$ is set to $|P_1| + K + 2 |V| + |E|$. 
	
	Assume first that there exists a graph $H$ with at most $\alpha$ vertices containing unique induced copies of both $P_1$ and $P_2$. 
	The path $P_2$ contains exactly $K+2|V|+|E|$ vertices of colour $c_{P2}$, whereas $P_1$ contains none. 
	Consequently, considering the value of $\alpha$, all remaining vertices of the copy of $P_2$ must be identified with vertices of $P_1$. 
	
	Each vertex $\kappa_j$ must therefore be identified with some vertex $k_i$. 
	These identifications define a set $S$ of $K$ vertices of $G$. 
	We show that $S$ is independent. 
	Given some vertex $v_i\in S$, then $\alpha_i$ and $\beta_i$, coloured with $c_i$, must be identified with two vertices among $a_i, a_i', b_i$ and $b_i'$. 
	Since the copy of $P_2$ contains $k_i$ and is induced, it cannot contain $a_i$ or $a_i'$. 
	Hence, it must contain $b_i$ and $b_i'$.
	
	Now consider an edge $e=(v_i,v_k)$. 
	The vertices $s_e$ of $P_1$ and $\gamma_j$ of $P_2$ are the only vertices of colour $c_{E1}$ in the respective paths, so they must be identified. 
	We assume without loss of generality that $e$ belongs to the Hamiltonian cycle, that the associated part of $P_1$ is $b_i,r_{ei},s_e,r_{ek},b_k$ and that $s_e$ is identified with $\gamma_j$ for some $j$. 
	The vertex $\lambda_j$ must then be identified with either $r_{ei}$ or $r_{ek}$; otherwise the copy of $P_1$ would not be induced. 
	Since the copy of $P_2$ is also induced, it cannot contain simultaneously $r_{ei}$ and $b_i$ or $r_{ek}$ and $b_k$. 
	Therefore, $S$ cannot contain both $v_i$ and $v_k$. 
	As $S$ contains $v_i$ (resp. $v_k$) if and only if $P_2$ contains $b_i$ (resp. $b_k$), $S$ is an independent set of size $K$.

	Conversely, suppose that $S$ is an independent set of size $K$. 
	We construct $H$ by first adding a copy of $P_1$ and then inserting $P_2$. 
	If $v_i$ is the $j$-th vertex of $S$, we identify $k_i$ with $\kappa_j$ and connect them through $\psi_j$. 
	For every vertex $v_i\in V$, if $v_i\in S$, we identify $\alpha_i,\beta_i$ with $b_i,b_i'$. 
	Otherwise, we identify them with $a_i,a_i'$. 
	The corresponding vertices are connected through $\varphi_i$ and $\phi_i$. 
	
	Finally, consider the $j$-th edge $e=(v_i,v_k)$. 
	Since $S$ is independent, at least one endpoint, say $v_i$, does not belong to $S$. 
	We identify $\gamma_j$ with $s_e$ and $\lambda_j$ with $r_{ei}$. 
	We then connect $\lambda_j$ and $\gamma_{j+1}$ through $\theta_j$.
	
	These identifications produce induced copies of both $P_1$ and $P_2$. 
	The resulting graph contains exactly $\alpha$ vertices. 
	Moreover, the copy of $P_1$ is unique because the additional vertices of $P_2$ have colour $c_{P2}$, which does not occur in $P_1$. 
	Every copy of $P_2$ must use all vertices of colour $c_{P2}$ due to the fact that there are as many such vertices in $P_2$ and in $H$, and their degree-one or degree-two structure in $H$ forces the inclusion of their neighbours in the copy. 
	Hence, that copy is also uniquely determined. 
	This completes the reduction. 
\end{proof}

Having established the $NP$-hardness of SGSC for samples consisting of two paths with an arbitrary number of colours, we now ask whether restricting the colour structure can lead to tractable instances.
Specifically, we restrict one of the two paths to be monochromatic. 
Under this additional restriction, we show that the resulting instances can be solved in polynomial time.

\section{Polynomial-Time Solvability with One Monochromatic Path}\label{sec:2chain}
Although SGSC is $NP$-complete for two prescribed paths, some non-trivial instances in this setting remain tractable. 
We establish such a result when one of the two paths is monochromatic.
These instances may use only two colours and form a simple non-trivial setting in which two prescribed patterns interact, since a single-pattern instance can be treated independently. 
The monochromaticity of one path provides enough structure to control their overlap while enforcing exact multiplicities, connectivity, and the global edge-covering constraint.

Our proof has two main components. First, we show that an optimal solution can be represented canonically by a suitable subset of the maximal monochromatic subpaths of $L_1$. Second, we optimise over these subsets by dynamic programming. Thus, the problem reduces to a cardinality-constrained subset-sum computation.

We now introduce some definitions. 
Given a path $L$, an \textbf{extremity} of $L$ is any vertex that serves as an endpoint. 
We consider any path for which its vertices are numbered from one extremity to the other.

Consider two coloured paths $L_1$ and $L_2$ of size $N_1$ and $N_2$, respectively.
We define the path $L_2$ as monochromatic, i.e. every vertex of one colour, which we will call \begin{math}c\in \mathbb{N}\end{math}, and the path $L_1$ contains vertices of various colours, including at least one vertex of colour $c$.

A subpath (i.e., a path contained in a graph) of colour $c$ is \textbf{maximal} if it is not contained in a longer subpath of colour $c$.
The set of all maximal subpaths of colour $c$ in $L_1$ is denoted by $\Lambda$. 
For any \begin{math}\lambda\subseteq\Lambda\end{math}, we call \textbf{end subpath} of $\lambda$ either the first or the last subpath of $\lambda$ according to their order in $L_1$.
Figures~\ref{fig:l1} and~\ref{fig:l2} show an example of the paths $L_1$ and $L_2$, respectively. 

\begin{figure}[ht]
    \centering
    \begin{subfigure}{0.95\textwidth}
        \centering
        \begin{adjustbox}{max width=\linewidth}
            \begin{tikzpicture}[node distance=0.8cm]
                \node[voidNode] (x1) {} ;
                \node[voidNode, right of=x1] (x2) {} ;
                \node[voidNode, right of=x2] (x3) {} ;
                \node[voidNode, right of=x3] (x4) {} ;
                \node[voidNode, right of=x4] (x5) {} ;
                \node[voidNode, right of=x5] (x6) {} ;
                \node[voidNode, right of=x6] (x7) {} ;
                \node[pointNode, right of=x7] (x8) {} ;
                \node[voidNode, right of=x8] (x9) {} ;
                \node[voidNode, right of=x9] (x10) {} ;
                \node[voidNode, right of=x10] (x11) {} ;
                \node[voidNode, right of=x11] (x12) {} ;
                \node[pointNode, right of=x12] (x13) {} ;
                \node[voidNode, right of=x13] (x14) {} ;
                \node[voidNode, right of=x14] (x15) {} ;
                \node[voidNode, right of=x15] (x16) {} ;
                \node[voidNode, right of=x16] (x17) {} ;
                \node[voidNode, right of=x17] (x18) {} ;
                \node[voidNode, right of=x18] (x19) {} ;
                \node[voidNode, right of=x19] (x20) {} ;
                \node[voidNode, right of=x20] (x21) {} ;
                \node[pointNode, right of=x21] (x22) {} ;

                \draw[-] (x1) -- (x2) ;
                \draw[-] (x2) -- (x3) ;
                \draw[-] (x3) -- (x4) ;
                \draw[-] (x4) -- (x5) ;
                \draw[-] (x5) -- (x6) ;
                \draw[-] (x6) -- (x7) ;
                \draw[-] (x7) -- (x8) ;
                \draw[-] (x8) -- (x9) ;
                \draw[-] (x9) -- (x10) ;
                \draw[-] (x10) -- (x11) ;
                \draw[-] (x11) -- (x12) ;
                \draw[-] (x12) -- (x13) ;
                \draw[-] (x13) -- (x14) ;
                \draw[-] (x14) -- (x15) ;
                \draw[-] (x15) -- (x16) ;
                \draw[-] (x16) -- (x17) ;
                \draw[-] (x17) -- (x18) ;
                \draw[-] (x18) -- (x19) ;
                \draw[-] (x19) -- (x20) ;
                \draw[-] (x20) -- (x21) ;
                \draw[-] (x21) -- (x22) ;

                \draw [decorate,decoration={brace,amplitude=.35cm,raise=.06cm}]
                ([yshift=.1cm]x1.north west) -- ([yshift=.1cm]x7.north east) 
                node [black, midway, yshift=18pt] {$l_1$};
                \draw [decorate,decoration={brace,amplitude=.35cm,raise=.06cm}]
                ([yshift=.1cm]x9.north west) -- ([yshift=.1cm]x12.north east) 
                node [black, midway, yshift=18pt] {$l_2$};
                \draw [decorate,decoration={brace,amplitude=.35cm,raise=.06cm}]
                ([yshift=.1cm]x14.north west) -- ([yshift=.1cm]x21.north east) 
                node [black, midway, yshift=18pt] {$l_3$};
            \end{tikzpicture}
        \end{adjustbox}
        \caption{Path $L_1$ with 22 vertices of various colours including $c$.}
        \label{fig:l1}
    \end{subfigure}
    \begin{subfigure}{0.95\textwidth}
        \centering
        \begin{adjustbox}{max width=\linewidth}
            \begin{tikzpicture}[node distance=0.8cm]
                \node (x0) {} ;
                \node[below of=x0,above=.3cm] (x1) {} ;
                \node[right of=x1] (x2) {} ;
                \node[right of=x2] (x3) {} ;
                \node[right of=x3] (x4) {} ;
                \node[right of=x4] (x5) {} ;
                \node[voidNode, right of=x5] (x6) {} ;
                \node[voidNode, right of=x6] (x7) {} ;
                \node[voidNode, right of=x7] (x8) {} ;
                \node[voidNode, right of=x8] (x9) {} ;
                \node[voidNode, right of=x9] (x10) {} ;
                \node[voidNode, right of=x10] (x11) {} ;
                \node[voidNode, right of=x11] (x12) {} ;
                \node[voidNode, right of=x12] (x13) {} ;
                \node[voidNode, right of=x13] (x14) {} ;
                \node[voidNode, right of=x14] (x15) {} ;
                \node[right of=x15] (x16) {} ;
                \node[right of=x16] (x17) {} ;
                \node[right of=x17] (x18) {} ;
                \node[right of=x18] (x19) {} ;
                \node[right of=x19] (x20) {} ;
                \node[right of=x20] (x21) {} ;
                \node[right of=x21] (x22) {} ;

                \draw[-] (x6) -- (x7) ;
                \draw[-] (x7) -- (x8) ;
                \draw[-] (x8) -- (x9) ;
                \draw[-] (x9) -- (x10) ;
                \draw[-] (x10) -- (x11) ;
                \draw[-] (x11) -- (x12) ;
                \draw[-] (x12) -- (x13) ;
                \draw[-] (x13) -- (x14) ;
                \draw[-] (x14) -- (x15) ;
            \end{tikzpicture}
        \end{adjustbox}
        \caption{Path $L_2$ with 10 vertices of colour $c$.}
        \label{fig:l2}
    \end{subfigure}
    \caption{The sample of an instance of SGSC with the path $L_1$ and the path $L_2$. 
    The vertices of colour $c$ are empty circles, and the vertices of other colours are solid circles.
    In Fig.\ref{fig:l1}, the braces show the three maximal subpaths of colour $c$ in the set $\Lambda$.}
    \label{fig:l1l2}
\end{figure}

We denote by $\mathcal{I}$ the set of all instances of SGSC that consist of a sample \begin{math}\mathcal{F}=\{L_1, L_2\}\end{math}, a unary fingerprint $P$ (see Definition~\ref{def:emp}) and an integer $\alpha>0$. 
This section studies instances within $\mathcal{I}$, and establishes this result.

\begin{theorem}\label{the:classP}
    SGSC is solvable in polynomial time for instances in $\mathcal{I}$. 
\end{theorem}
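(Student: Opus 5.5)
We prove Theorem~\ref{the:classP} by characterising the shape of a minimum-order solution graph and then optimising over that shape by dynamic programming. Let $N_2$ be the order of $L_2$ and let $\Lambda=\{l_1,\dots,l_r\}$ be the maximal subpaths of colour $c$ in $L_1$, with lengths $|l_t|$.

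\textbf{Structure of a solution.} Fix a solution graph $H$. It contains a unique induced copy $\hat L_1$ of $L_1$ and a unique induced copy $\hat L_2$ of $L_2$, and every edge lies in one of them. Hence $H=\hat L_1\cup\hat L_2$ as a union of edge sets, so $H$ has maximum degree at most $4$ and a very restricted form. Since $\hat L_2$ is monochromatic of colour $c$, its vertices in $\hat L_1$ can only lie inside copies of subpaths in $\Lambda$. Because both copies are induced, $\hat L_2\cap\hat L_1$ splits into subpaths, each of them a contiguous piece of some $l_t$. Between consecutive shared pieces, $\hat L_2$ runs through vertices outside $\hat L_1$, and by connectivity and the edge cover these are new vertices of colour $c$.

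Uniqueness of the copy of $L_2$ forces a case split.
\begin{itemize}
\item If $N_2\le \max_t |l_t|$, then $L_1$ already contains copies of $L_2$. A solution exists only if exactly one such copy arises, so the instance becomes a direct check on the counts $|l_t|-N_2+1$, with small modifications that lengthen no $l_t$.
\item Otherwise $\hat L_2$ must be a genuinely new path.
\end{itemize}
I would then prove an exchange lemma. Any solution can be transformed, without increasing its order, into a \emph{canonical} one in which $\hat L_2$ uses a subset $\lambda\subseteq\Lambda$ of whole maximal subpaths, traversed in their $L_1$ order. Two cases are allowed at the two end subpaths of $\lambda$: they may be used partially, or the path may be extended by fresh pendant vertices. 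Consecutive chosen subpaths are joined through fresh vertices, or through one fresh vertex where needed to keep the copies induced and to avoid a second copy of $L_1$.

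The argument for the lemma runs as follows. Using a maximal subpath only partially in the interior creates an induced-ness violation, because the neighbouring $c$-vertex of $L_1$ would be adjacent to the path without being in it. Conversely, replacing fresh vertices by overlap never increases the order. The order of a canonical solution is then
\[
N_1+N_2-\sum_{t\in\lambda}|l_t| + (\text{connector and end corrections}).
\]

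\textbf{Main obstacle.} The hardest step is proving that in a canonical solution no additional copy of $L_1$ or $L_2$ appears. Extra copies of $L_2$ are a danger at branching points where $\hat L_2$ leaves $\hat L_1$: an alternative monochromatic route of length $N_2$ could run along $\hat L_1$ instead of the fresh connector. I would control this by requiring each connector to contain at least one fresh vertex. I would also bound the lengths of the monochromatic routes through each junction by $N_2-1$, which gives explicit linear inequalities on the chosen lengths. These inequalities should be precisely the feasibility characterisation.

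\textbf{Dynamic programming.} With the characterisation in hand, we minimise the order. This amounts to choosing $\lambda$ to maximise $\sum_{t\in\lambda}|l_t|$ subject to two constraints:
\begin{itemize}
\item the total, including forced connector vertices (one per gap, i.e.\ $|\lambda|-1$), must equal $N_2$ minus the end adjustments;
\item the local inequalities at junctions must hold.
\end{itemize}
This is a cardinality-constrained subset sum. A table $D[t][s][k]$ records whether some $k$ subpaths among $l_1,\dots,l_t$, with $l_t$ the last chosen, achieve covered length $s\le N_2$, with extra states for the two end subpaths. Since $N_2$, $r$ and $N_1$ are polynomial (the fingerprint is unary and the paths are given explicitly), the table has polynomial size. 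Comparing the resulting minimum order with $\alpha$ decides the instance, which proves Theorem~\ref{the:classP}.
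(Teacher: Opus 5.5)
\textbf{Gap: the exchange lemma is false.} It fails, and with it the feasibility set your dynamic programme optimises over. You claim that every interior member of $\lambda$ is traversed in full, because otherwise a neighbouring $c$-vertex of $L_1$ would be adjacent to $\hat L_2$ without lying on it. That is not a violation of induced-ness, which only constrains edges between vertices of $\hat L_2$. The real requirement is that $\hat L_2$ be the \emph{unique longest} induced path of colour $c$. This allows an ignored branch to hang off a junction inside \emph{any} member, provided it is strictly shorter than the competing branches.

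The paper's Algorithm~$\mathcal{A}_1$ does exactly this when $|\lambda|\geq 3$ and $N_2<f(\lambda)$. There $v_1$ is attached to the $(f(\lambda)-N_2+1)$-th vertex of the interior member $l_2$, whose first $f(\lambda)-N_2$ vertices are ignored. This is sometimes forced for optimality.

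\textbf{Counterexample.} Take $L_1=d\,ccc\,d\,ccc\,d\,ccc\,d$ with $d\neq c$, so $N_1=13$, and take $N_2=10$. Join the last vertex of the first run to the second vertex of the middle run through a new vertex $v_1$. Join the last vertex of the middle run to the first vertex of the third run through a new vertex $v_2$. The $c$-vertices then form a tree with a single junction whose branches have $4$, $1$ and $5$ vertices, so the $10$-vertex monochromatic path is unique. The copy of $L_1$ is also unique, so this is a solution of order $15$.

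In your canonical form, order $15$ would require all three runs, two one-vertex connectors, and exactly one ignored vertex in an \emph{end} run of length $3$. That is impossible. Attaching at the middle vertex of a $3$-run gives two equal branches. Attaching at an end vertex puts the whole run on a longer monochromatic path. With only two runs you need at least four fresh vertices. Your procedure would therefore report $N_1+4=17$ and answer NO for $\alpha=15$.

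\textbf{What is missing.} You need the paper's notion of a perfect subset. Ignored vertices may sit in any member, and the binding constraint is global: every $l\in\lambda$ must satisfy $|l|>f(\lambda)-N_2$. When $|\lambda|=2$, both junctions lie in end runs, which adds the half-length bound $f(\lambda)-N_2\leq\lfloor|l_1|/2\rfloor+\lfloor|l_2|/2\rfloor-2$.

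Because this threshold depends on $|\lambda|$ and $\sum\lambda$, local inequalities checked along the $L_1$ order do not express it. The paper instead enumerates $(K,\Sigma)$, restricts to the members longer than $\Sigma+K-1-N_2$, and fills a cardinality-constrained subset-sum table minimising $\max\{|\lambda|-1,N_2-\sum\lambda\}$. Showing that some optimum has a perfect subpath set is itself a separate argument, the paper's deletion-and-case analysis.

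\textbf{Extra copies of $L_1$.} Your treatment of additional copies of $L_1$ is only a placeholder, and the construction genuinely can create them. One case is when $\lambda$ consists of two singleton runs at the extremities of $L_1$ and some other run has length $3$. Another is when a fresh pendant path at an extremity imitates a longer end run of $L_1$. You must show, as in the paper's repair lemma, that each such case is dominated by another admissible subset of no larger order.
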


We first dispose of the immediate cases in Section~\ref{sec:trivial}. For the remaining instances, Section~\ref{sec:propert} introduces the structural encoding. 
Section~\ref{sec:construction} proves the canonical-solution theorem: restricting the search to graphs generated from perfect subsets of $\Lambda$ does not exclude all optimal solutions. 
Finally, Section~\ref{sec:subpb} finds an optimal perfect subset in polynomial time.

Since an instance is feasible precisely when the order of an optimal solution is at most $\alpha$, we work with the optimisation version throughout the structural and algorithmic parts of the proof.

\subsection{Instances with inclusion or small vertex bound}\label{sec:trivial}

We define $\mathcal{J}$ as the subset of $\mathcal{I}$ containing all instances whose sample is \emph{without inclusion} (see Definition~\ref{def:ech}) and satisfying \begin{math}\alpha>\max\{N_1,N_2\}.\end{math}
We analyse the complexity of SGSC for instances in \begin{math}\mathcal{I}\setminus\mathcal{J}.\end{math}

\begin{lemma}\label{lem:IexJ}
    SGSC is solvable in polynomial time for instances in \begin{math}\mathcal{I}\setminus\mathcal{J}.\end{math}
\end{lemma}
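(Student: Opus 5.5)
We split $\mathcal{I}\setminus\mathcal{J}$ into two cases: instances whose sample is \emph{with inclusion}, and instances without inclusion but with $\alpha\leq\max\{N_1,N_2\}$. In each case we either show that no solution graph exists or exhibit an optimal one, using only polynomial-time computations on $L_1$ and $L_2$. One fact is used throughout. Every induced subgraph of a path that is itself a path is a set of consecutive vertices. Hence the number of induced copies of a path $Q$ in a path $L$ equals the number of windows of $|Q|$ consecutive vertices of $L$ whose colour sequence matches that of $Q$, read in either direction. For monochromatic $Q$ of colour $c$ this count is
\[
\sum_{l\in\Lambda}\max\{0,\,|l|-|Q|+1\},
\]
where $\Lambda$ is the set of maximal colour-$c$ subpaths of $L$. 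This quantity is clearly computable in polynomial time.

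\emph{Case with inclusion.} First suppose $L_1$ is included in $L_2$. Then $L_1$ is monochromatic of colour $c$. Since $L_1\not\approx L_2$, we have $N_1<N_2$. Any solution graph $H$ contains an induced copy of $L_2$, and this copy alone contains $N_2-N_1+1\geq 2$ induced copies of $L_1$. This contradicts $P_1=1$, so the answer is \emph{no}.

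Otherwise $L_2$ is included in $L_1$. Any solution graph $H$ contains an induced copy of $L_1$, so $H$ contains at least $\mu=\sum_{l\in\Lambda}\max\{0,|l|-N_2+1\}\geq 1$ induced copies of $L_2$. If $\mu\geq 2$, the answer is \emph{no}. If $\mu=1$, we take $H=L_1$. It is connected, contains exactly one copy of $L_1$ and exactly one copy of $L_2$, and every edge is covered by the copy of $L_1$. Since every solution graph must contain $L_1$, this $H$ is optimal with order $N_1$. The answer is therefore \emph{yes} if and only if $\alpha\geq N_1$.

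\emph{Case without inclusion and $\alpha\leq\max\{N_1,N_2\}$.} By symmetry assume $N_1\geq N_2$. Suppose some solution graph $H$ had at most $N_1$ vertices. Then the induced copy of $L_1$ in $H$ would use every vertex of $H$, so $H\approx L_1$. But $H$ must also contain an induced copy of $L_2$, which would make $L_2$ included in $L_1$ and contradict the absence of inclusion. Hence every solution graph has at least $\max\{N_1,N_2\}+1>\alpha$ vertices, and the answer is \emph{no}.

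The only step needing some care is the inclusion case. There we must confirm that inclusion of $L_1$ in $L_2$ forces $L_1$ to be monochromatic. We must also check that, when $\mu=1$, the graph $L_1$ itself meets the edge-covering constraint and the exact counts. Both checks follow from the consecutive-window characterisation of induced subpaths. Every decision above reduces to comparing lengths and counting windows, so it takes polynomial time, which proves the lemma.
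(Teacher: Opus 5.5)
Your proof is correct and follows the paper's argument: you use the same split into the two inclusion directions plus the small-$\alpha$ case, and your criterion $\mu=1$ is exactly the paper's condition that the longest member of $\Lambda$ is unique and has exactly $N_2$ vertices. Your window-counting formula, and the explicit argument that a solution of order at most $\max\{N_1,N_2\}$ would force $H\approx L_1$ (or $H\approx L_2$), make rigorous two steps that the paper states without detail.
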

\begin{proof}
    Every instance in \begin{math}\mathcal{I}\setminus\mathcal{J}\end{math} either has a sample \emph{with inclusion} or satisfies \begin{math}\alpha\leq\max\{N_1,N_2\}.\end{math}
    
    First, we consider the instances of $\mathcal{I}$ having a sample \emph{with inclusion}.
    Recall that the sample contains a path $L_1$ with vertices of various colours, including $c$, and a monochromatic path $L_2$ of colour $c$. 
    If the sample is \emph{with inclusion}, then either $L_1$ is included in $L_2$ or $L_2$ is included in $L_1$.
    If $L_1$ is included in $L_2$, then $L_1$ must also be monochromatic, and, since two graphs in a sample are not isomorphic, it satisfies $N_1<N_2$. 
    Consequently, $L_2$ contains several subpaths isomorphic to $L_1$, violating the multiplicity constraints. 
    Hence, no valid solution graph exists.
    If $L_2$ is included in $L_1$, then $N_1>N_2$ and some subpath of $\Lambda$ contains at least $N_2$ vertices. 
    In this case, $L_1$ is a candidate for a solution graph, provided that $\alpha\geq N_1$, since every edge of $L_1$ is covered by its unique occurrence.
    Therefore, a valid solution graph exists, and $L_1$ itself is such graph, if and only if $\alpha\geq N_1$ and the longest maximal subpath of $\Lambda$ is unique and contains exactly $N_2$ vertices.
    These instances can therefore be recognised and solved in polynomial time by scanning $L_1$ and, if $L_1$ is not coloured in $c$, by sizing the subpaths in $\Lambda$.
    
    Secondly, we consider the instances of $\mathcal{I}$ having a sample \emph{without inclusion} and \begin{math}\alpha\leq\max\{N_1,N_2\}.\end{math}
    If the sample is \emph{without inclusion}, then every solution graph must contain distinct occurrences of both $L_1$ and $L_2$. 
    Consequently, every solution graph contains strictly more than $\max\{N_1,N_2\}$ vertices.
    Since \begin{math}\alpha\leq\max\{N_1,N_2\}\end{math}, no valid solution graph exists.
    These instances can therefore be recognised and solved in polynomial time. 
\end{proof}

In the following, we consider the instances of $\mathcal{J}$.
Recall that to solve an instance of SGSC, it suffices to determine the size of an optimal solution graph.
Consequently, the structure of an instance in $\mathcal{J}$ is entirely determined by the triplet $(L_1,N_2,c)$ where $L_1$ is the coloured path and $N_2$ is the number of vertices of colour $c$ in the monochromatic path $L_2$, independently of the value of $\alpha$.
Hereafter, we will denote instances of $\mathcal{J}$ via triplets $(L_1, N_2, c)$.

\subsection{Structural Encoding of Solutions}\label{sec:propert}

Before proceeding further, we provide some definitions.

Given a triplet $(L_1, N_2, c)$ and a solution graph $H$, we will denote the unique induced subpaths of $H$ isomorphic to $L_1$ and $L_2$, by $H_1$ and $H_2$ respectively.
Observe that $H_2$ is the maximum induced subpath of colour $c$ in a solution graph $H$. 

Figure~\ref{fig:sol} provides examples of solution graphs with the paths $L_1$ and $L_2$ shown in Figure~\ref{fig:l1l2}. 

\begin{figure}[t]
    \centering
    \begin{subfigure}{0.9\textwidth}
        \centering
        \begin{adjustbox}{max width=\linewidth}
            \begin{tikzpicture}[node distance=.8cm]
                \node[voidNode+] (x1) {} ;
                \node[voidNode+, right of=x1] (x2) {} ;
                \node[voidNode+, right of=x2] (x3) {} ;
                \node[voidNode+, right of=x3] (x4) {} ;
                \node[voidNode+, right of=x4] (x5) {} ;
                \node[voidNode, right of=x5] (x6) {} ;
                \node[voidNode, right of=x6] (x7) {} ;
                \node[pointNode, right of=x7] (x8) {} ;
                \node[voidNode+, right of=x8] (x9) {} ;
                \node[voidNode+, right of=x9] (x10) {} ;
                \node[voidNode+, right of=x10] (x11) {} ;
                \node[voidNode+, right of=x11] (x12) {} ;
                \node[pointNode, right of=x12] (x13) {} ;
                \node[voidNode, right of=x13] (x14) {} ;
                \node[voidNode, right of=x14] (x15) {} ;
                \node[voidNode, right of=x15] (x16) {} ;
                \node[voidNode, right of=x16] (x17) {} ;
                \node[voidNode, right of=x17] (x18) {} ;
                \node[voidNode, right of=x18] (x19) {} ;
                \node[voidNode, right of=x19] (x20) {} ;
                \node[voidNode, right of=x20] (x21) {} ;
                \node[pointNode, right of=x21] (x22) {} ;
                \node[voidNode+, below of=x7] (j1) {} ;

                \draw[-] (x1) -- (x2) ;
                \draw[-] (x2) -- (x3) ;
                \draw[-] (x3) -- (x4) ;
                \draw[-] (x4) -- (x5) ;
                \draw[-] (x5) -- (x6) ;
                \draw[-] (x6) -- (x7) ;
                \draw[-] (x7) -- (x8) ;
                \draw[-] (x8) -- (x9) ;
                \draw[-] (x9) -- (x10) ;
                \draw[-] (x10) -- (x11) ;
                \draw[-] (x11) -- (x12) ;
                \draw[-] (x12) -- (x13) ;
                \draw[-] (x13) -- (x14) ;
                \draw[-] (x14) -- (x15) ;
                \draw[-] (x15) -- (x16) ;
                \draw[-] (x16) -- (x17) ;
                \draw[-] (x17) -- (x18) ;
                \draw[-] (x18) -- (x19) ;
                \draw[-] (x19) -- (x20) ;
                \draw[-] (x20) -- (x21) ;
                \draw[-] (x21) -- (x22) ;

                \draw[-] (j1) to[out=180, in=-90] (x5) ;
                \draw[-] (j1) to[out=0, in=-90] (x9) ;
            \end{tikzpicture}
        \end{adjustbox}
        \caption{Optimal graph which subpath set is $\{l_1,l_2\}$.}
        \label{fig:sol1}
    \end{subfigure}
    \begin{subfigure}{0.9\textwidth}
        \centering
        \begin{adjustbox}{max width=\linewidth}
            \begin{tikzpicture}[node distance=.8cm]
                \node (x0) {} ;
                \node[voidNode+, below of=x0] (x1) {} ;
                \node[voidNode, right of=x1] (x2) {} ;
                \node[voidNode, right of=x2] (x3) {} ;
                \node[voidNode, right of=x3] (x4) {} ;
                \node[voidNode, right of=x4] (x5) {} ;
                \node[voidNode, right of=x5] (x6) {} ;
                \node[voidNode, right of=x6] (x7) {} ;
                \node[pointNode, right of=x7] (x8) {} ;
                \node[voidNode+, right of=x8] (x9) {} ;
                \node[voidNode+, right of=x9] (x10) {} ;
                \node[voidNode+, right of=x10] (x11) {} ;
                \node[voidNode+, right of=x11] (x12) {} ;
                \node[pointNode, right of=x12] (x13) {} ;
                \node[voidNode, right of=x13] (x14) {} ;
                \node[voidNode, right of=x14] (x15) {} ;
                \node[voidNode, right of=x15] (x16) {} ;
                \node[voidNode+, right of=x16] (x17) {} ;
                \node[voidNode+, right of=x17] (x18) {} ;
                \node[voidNode+, right of=x18] (x19) {} ;
                \node[voidNode+, right of=x19] (x20) {} ;
                \node[voidNode+, right of=x20] (x21) {} ;
                \node[pointNode, right of=x21] (x22) {} ;
                \node[voidNode+, below of=x14] (j1) {} ;
    
                \draw[-] (x1) -- (x2) ;
                \draw[-] (x2) -- (x3) ;
                \draw[-] (x3) -- (x4) ;
                \draw[-] (x4) -- (x5) ;
                \draw[-] (x5) -- (x6) ;
                \draw[-] (x6) -- (x7) ;
                \draw[-] (x7) -- (x8) ;
                \draw[-] (x8) -- (x9) ;
                \draw[-] (x9) -- (x10) ;
                \draw[-] (x10) -- (x11) ;
                \draw[-] (x11) -- (x12) ;
                \draw[-] (x12) -- (x13) ;
                \draw[-] (x13) -- (x14) ;
                \draw[-] (x14) -- (x15) ;
                \draw[-] (x15) -- (x16) ;
                \draw[-] (x16) -- (x17) ;
                \draw[-] (x17) -- (x18) ;
                \draw[-] (x18) -- (x19) ;
                \draw[-] (x19) -- (x20) ;
                \draw[-] (x20) -- (x21) ;
                \draw[-] (x21) -- (x22) ;
    
                \draw[-] (j1) to[out=175, in=-75] (x12) ;
                \draw[-] (j1) to[out=5, in=-125] (x17) ;
            \end{tikzpicture}
        \end{adjustbox}
        \caption{Optimal graph which subpath set is $\{l_2,l_3\}$.}
        \label{fig:sol2}
    \end{subfigure}
    \caption{Optimal graphs for the instance given in Figure \ref{fig:l1l2}. 
    The vertices of colour $c$ are empty circles, and the vertices of other colours are solid circles.
    The vertices in maximal subpaths of colour $c$ are highlighted in grey.
    In each graph, the \emph{added} vertex is below subpath $H_1$.}
    \label{fig:sol}
\end{figure}

\begin{definition}\label{def:set}
    Given a solution graph, the vertices of colour $c$ are separated into three sets, which are : 
    \begin{itemize}
        \item $V_{1+2}$ is the set of \emph{used} vertices which are in $H_1$ and $H_2$. 
        \item $V_{1-2}$ is the set of \emph{ignored} vertices which are in $H_1$ but not in $H_2$. 
        \item $V_{2-1}$ is the set of \emph{added} vertices which are in $H_2$ but not in $H_1$. 
    \end{itemize}
\end{definition}

Since any edge of a solution graph $H$ must belong to $H_1$ or $H_2$, every vertex in $H$ belongs to either $V_{1-2}$, $V_{2-1}$, or $V_{1+2}$. 

For the following, we need to define some terms related to the subpaths of $\Lambda$ (i.e., the maximal subpaths of colour $c$ in $L_1$) within a solution graph.

\begin{definition}\label{def:lambda}
    Given a solution graph $H$, we define its subpath set $\lambda_H\subseteq\Lambda$ as the set of maximal subpaths in $\Lambda$ containing at least one vertex of $V_{1+2}$.
\end{definition}

We remind that the vertices of path $L_1$ are numbered from one extremity to the other. 
Therefore, we number the subpaths in a set $\lambda\subseteq\Lambda$ following the same order in $L_1$ from $l_1$ to $l_{|\lambda|}$. 

\begin{definition}\label{def:funcf}
    Given a set $\Lambda$, we define a function $f$ such that for all $\lambda\subseteq\Lambda$, 
    \begin{displaymath}
        f(\lambda)=\left(\sum_{l_i \in \lambda} |l_i|\right) + (|\lambda|-1).
    \end{displaymath}
\end{definition}

Given a solution graph $H$ and its subpath set $\lambda_H$, the value $f(\lambda_H)$ depends on the number and sizes of subpaths in $\lambda_H$. If the solution graph $H$ is the result of Algorithm~$\mathcal{A}_1$ (see p.~\pageref{alg:construct}), this value is related to the order of $H$ (see Lemma~\ref{lem:Hsize}). Intuitively, larger values of $f(\lambda_H)$ correspond to constructions with more vertices in $V_{1+2}$.
Provided that $N_2> f(\lambda_H)$, we will see that, if $f(\lambda_H)$ increases, then $|V(H)|$ decreases.

The following notion of a perfect subset identifies subpaths of $\Lambda$ that can jointly participate in the construction of a solution graph. Conditions~\ref{cond:notempty},~\ref{cond:twochains} and~\ref{cond:ign} respectively ensure connectivity, prevent additional occurrence of $L_2$, and exclude unnecessary subpaths.

\begin{definition}\label{def:perfect}
    Consider a value $N_2$ and a subset $\lambda\subseteq\Lambda$.
    We say $\lambda$ is \textbf{perfect} if it meets the following Conditions~\ref{cond:notempty},~\ref{cond:twochains} and~\ref{cond:ign}:
    \begin{enumerate}[(a)]
        \item $|\lambda|>0$ ;\label{cond:notempty}
        \item if $\lambda=\{l_1, l_2\}$ and \begin{math}N_2<f(\lambda)\end{math}, then 
        \begin{math}f(\lambda)-N_2 \leq\left\lfloor\frac{|l_1|}{2}\right\rfloor +\left\lfloor\frac{|l_2|}{2}\right\rfloor -2\end{math} 
        ;\label{cond:twochains}
        \item for all \begin{math}l\in\lambda, |l|>f(\lambda) - N_2\end{math}. \label{cond:ign}
    \end{enumerate}
\end{definition}

In Figure~\ref{fig:l1l2}, the subset $\{l_1,l_3\}\subset\Lambda$ is not perfect because it violates Condition~\ref{cond:twochains}, whereas $\Lambda=\{l_1,l_2,l_3\}$ violates Condition~\ref{cond:ign}.

In the following section, we describe a canonical construction algorithm for an optimal solution graph.

\subsection{Canonical Construction and Normalisation}\label{sec:construction}
The main idea is to restrict the search space to graphs constructed from perfect subsets of $\Lambda$. 
The following construction associates a candidate graph to every perfect subset $\lambda\subseteq\Lambda$. 
We first establish structural properties of the graphs returned by Algorithm~$\mathcal{A}_1$, and then show that minimising a specific function over perfect subsets yields an optimal solution graph.

Given an instance $(L_1, N_2, c)$ of $\mathcal{J}$ and a perfect subset $\lambda\subseteq\Lambda$, Algorithm~$\mathcal{A}_1$ constructs a graph $H$. 

\begin{algorithm}[b!]
    \SetAlgoLined
    \DontPrintSemicolon
    \KwIn{$L_1$ (coloured path), $N_2$ ($L_2$ length), $c$ (colour), $\lambda$ (perfect subset of $\Lambda$)}
    \KwOut{Graph $H$ constructed}
    Initialise graph $H$ to be isomorphic to $L_1$\;\label{alg:line:l1}
    Order paths of $\lambda$ in $L_1$ from $l_1$ to $l_{z}$ with $z=|\lambda|$, following the vertex numbering in $L_1$ \;\label{alg:line:order}
    Create $z-1$ vertices $v_1, v_2, \ldots v_{z-1}$ of colour $c$\;\label{alg:line:v}
    Set \begin{math}x = f(\lambda) =\left(\sum_{l_i \in \lambda} |l_i|\right) + (|\lambda|-1)\end{math} \;\label{alg:line:x}
    \uIf{\begin{math}N_2 \geq x\end{math} \quad \textnormal{\texttt{($H$ may have more added vertices)}}}{
        Create a path $a$ of $N_2 - x$ vertices of colour $c$ that can be empty\;\label{alg:line:a}
        \uIf{$z=1$}{
            Connect the last vertex of path $a$ to an extremity of path $l_1$ coinciding with an extremity of $L_1$; if none exists, connect it to either extremity of $l_1$ \;\label{alg:line:alink1}
        }
        \Else{
            \uIf{$|l_1|\geq|l_z|$ \label{alg:line:test}}{
                Connect the first vertex of $l_1$ to the last vertex of $a$ \;\label{alg:line:alink2}
            }
            \Else{
                Connect the last vertex of $l_z$ to the first vertex of $a$ \;\label{alg:line:alink3}
            }
            Connect $v_1$ to the last vertex of $l_1$ and the first vertex of $l_2$ \;\label{alg:line:y1link}
        }
    }
    \uElseIf{$N_2 < x$ \quad \textnormal{\texttt{($H$ will have ignored vertices)}}}{
        \uIf{$z = 2$}{
            Set \begin{math}y=\min\left\{x-N_2,\left\lfloor\frac{|l_1|}{2}\right\rfloor-1\right\}\end{math} and $y'=x-N_2-y$ \;\label{alg:line:y2}
            Connect $v_1$ to the $(|l_1|-y)$-th vertex of $l_1$ and the $(y'+1)$-th vertex of $l_2$ 
            \tcp*{Since $\lambda$ is perfect, $|l_2|>y'$}\label{alg:line:y2link}
        }
        \uElseIf{$z > 2$}{
            Set $y'= x - N_2$ the number of ignored vertices \;\label{alg:line:y3}
            Connect $v_1$ to the last vertex of $l_1$ and the $(y' +1)$-th vertex of $l_2$ \tcp*{Same comment than Line \ref{alg:line:y2link}}\label{alg:line:y3link}
        }
    }
    \ForEach{$2 \leq i < z$}{
        Connect $v_i$ to the last vertex of $l_i$ and the first vertex of $l_{i+1}$  \; \label{alg:line:chained}
    }
    \Return{$H$}\;
    \caption{$\mathcal{A}_1$ - Graph construction based on SGSC instance in $\mathcal{J}$ and a perfect subset $\lambda$}
    \label{alg:construct}
\end{algorithm}

Figure~\ref{fig:alg-ex} illustrates various graphs returned by Algorithm~$\mathcal{A}_1$ depending on $|\lambda|$ and $f(\lambda)$.
Observe that Algorithm~$\mathcal{A}_1$ constructs a graph in $O(N_1\cdot|\lambda|)$ time.

\begin{figure}[t]
    \centering
    \begin{subfigure}{0.9\textwidth}
        \centering
        \begin{adjustbox}{max width=\linewidth}
            \begin{tikzpicture}[node distance=1cm]
                \node[band, fill=black, minimum height=.35cm, minimum width=1.5cm] (l1) {};
                \node[band, fill=white, right of=l1, right=-.25cm, minimum height=.35cm, minimum width=2cm] (l2) {};
                \node[band, fill=black, right of=l2, right=0cm, minimum height=.35cm, minimum width=0.5cm] (l3) {};
                \node[band, fill=white, right of=l3, right=-.75cm, minimum height=.35cm, minimum width=3cm] (l4) {};
                \node[band, fill=black, right of=l4, right=.5cm, minimum height=.35cm, minimum width=3.5cm] (l5) {};
                \node[band, fill=white, right of=l5, right=.75cm, minimum height=.35cm, minimum width=2.5cm] (l6) {};
                \node[band, fill=black, right of=l6, right=.25cm, minimum height=.35cm, minimum width=1cm] (l7) {};
                \node[band, fill=white, right of=l7, right=-.5cm, minimum height=.35cm, minimum width=4cm] (l8) {};
                
                \node[band, fill=white, above of=l8, above=-.4cm, minimum width=1cm] (a) {};
                \draw ([xshift=-.25cm,yshift=.1cm]a.south west) node {\large$a$};
                
                \draw[-] (l8.north east) to[out=90, in=0] (a.south east);
            \end{tikzpicture}
        \end{adjustbox}
        \caption{Example of graph returned if \begin{math}N_2>f(\lambda)\end{math} and $|\lambda|=1$.}
        \label{fig:alg-chainA1}
    \end{subfigure}
    \begin{subfigure}{0.9\textwidth}
        \centering
        \begin{adjustbox}{max width=\linewidth}
            \begin{tikzpicture}[node distance=1cm]
                \node[] (x) {} ;
                \node[band, fill=black, below of=x, minimum height=.35cm, minimum width=1.5cm] (l1) {};
                \node[band, fill=white, right of=l1, right=-.25cm, minimum height=.35cm, minimum width=2cm] (l2) {};
                \node[band, fill=black, right of=l2, right=0cm, minimum height=.35cm, minimum width=0.5cm] (l3) {};
                \node[band, fill=white, right of=l3, right=-.75cm, minimum height=.35cm, minimum width=3cm] (l4) {};
                \node[band, fill=black, right of=l4, right=.5cm, minimum height=.35cm, minimum width=3.5cm] (l5) {};
                \node[band, fill=white, right of=l5, right=.75cm, minimum height=.35cm, minimum width=2.5cm] (l6) {};
                \node[band, fill=black, right of=l6, right=.25cm, minimum height=.35cm, minimum width=1cm] (l7) {};
                \node[band, fill=white, right of=l7, right=-.5cm, minimum height=.35cm, minimum width=4cm] (l8) {};
                
                \node[voidNode, above of=l3, above=-.6cm] (v1) {};
                \draw ([xshift=.25cm,yshift=.1cm]v1.north east) node {\large$v_1$};
                \node[voidNode, above of=l5, above=-.6cm] (v2) {};
                \draw ([xshift=.25cm,yshift=.1cm]v2.north east) node {\large$v_2$};
                \node[band, fill=white, above of=l6, above=-.4cm, minimum width=1cm] (a) {};
                \draw ([xshift=-.25cm,yshift=.1cm]a.south west) node {\large$a$};
                
                \node[left of=l3, right=.72cm, above=-.11cm] (i1) {};
                \node[right of=l3, right=-.72cm, above=-.11cm] (i2) {};
                \node[left of=l5, right=-.88cm, above=-.11cm] (i3) {};
                \node[right of=l5, right=.88cm, above=-.11cm] (i4) {};
                \node[left of=l7, right=.45cm, above=-.11cm] (i5) {};
                \node[left of=l7, right=1.15cm, above=.45cm] (i6) {};
                
                \draw[-] (l3.north west) to[out=90, in=-140] (v1);
                \draw[-] (v1) to[out=-40, in=90] (l3.north east);
                \draw[-] (l5.north west) to[out=60, in=180] (v2);
                \draw[-] (v2) to[out=0, in=120] (l5.north east);
                \draw[-] (l7.north west) to[out=90, in=0] (a.south east);
            \end{tikzpicture}
        \end{adjustbox}
        \caption{Example of graph returned if \begin{math}N_2>f(\lambda)\end{math} and $|\lambda|>1$.}
        \label{fig:alg-chainA2}
    \end{subfigure}
    \begin{subfigure}{0.9\textwidth}
        \centering
        \begin{adjustbox}{max width=\linewidth}
            \begin{tikzpicture}[node distance=1cm]
                \node[] (x) {} ;
                \node[band, fill=black, below of=x, minimum height=.35cm, minimum width=1.5cm] (l1) {};
                \node[band, fill=white, right of=l1, right=-.25cm, minimum height=.35cm, minimum width=2cm] (l2) {};
                \node[band, fill=black, right of=l2, right=0cm, minimum height=.35cm, minimum width=0.5cm] (l3) {};
                \node[band, fill=white, right of=l3, right=-.75cm, minimum height=.35cm, minimum width=3cm] (l4) {};
                \node[band, fill=black, right of=l4, right=.5cm, minimum height=.35cm, minimum width=3.5cm] (l5) {};
                \node[band, fill=white, right of=l5, right=.75cm, minimum height=.35cm, minimum width=2.5cm] (l6) {};
                \node[band, fill=black, right of=l6, right=.25cm, minimum height=.35cm, minimum width=1cm] (l7) {};
                \node[band, fill=white, right of=l7, right=-.5cm, minimum height=.35cm, minimum width=4cm] (l8) {};
                \node[rectangle, pattern={Lines[angle=45,yshift=.5pt]}, draw=none, minimum width=.8cm, minimum height=.35cm, left of=l6, right=1.4cm] (ig1) {} ;
                \node[rectangle, pattern={Lines[angle=45,yshift=.5pt]}, draw=none, minimum width=1.6cm, minimum height=.35cm, left of=l8, right=-1.25cm] (ig1) {} ;
                
                \node[voidNode, above of=l7, above=-.6cm] (v1) {};
                \draw ([xshift=.25cm,yshift=.1cm]v1.north east) node {\large$v_1$};
                
                \node[voidNode, left of=l7, right=-.7cm] (i1) {};
                \node[voidNode, right of=l7, right=.75cm] (i2) {};
                
                \draw[-] (i1) to[out=70, in=180] (v1);
                \draw[-] (v1) to[out=0, in=110] (i2);
            \end{tikzpicture}
        \end{adjustbox}
        \caption{Example of graph returned if \begin{math}N_2<f(\lambda)\end{math} and $|\lambda|=2$.}
        \label{fig:alg-ign1}
    \end{subfigure}
    \begin{subfigure}{0.9\textwidth}
        \centering
        \begin{adjustbox}{max width=\linewidth}
            \begin{tikzpicture}[node distance=1cm]
                \node[] (x) {} ;
                \node[band, fill=black, below of=x, minimum height=.35cm, minimum width=1.5cm] (l1) {};
                \node[band, fill=white, right of=l1, right=-.25cm, minimum height=.35cm, minimum width=2cm] (l2) {};
                \node[band, fill=black, right of=l2, right=0cm, minimum height=.35cm, minimum width=0.5cm] (l3) {};
                \node[band, fill=white, right of=l3, right=-.75cm, minimum height=.35cm, minimum width=3cm] (l4) {};
                \node[band, fill=black, right of=l4, right=.5cm, minimum height=.35cm, minimum width=3.5cm] (l5) {};
                \node[band, fill=white, right of=l5, right=.75cm, minimum height=.35cm, minimum width=2.5cm] (l6) {};
                \node[band, fill=black, right of=l6, right=.25cm, minimum height=.35cm, minimum width=1cm] (l7) {};
                \node[band, fill=white, right of=l7, right=-.5cm, minimum height=.35cm, minimum width=4cm] (l8) {};
                \node[rectangle, pattern={Lines[angle=45,yshift=.5pt]}, draw=none, minimum width=1.6cm, minimum height=.35cm, left of=l4, right=-.5cm] (ig1) {} ;
                
                \node[voidNode, above of=l3, above=-.6cm] (v1) {};
                \draw ([xshift=.25cm,yshift=.1cm]v1.north east) node {\large$v_1$};
                \node[voidNode, above of=l5, above=-.6cm] (v2) {};
                \draw ([xshift=.25cm,yshift=.1cm]v2.north east) node {\large$v_2$};
                
                \node[left of=l3, right=.68cm, above=-.11cm] (i1) {};
                \node[voidNode, right of=l3, right=.7cm] (i2) {};
                \node[left of=l5, right=-.88cm, above=-.11cm] (i3) {};
                \node[right of=l5, right=.88cm, above=-.11cm] (i4) {};
                
                \draw[-] (l3.north west) to[out=70, in=-130] (v1);
                \draw[-] (v1) to[out=0, in=110] (i2);
                \draw[-] (l5.north west) to[out=60, in=180] (v2);
                \draw[-] (v2) to[out=0, in=120] (l5.north east);
            \end{tikzpicture}
        \end{adjustbox}
        \caption{Example of graph returned if \begin{math}N_2<f(\lambda)\end{math} and $|\lambda|>2$.}
        \label{fig:alg-ign2}
    \end{subfigure}
    \caption{Illustration of different graphs returned by Algorithm $\mathcal{A}_1$, given the same path $L_1$ and varying $N_2$ and $\lambda$. White paths are of colour $c$, and their hatched parts belong to the set $V_{1-2}$ of ignored vertices. Figures \ref{fig:alg-chainA1} and \ref{fig:alg-chainA2} represent the execution of Lines \ref{alg:line:a} to \ref{alg:line:alink3}. Figures \ref{fig:alg-ign1} and \ref{fig:alg-ign2} represent the execution of Lines \ref{alg:line:y2} to \ref{alg:line:y3link}. Figures \ref{fig:alg-chainA2} and \ref{fig:alg-ign2} represent the execution of Lines \ref{alg:line:chained}.}
    \label{fig:alg-ex}
\end{figure}

The proof is organised in three stages. 
We first establish the direct properties of the construction: connectivity, order, and the structure of its monochromatic components. 
We then characterise the only ways in which the construction can create an additional occurrence of $L_1$ and show how every such obstruction can be repaired. 
Finally, we prove that at least one optimal solution has the canonical form generated by $\mathcal{A}_1$.

Hereafter, for any graph $H$ returned by Algorithm~$\mathcal{A}_1$, we denote by $H_1$ the subpath constructed at Line~\ref{alg:line:l1}, and by $a$ the path constructed at Line~\ref{alg:line:a}.
Observe that, unlike the notation used for solution graphs, $H_1$ here refers to the canonical copy of $L_1$ created by Algorithm~$\mathcal{A}_1$, which may not be unique in $H$.

\subsubsection{Direct Properties of the Construction}

\begin{lemma}\label{lem:algo1}
    Given an instance in $\mathcal{J}$ with set $\Lambda$ and a non-empty subset $\lambda\subseteq\Lambda$, Algorithm~$\mathcal{A}_1$ returns a connected graph. 
\end{lemma}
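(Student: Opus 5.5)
The plan is to start from the observation that the graph initialised at Line~\ref{alg:line:l1} is a copy $H_1$ of $L_1$, hence a path and connected. Every later step of Algorithm~$\mathcal{A}_1$ only adds vertices (the path $a$ and the vertices $v_1,\ldots,v_{z-1}$) and edges; no vertex or edge is ever removed. Connectedness is therefore preserved as long as each newly created vertex ends up joined, by a path, to a vertex already in $H_1$. It suffices to check this for each new vertex, splitting according to the branches of the algorithm.

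First, the vertices $v_i$. For $2\leq i<z$, Line~\ref{alg:line:chained} joins $v_i$ to the last vertex of $l_i$, which lies in $H_1$. For $v_1$ (when $z\geq 2$), exactly one of Lines~\ref{alg:line:y1link}, \ref{alg:line:y2link} or~\ref{alg:line:y3link} is executed, and each joins $v_1$ to a vertex of $l_1$ or $l_2$. For Lines~\ref{alg:line:y2link} and~\ref{alg:line:y3link} I will verify that the indexed vertices exist. The $(|l_1|-y)$-th vertex of $l_1$ exists because $0\leq y\leq \lfloor |l_1|/2\rfloor -1<|l_1|$; here $y\geq 0$ needs $|l_1|\geq 2$, which follows from Condition~\ref{cond:twochains} (the right-hand side must be at least $f(\lambda)-N_2\geq 1$). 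The $(y'+1)$-th vertex of $l_2$ exists by Condition~\ref{cond:ign}, since $y'\leq x-N_2<|l_2|$. In every case $v_1$ is adjacent to $H_1$. When $z=1$ no $v_i$ is created.

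Second, the path $a$, created only when $N_2\geq x$. Since $a$ is itself a path, it is connected, and it suffices that one of its vertices is joined to $H_1$. This is done by Line~\ref{alg:line:alink1} if $z=1$, and by Line~\ref{alg:line:alink2} or~\ref{alg:line:alink3} if $z\geq2$, each connecting an endpoint of $a$ to a vertex of $l_1$ or $l_z$. The degenerate case $N_2=x$, where $a$ is empty, adds no vertex and needs no argument.

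Since every vertex of $H$ lies in $H_1$, in $a$, or among the $v_i$, and each of these is connected to $H_1$, the graph $H$ is connected. The only mildly delicate step is confirming that the indices used at Lines~\ref{alg:line:y2link} and~\ref{alg:line:y3link} refer to existing vertices, and this is exactly what the perfectness conditions guarantee. The lemma as stated only assumes $\lambda$ non-empty, but the algorithm's input is declared to be a perfect subset, so I will invoke these conditions there.
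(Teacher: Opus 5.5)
Your argument is correct and is essentially the paper's proof. Both show that each $v_i$ and the path $a$ are attached to $H_1$ by the corresponding lines of $\mathcal{A}_1$; you also check, using perfectness, that the indexed vertices exist, which the paper only does later in the proof of Lemma~\ref{lem:tree}.

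One justification needs fixing: $|l_1|\geq 2$ does not follow from Condition~\ref{cond:twochains}. For example, $|l_1|=1$, $|l_2|=7$ and $f(\lambda)-N_2=1$ satisfy that condition. It does follow from Condition~\ref{cond:ign}, since $|l_1|>f(\lambda)-N_2\geq 1$.
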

\begin{proof}
    Let $H$ be the graph returned by Algorithm~$\mathcal{A}_1$. 
    By Line~\ref{alg:line:l1}, graph $H$ includes a subpath $H_1\approx L_1$. 
    Let $x=f(\lambda)$, computed at Line~\ref{alg:line:x}.
    
    If $|\lambda|=1$, then there is no vertex created in Line~\ref{alg:line:v}.
    Otherwise, since $|\lambda|>0$, graph $H$ includes $|\lambda|-1$ vertices $v_1, \ldots, v_{|\lambda|-1}$ created in Line~\ref{alg:line:v}. 
    We will show that the vertices created in Line~\ref{alg:line:v} are connected to the subpath $H_1$.
    In every case comparing $x$ to $N_2$, vertex $v_1$ is adjacent to a vertex in $H_1$ either by Lines~\ref{alg:line:y1link},~\ref{alg:line:y2link} or~\ref{alg:line:y3link}.
    Moreover, Line~\ref{alg:line:chained} ensures that every vertex $v_i$, $2\leq i<|\lambda|$, is adjacent to two vertices in $H_1$.
    Hence, all vertices created in Line~\ref{alg:line:v} belong to the same connected component as $H_1$.
    
    If $N_2>x$, Algorithm~$\mathcal{A}_1$ creates a path $a$ of $N_2-x$ vertices at Line~\ref{alg:line:a}. 
    If $|\lambda|=1$, Line~\ref{alg:line:alink1} connects one extremity of $a$ to a vertex in $H_1$; otherwise, this is done at Line~\ref{alg:line:alink2} or~\ref{alg:line:alink3}.
    Hence, every vertex of $a$ belongs to the same connected component as $H_1$.
    Therefore, graph $H$ is connected.
\end{proof}

Having established the connectivity of $H$, we now study the size $|V(H)|$ for the graph $H$ returned by Algorithm~$\mathcal{A}_1$ running with a non-empty subset $\lambda$.

\begin{lemma}\label{lem:Hsize}
    Given an instance in $\mathcal{J}$ with its set $\Lambda$ and a non-empty subset $\lambda$ of $\Lambda$, Algorithm~$\mathcal{A}_1$ returns a graph with \begin{math}N_1+|\lambda|-1+\max\{0,N_2-f(\lambda)\}\end{math} vertices. 
\end{lemma}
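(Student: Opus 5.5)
The count follows by tracing which lines of Algorithm~$\mathcal{A}_1$ create vertices; no line other than Lines~\ref{alg:line:l1}, \ref{alg:line:v} and~\ref{alg:line:a} adds vertices, since all remaining lines only add edges between existing vertices. We therefore count the contribution of each of these three lines and check that the two branches on the comparison of $N_2$ with $x=f(\lambda)$ together give the stated formula.

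First, Line~\ref{alg:line:l1} initialises $H$ as a copy of $L_1$, which contributes exactly $N_1$ vertices. Second, Line~\ref{alg:line:v} creates the vertices $v_1,\ldots,v_{z-1}$ with $z=|\lambda|$; because $\lambda$ is non-empty, $z\geq 1$, so this number $z-1=|\lambda|-1$ is non-negative and correct even when $z=1$, in which case no vertex is created. These two lines are executed unconditionally.

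Third, we split on the test at Line~\ref{alg:line:x}. If $N_2\geq x$, Line~\ref{alg:line:a} creates a path $a$ with $N_2-x$ vertices, possibly none, and in this case $N_2-x=\max\{0,N_2-f(\lambda)\}$. Lines~\ref{alg:line:alink1} to~\ref{alg:line:y1link} only add edges. When $N_2=x$, the path $a$ is empty, and the connection steps involving its last or first vertex add nothing; this is consistent with the count. If $N_2<x$, no line creates a vertex, since Lines~\ref{alg:line:y2} to~\ref{alg:line:y3link} only set integers and add edges. The contribution is then $0=\max\{0,N_2-f(\lambda)\}$. Finally, the loop at Line~\ref{alg:line:chained} only adds edges.

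Summing the three contributions gives $N_1+|\lambda|-1+\max\{0,N_2-f(\lambda)\}$. No step is a real obstacle; the only point needing care is the degenerate case $N_2=x$ with an empty path $a$, where the attaching lines should be read as doing nothing. Either way, they never add vertices.
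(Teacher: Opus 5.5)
Your proof is correct and follows the paper's argument: both count the $N_1$ vertices of the initial copy of $L_1$, the $|\lambda|-1\geq 0$ vertices $v_i$, and the $N_2-f(\lambda)$ vertices of the path $a$ when $N_2>f(\lambda)$ (none otherwise), all other lines adding only edges. Your explicit remark that the path $a$ is empty when $N_2=f(\lambda)$ matches the paper's split into $N_2>f(\lambda)$ and $N_2\leq f(\lambda)$.
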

\begin{proof}
    Let $H$ be the graph returned by Algorithm~$\mathcal{A}_1$. 
    According to Line~\ref{alg:line:l1}, graph $H$ includes one subpath $H_1\approx L_1$ with $N_1$ vertices. 

    Since set $\lambda$ is non-empty, Line~\ref{alg:line:v} adds $|\lambda|-1\geq0$ vertices to $H$.
    If \begin{math}N_2>f(\lambda)\end{math}, Line~\ref{alg:line:a} adds $N_2-f(\lambda)$ vertices to $H$.
    If \begin{math}N_2\leq f(\lambda)\end{math}, no other vertex is added to $H$.
    Hence, $H$ contains \begin{math}N_1+|\lambda|-1+\max\{0,N_2-f(\lambda)\}\end{math} vertices.
\end{proof}

Since every perfect subset satisfies Condition~\ref{cond:notempty} of Definition~\ref{def:perfect}, Algorithm~$\mathcal{A}_1$ returns, for every perfect subset $\lambda$, a connected graph with \begin{math}N_1+|\lambda|-1+\max\{0,N_2-f(\lambda)\}\end{math} vertices.

We now show that the vertices of colour $c$ in $H$ induce a forest containing a distinguished tree associated with $\lambda$.

\begin{lemma}\label{lem:tree}
    Given an instance in $\mathcal{J}$ with its set $\Lambda$ and a perfect subset $\lambda\subseteq\Lambda$, let $H$ be the graph returned by Algorithm~$\mathcal{A}_1$. 
    The subgraph induced by the vertices of colour $c$ in $H$ consists of : 
    \begin{enumerate}[(i)]
        \item one connected component $T_\lambda$, which is a tree containing the subpaths of $\lambda$, the vertices $v_1,\ldots,v_{|\lambda|-1}$, and the path $a$;
        \item for each subpath $l\in\Lambda\setminus\lambda$, one connected component consisting only of $l$.
    \end{enumerate}
\end{lemma}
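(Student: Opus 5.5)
The plan is to describe explicitly the subgraph $H[c]$ induced by the colour-$c$ vertices of the output $H$ of Algorithm~$\mathcal{A}_1$, and to read off its components by counting edges.

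\textbf{Vertex and edge inventory.} The colour-$c$ vertices of $H$ are three kinds: those of $H_1$ (the disjoint union of the maximal subpaths in $\Lambda$), the vertices $v_1,\ldots,v_{|\lambda|-1}$ created at Line~\ref{alg:line:v}, and the vertices of $a$ (present only when $N_2>f(\lambda)$). The edges of $H[c]$ are likewise of three kinds:
\begin{itemize}
\item edges of $L_1$ joining two colour-$c$ vertices, which lie inside maximal subpaths of $\Lambda$, since two consecutive colour-$c$ vertices of $L_1$ belong to the same maximal subpath;
\item the internal edges of $a$;
\item the edges added at Lines~\ref{alg:line:alink1}--\ref{alg:line:chained}.
\end{itemize}
Every edge of the third kind has at least one endpoint among the $v_i$ or in $a$. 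Its other endpoint is a colour-$c$ vertex of some subpath $l_i\in\lambda$, or another vertex of $a$. So no added edge touches a subpath of $\Lambda\setminus\lambda$, and no added edge touches a vertex of $L_1$ of another colour. Hence each $l\in\Lambda\setminus\lambda$ keeps only its own path edges in $H[c]$ and is a connected component of $H[c]$ by itself. This gives (ii).

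\textbf{Connectivity of $T_\lambda$.} Let $T_\lambda$ be the subgraph induced by the union of the subpaths of $\lambda$, the $v_i$, and $a$. For connectivity:
\begin{itemize}
\item in every branch of the algorithm, $v_1$ is joined to one vertex of $l_1$ and one vertex of $l_2$ (Lines~\ref{alg:line:y1link}, \ref{alg:line:y2link}, \ref{alg:line:y3link});
\item for $2\le i<z$, Line~\ref{alg:line:chained} joins $v_i$ to $l_i$ and $l_{i+1}$;
\item when $a$ exists, exactly one of its extremities is joined to an end subpath of $\lambda$, at Line~\ref{alg:line:alink1}, \ref{alg:line:alink2} or \ref{alg:line:alink3}.
\end{itemize}
Since the $l_i$ are themselves paths, $T_\lambda$ is connected. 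This argument mirrors Lemma~\ref{lem:algo1}, restricted to colour~$c$.

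\textbf{Acyclicity of $T_\lambda$.} I will count edges. Write $z=|\lambda|$ and $|a|=\max\{0,N_2-f(\lambda)\}$. Then
\[
|V(T_\lambda)| = \sum_{i}|l_i| + (z-1) + |a|,
\]
and the edges are:
\begin{itemize}
\item $\sum_i(|l_i|-1)$ path edges inside the subpaths;
\item $2(z-1)$ edges incident to the $v_i$, which are distinct edges since each $v_i$ receives two distinct neighbours in distinct subpaths;
\item when $a$ is non-empty, $|a|-1$ internal edges of $a$ plus one connecting edge, i.e.\ exactly $|a|$ edges in total (and $0$ when $a$ is empty).
\end{itemize}
The total is $\sum_i|l_i| + (z-1) + |a| - 1 = |V(T_\lambda)|-1$. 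A connected graph with one fewer edge than vertices is a tree, which gives (i).

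\textbf{Main obstacle.} The delicate point is verifying that the edge list of $H[c]$ is exactly the one enumerated above, with no coincidences. Two cases need care:
\begin{itemize}
\item In the $N_2<f(\lambda)$ branch with $z=2$, I must check that the attachment indices are genuine vertices of $l_1$ and $l_2$. This is where perfectness enters: Conditions~\ref{cond:twochains} and~\ref{cond:ign} guarantee $1\le |l_1|-y\le |l_1|$ and $y'+1\le|l_2|$. The same check applies for $z>2$.
\item When $z=1$ or when $a$ is empty, the formulas must still hold: no $v_i$ exist, and no connecting edge of $a$ is added.
\end{itemize}
Once these index checks are done, the counting argument above is routine.
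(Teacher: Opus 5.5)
Your proof is correct and uses essentially the same counting argument as the paper. The paper keeps the cyclomatic number $\mu=m-n+p$ of the whole colour-$c$ subgraph equal to $0$ through each construction step, with each new edge merging two components. You instead first isolate the members of $\Lambda\setminus\lambda$, then check directly that $T_\lambda$ is connected with $|E(T_\lambda)|=|V(T_\lambda)|-1$; this is the same bookkeeping.

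As in the paper, the index checks for Lines~\ref{alg:line:y2link} and~\ref{alg:line:y3link} need only Condition~\ref{cond:ign}. It forces $|l_1|\geq 2$, hence $y\geq 0$, and it gives $y'<|l_2|$. Your appeal to Condition~\ref{cond:twochains} there is therefore unnecessary but harmless.
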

\begin{proof}
    Let $H_c$ be the subgraph of $H$ induced by the vertices in $H$ coloured in $c$.
    Denote by $n$, $m$ and $p$ respectively the number of vertices, edges and connected components in $H_c$.
    Its cyclomatic number $\mu$ is $\mu=m-n+p$.
    Since $\mu=0$ if and only if every connected component is acyclic, proving $\mu=0$ shows that each component of $H_c$ is a tree or a path.
    
    According to Line~\ref{alg:line:l1}, subgraph $H_c$ initially consists of the subpaths of $\Lambda$ contained in $H_1$.
    If $|\lambda|>1$, Line~\ref{alg:line:v} additionally creates the vertices $v_1,\ldots,v_{|\lambda|-1}$ of colour $c$.
    Hence, at this stage, 
    \begin{math}n=\sum_{l_i\in\Lambda}|l_i| +|\lambda|-1,\end{math}
    \begin{math}m=\sum_{l_i\in\Lambda}(|l_i|-1)\end{math} and 
    \begin{math}p=|\Lambda|+|\lambda|-1.\end{math}
    Therefore, 
    \begin{displaymath}
        \mu=\sum_{l_i\in\Lambda}(|l_i|-1)-\sum_{l_i\in\Lambda}|l_i| -(|\lambda|-1)+|\Lambda|+|\lambda|-1=0.
    \end{displaymath}
    
    If \begin{math}N_2>f(\lambda)\end{math}, Line~\ref{alg:line:a} creates a path $a$ with $|a|=N_2-f(\lambda)$ vertices of colour $c$. 
    By Lines~\ref{alg:line:alink1},~\ref{alg:line:alink2} and~\ref{alg:line:alink3}, one extremity of $a$ is adjacent to a vertex of a subpath in $\lambda$.
    Thus, adding $a$ increases both $n$ and $m$ by $|a|$, while leaving $p$ unchanged, so $\mu$ remains equal to $0$.
    
    We now consider the vertices $v_1,\ldots,v_{|\lambda|-1}$.
    Since $\lambda$ satisfies Condition~\ref{cond:ign}, the vertices of $l_1$ and $l_2$ selected in Lines~\ref{alg:line:y1link},~\ref{alg:line:y2link} and~\ref{alg:line:y3link} exist in $l_1$ and $l_2$.
    If $|\lambda|>1$, one of these lines connects $v_1$ to one vertex in $l_1$ and one vertex in $l_2$.
    Then, by Line~\ref{alg:line:chained}, for each $1< i\leq|\lambda|-1$, vertex $v_i$ is adjacent to the last vertex of $l_i$ and the first vertex of $l_{i+1}$. 
    
    By induction on $i$, after processing vertices $v_1,\ldots,v_i$, the subpaths $l_1,\ldots,l_{i+1}$ together with $v_1,\ldots,v_i$ belong to the same connected component $H_c$.
    Moreover, each added edge joins two distinct connected components, decreasing $p$ by one while increasing $m$ by one. Hence, $\mu$ remains equal to $0$ throughout the construction.

    Consequently, every connected component of $H_c$ is acyclic.
    Furthermore, all subpaths of $\lambda$, the vertices $v_1,\ldots,v_{|\lambda|-1}$ and, when non-empty, the path $a$, belong to the same connected component in $H_c$. This component is therefore a tree, denoted $T_\lambda$.
    No vertex is connected to a subpath of $\Lambda\setminus\lambda$, so each subpath in $\Lambda\setminus\lambda$ forms a distinct connected component of $H_c$.
\end{proof}

By Lemma~\ref{lem:tree}, the vertices of colour $c$ in a graph returned by Algorithm~$\mathcal{A}_1$ induce a unique tree, denoted $T_\lambda$, containing the subpaths of $\lambda$, the vertices $v_1,\ldots,v_{|\lambda|-1}$ and the path $a$.
Vertices of degree greater than two in $T_\lambda$ are called \emph{junction} vertices.
Depending on the values $N_2$, $|\lambda|$ and $f(\lambda)$, the tree $T_\lambda$ contains at most two junction vertices, as illustrated in Figure~\ref{fig:alg-ex}. A junction vertex has degree at most $4$ when it is the only \emph{used} vertex in its subpath.
Given a junction vertex $j$, each connected component of $T_\lambda \setminus \{j\}$ defines a \emph{branch} of $j$, namely each longest subpath of colour $c$ in this component with one extremity adjacent to $j$. Such the longest subpath may not be unique if the component contains another junction vertex.

Figure~\ref{fig:pfork} illustrates a tree $T$ with a unique junction vertex $u$ of degree $3$. Its three branches are denoted by $p$, $q$ and $r$. 

\begin{figure}[ht]
    \centering
    \begin{adjustbox}{max width=0.55\textwidth}
        \begin{tikzpicture}[node distance=1cm]
            \node[band, fill=white, minimum width=5cm] (l1) {};
            \node[voidNode, right of=l1, right=2cm] (x1) {};
            \draw ([xshift=.1cm,yshift=.3cm]x1.north west) node {\Large$u$};
            \node[band, above of=x1, right=1cm, fill=white, minimum width=4cm] (l2) {};
            \node[band, below of=x1, right=1cm, fill=white, minimum width=3cm] (l3) {};

            \draw[-] (x1) -- (l1);
            \draw[-] (x1) to[out=30,in=180] (l2);
            \draw[-] (x1) to[out=-30,in=180] (l3);
            
            \draw [decorate,decoration={brace,amplitude=5pt,raise=0pt}]
            ([yshift=5pt]l1.north west) -- ([yshift=5pt]l1.north east) 
            node [black, midway, yshift=12pt] {\Large$p$};
            \draw [decorate,decoration={brace,amplitude=5pt,raise=0pt}]
            ([yshift=5pt]l2.north west) -- ([yshift=5pt]l2.north east) 
            node [black, midway, yshift=12pt] {\Large$q$};
            \draw [decorate,decoration={brace,mirror,amplitude=5pt,raise=0pt}]
            ([yshift=-5pt]l3.south west) -- ([yshift=-5pt]l3.south east) 
            node [black, midway, yshift=-12pt] {\Large$r$};
        \end{tikzpicture}
    \end{adjustbox}
    \caption{Illustration of a tree of vertices of colour $c$ with three branches of $p$, $q$ and $r$ vertices all having an extremity linked to the vertex $u$.}
    \label{fig:pfork}
\end{figure}

Based on the previous result, we elaborate on the structure of the subpaths of colour $c$ within graph $H$ returned by Algorithm~$\mathcal{A}_1$. 

\begin{lemma}\label{lem:l2-added}
    Given an instance in $\mathcal{J}$ with its set $\Lambda$ and a perfect subset $\lambda\subseteq\Lambda$ such that \begin{math}N_2\geq f(\lambda)\end{math}, let $H$ be the graph returned by Algorithm~$\mathcal{A}_1$. 
    Then, $H$ contains a unique maximum subpath of colour $c$, and this subpath contains exactly $N_2$ vertices.
\end{lemma}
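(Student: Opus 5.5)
By Lemma~\ref{lem:tree}, every subpath of colour $c$ in $H$ lies either inside $T_\lambda$ or inside a single component $l\in\Lambda\setminus\lambda$. The plan is to bound the latter by $N_2-1$ and to show that the former contains exactly one path of $N_2$ vertices and none longer. Throughout, ``subpath'' means an induced path; since the colour-$c$ subgraph is a forest, any path in it is automatically induced.

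First we handle the components outside $T_\lambda$. Let $l\in\Lambda\setminus\lambda$. Because the instance is in $\mathcal{J}$, the sample is without inclusion, so $L_2$ is not included in $L_1$. Hence every maximal subpath in $\Lambda$ has fewer than $N_2$ vertices, and $|l|<N_2$. The same bound applies to every $l_i\in\lambda$.

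Next we describe $T_\lambda$ explicitly in the case $N_2\geq f(\lambda)$, following Lines~\ref{alg:line:a}--\ref{alg:line:alink3} and Line~\ref{alg:line:chained}.

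\textbf{Case $|\lambda|=1$.} Here $T_\lambda$ is the path $a$ attached to an extremity of $l_1$. It is a single path of $|a|+|l_1|=N_2-f(\lambda)+|l_1|=N_2$ vertices. Its unique maximum subpath is therefore itself.

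\textbf{Case $|\lambda|=z\geq2$.} Each $v_i$ joins the last vertex of $l_i$ to the first vertex of $l_{i+1}$. Thus $l_1,v_1,l_2,\dots,v_{z-1},l_z$ form a single path $Q$ with $f(\lambda)$ vertices. The path $a$ is attached to the first vertex of $l_1$ when $|l_1|\geq|l_z|$, and otherwise to the last vertex of $l_z$. In both cases $a$ is attached at an extremity of $Q$. So $T_\lambda$ is a path of $f(\lambda)+N_2-f(\lambda)=N_2$ vertices, with no junction vertex at all, and its unique maximum subpath is $T_\lambda$ itself.

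It follows that $T_\lambda$ is a path on exactly $N_2$ vertices. Any colour-$c$ subpath of $H$ with at least $N_2$ vertices must lie in a single component of the colour-$c$ forest. By the first step it cannot lie in a component outside $T_\lambda$. Inside $T_\lambda$, which is itself a path on $N_2$ vertices, the only subpath with $N_2$ vertices is the whole of $T_\lambda$. This gives both uniqueness and the exact size $N_2$.

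The main obstacle is a subtlety in the construction that must be checked, not just asserted. Line~\ref{alg:line:alink1} attaches $a$ to ``an extremity of $l_1$'', and Lines~\ref{alg:line:alink2}--\ref{alg:line:alink3} attach it to the first vertex of $l_1$ or the last vertex of $l_z$. We must confirm that these vertices are endpoints of $Q$, so that no junction vertex is created and $T_\lambda$ stays a path. For $|\lambda|\geq2$ this holds because $v_1$ uses the \emph{last} vertex of $l_1$ (Line~\ref{alg:line:y1link}) and $v_{z-1}$ uses the \emph{first} vertex of $l_z$, so the opposite ends remain free. The degenerate case $|l_1|=1$ needs separate care: there the first and last vertex of $l_1$ coincide, and that vertex gets degree $2$, which is still consistent with $T_\lambda$ being a path. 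The case $N_2=f(\lambda)$ with $a$ empty is trivially covered by the same argument.
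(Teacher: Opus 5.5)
Your proof is correct and follows essentially the same route as the paper. Both invoke Lemma~\ref{lem:tree}, show that $T_\lambda$ is a path on exactly $N_2$ vertices, and use the without-inclusion hypothesis to bound every other colour-$c$ component by $N_2-1$; the only difference is that the paper gets the path structure from a degree-at-most-two argument, whereas you write $T_\lambda$ explicitly as the concatenation of $a,l_1,v_1,\dots,l_z$. Your explicit handling of the degenerate case $|l_1|=1$, where $a$ and $v_1$ attach to the same vertex, usefully clarifies the paper's remark that each extremity is used by at most one adjacency.
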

\begin{proof}
    By Lemma~\ref{lem:tree}, the vertices of colour $c$ in $H$ induce a collection of paths together with a tree $T_\lambda$ containing all the subpaths of $\lambda$ as well as all the vertices $v_i$ for $1\leq i\leq |\lambda|-1$. 
    Moreover, since \begin{math}N_2\geq f(\lambda)\end{math}, $T_\lambda$ contains a path $a$, created at Line~\ref{alg:line:a}, with $N_2-f(\lambda)$ vertices.
    Hence, $|V(T_\lambda)| = N_2$.

    In addition, each vertex $v_i$ is adjacent to the last vertex of $l_i\in\lambda$ and the first vertex of $l_{i+1}\in\lambda$. Similarly, one extremity of the path $a$ is adjacent to exactly one extremity of a subpath in $\lambda$. 
    Moreover, each extremity of a subpath in $\lambda$ is used by at most one such adjacency. Hence, no vertex of $T_\lambda$ has degree greater than two. Since $T_\lambda$ is connected, it follows that $T_\lambda$ is itself a path.

    Finally, because the instance belongs to $\mathcal{J}$, the sample $\{L_1,L_2\}$ is \emph{without inclusion}. Hence, every subpath $l\in\Lambda$ satisfies $|l|<N_2$.
    Therefore, every monochromatic subpath of colour $c$ in $H$ distinct from $T_\lambda$ contains strictly fewer than $N_2$ vertices.

    Consequently, $T_\lambda$ is the unique maximum subpath of colour $c$ in $H$.
\end{proof}

\begin{lemma}\label{lem:l2-ign}
    Given an instance in $\mathcal{J}$ with its set $\Lambda$ and a perfect subset $\lambda$ of $\Lambda$ with $N_2< f(\lambda)$, let $H$ be the graph returned by $\mathcal{A}_1$. 
    The maximum subpath of colour $c$ in $H$ is unique and contains $N_2$ vertices. 
\end{lemma}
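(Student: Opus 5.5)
The plan is to describe the tree $T_\lambda$ given by Lemma~\ref{lem:tree} explicitly when $N_2<f(\lambda)$, and then to bound the longest monochromatic subpath inside it. Since $N_2<f(\lambda)$, no path $a$ is created, so $T_\lambda$ consists of the subpaths of $\lambda$ together with $v_1,\ldots,v_{z-1}$, where $z=|\lambda|$. Condition~\ref{cond:notempty} and $N_2<f(\lambda)$ give $z\geq 2$; otherwise $f(\lambda)=|l_1|<N_2$ by the no-inclusion property of $\mathcal{J}$. The other components are the subpaths of $\Lambda\setminus\lambda$, and each has fewer than $N_2$ vertices because the sample is without inclusion. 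So it suffices to show two things: $T_\lambda$ contains a path on exactly $N_2$ vertices, and every other path in $T_\lambda$ is strictly shorter.

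\emph{Case $z>2$.} Here $v_1$ is attached to the last vertex of $l_1$ and to the $(y'+1)$-th vertex $u$ of $l_2$, where $y'=f(\lambda)-N_2$. By Condition~\ref{cond:ign}, $1\leq y'<|l_2|$, so $u$ is a junction vertex with three branches. The first branch is $p=l_1\cup\{v_1\}$. The second is $q$, the first $y'$ vertices of $l_2$. The third is $r$, made of the rest of $l_2$ followed by the chain $v_2,l_3,\ldots,v_{z-1},l_z$, all of which remains a path. Then
\begin{displaymath}
|p|+|r|+1 = f(\lambda)-y' = N_2.
\end{displaymath}
Any maximal path of $T_\lambda$ uses at most two branches at $u$, so its length is one of $|p|+|r|+1$, $|q|+|r|+1$, or $|p|+|q|+1$. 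We have $|q|=y'<|l_1|<|p|$ by Condition~\ref{cond:ign}, which gives $|q|+|r|+1<N_2$. For the last option, $|p|+|q|+1<|p|+|r|+1$ holds because $|r|\geq|l_3|>y'=|q|$, again by Condition~\ref{cond:ign}. Hence the path $p\,u\,r$ is the unique path of length $N_2$.

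\emph{Case $z=2$.} Here $v_1$ joins the $(|l_1|-y)$-th vertex of $l_1$ to the $(y'+1)$-th vertex of $l_2$, with $y+y'=f(\lambda)-N_2$. There are now up to two junction vertices, $u_1\in l_1$ and $u_2\in l_2$. The branches at $u_1$ have sizes $|l_1|-y-1$ and $y$, and the branches at $u_2$ have sizes $y'$ and $|l_2|-y'-1$. The candidate longest paths take one branch on each side and pass through $v_1$. Taking the long branches gives
\begin{displaymath}
(|l_1|-y)+1+(|l_2|-y') = N_2 .
\end{displaymath}
I will check that this choice is strictly longer than every alternative, namely the three paths using a short branch on at least one side, and the paths lying entirely inside $l_1$ or inside $l_2$. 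This comes down to two inequalities. The first is $y<|l_1|-y-1$, which follows from $y\leq\lfloor|l_1|/2\rfloor-1$. The second is $y'<|l_2|-y'-1$, that is $y'\leq\lfloor|l_2|/2\rfloor-1$. This is exactly where Condition~\ref{cond:twochains} enters: $y'$ is either $0$ or equal to $f(\lambda)-N_2-\lfloor|l_1|/2\rfloor+1\leq\lfloor|l_2|/2\rfloor-1$. For the paths inside a single $l_i$, I will compare $|l_i|$ with $N_2$ using the no-inclusion property. Degenerate values $y=0$ or $y'=0$, where $v_1$ is attached to an extremity, only remove branches and are handled in the same way.

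I expect the main obstacle to be the $z=2$ case. There the uniqueness argument needs the strict inequalities on both sides at once, and the boundary cases $y=0$ or $y'=0$, as well as small or odd $|l_i|$, need careful handling of the floors. I would first verify the $z=2$ inequalities directly from the definitions of $y$ and $y'$ in Line~\ref{alg:line:y2} together with Condition~\ref{cond:twochains}, and only then write out the simpler $z>2$ case.
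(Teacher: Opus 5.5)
Your proposal is correct and follows essentially the same route as the paper. You reduce to the tree $T_\lambda$ via Lemma~\ref{lem:tree} and split into two cases: for $|\lambda|>2$, the unique junction is the $(f(\lambda)-N_2+1)$-th vertex of $l_2$, and Condition~\ref{cond:ign} makes the branch of size $f(\lambda)-N_2$ strictly shortest; for $|\lambda|=2$, you compare branches at the neighbours of $v_1$ using Condition~\ref{cond:twochains} and the no-inclusion bound $|l|<N_2$. The only difference is cosmetic: you treat the subcases $y'=0$ and $y'>0$ uniformly by allowing empty short branches, whereas the paper splits on whether $f(\lambda)-N_2>\lfloor |l_1|/2\rfloor-1$.
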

\begin{proof}
    By Lemma~\ref{lem:tree}, the subgraph of $H$ induced by the vertices of colour $c$ consists of a collection of paths together with a tree $T_\lambda$ containing the subpaths of $\lambda$ and the vertices created by $\mathcal{A}_1$.

    Since the sample $\{L_1,L_2\}$ is \emph{without inclusion}, every subpath $l\in\Lambda$ satisfies $|l|<N_2$. Moreover, since $\lambda$ is perfect and \begin{math}N_2<f(\lambda)\end{math}, no path $a$ is added to $H$ and $|\lambda|>1$.
    
    \begin{figure}[ht]
    \centering
    \begin{adjustbox}{max width=.8\textwidth}
        \begin{tikzpicture}[node distance=1cm]
            \node (x0) {$\ldots$};
            \node[band, right of=x0, right=-.6cm, fill=white, minimum width=4cm, label=above:\large$l_1$] (l1) {};
            \node[right of=l1, right=1.2cm] (x1) {$\ldots$};
            \node[band, right of=x1, right=-.4cm, fill=white, minimum width=3.5cm] (l2) {};
            \node[right of=l2, right=.95cm] (x2) {$\ldots$};
            \node[band, right of=x2, right=-.4cm, fill=white, minimum width=3cm, label=above:\large$l_3$] (l3) {};
            \node[right of=l3, right=.55cm] (x3) {$\ldots$};
            
            \node[voidNode, above of=x1, above=-.6cm] (v1) {};
            \draw ([xshift=-.3cm,yshift=.1cm]v1.north west) node {\large$v_1$};
            \node[voidNode, above of=x2, above=-.6cm] (v2) {};
            \draw ([xshift=.3cm,yshift=.1cm]v2.north east) node {\large$v_2$};
            \node[above of=x3, above=-.3cm, right=-.2cm] (v3) {$\ldots$};
            
            \node[right of=l1, right=.85cm] (i1) {};
            \node[voidNode, right of=l2, right=-.8cm] (i2) {};
            \node[right of=l2, right=.6cm] (i3) {};
            \node[left of=l3, left=.35cm] (i4) {};
            \node[right of=l3, right=.35cm] (i5) {};
            
            \draw[-] (l1) -- (x1);
            \draw[-] (x1) -- (l2);
            \draw[-] (l2) -- (x2);
            \draw[-] (x2) -- (l3);
            
            \draw[-] (i1) to[out=90, in=180] (v1);
            \draw[-] (v1) to[out=0, in=120] (i2);
            \draw[-] (i3) to[out=90, in=180] (v2);
            \draw[-] (v2) to[out=0, in=90] (i4);
            \draw[-] (i5) to[out=90, in=180] (v3);
            
            \draw [decorate,decoration={brace,amplitude=6pt,raise=5pt}] 
            ([yshift=13pt]l2.north west) -- ([yshift=13pt]l2.north east) 
            node [black,midway,yshift=20pt] {\large$l_2$};            
            \draw [decorate,decoration={brace,mirror,amplitude=6pt,raise=0pt}] 
            ([yshift=-3pt]l1.south west) -- ([yshift=-20pt, xshift=.5pt]v1.south east) 
            node [black,midway,yshift=-15pt] {\large$b1$};
            \draw [decorate,decoration={brace,mirror,amplitude=6pt,raise=0pt}] 
            ([yshift=-3pt]l2.south west) -- ([yshift=-3pt]i2.south west) 
            node [black,midway,yshift=-15pt] {\large$b2$};
            \draw [decorate,decoration={brace,mirror,amplitude=6pt,raise=0pt}] 
            ([yshift=-3pt]i2.south east) -- ([yshift=-3pt, xshift=.5pt]x3.south east) 
            node [black,midway,yshift=-15pt] {\large$b3$};
        \end{tikzpicture}
    \end{adjustbox}
    \caption{Illustration of a solution graph $H$ for one instance of $\mathcal{J}$. The set $\lambda_H$ contains more than two subpaths with $f(\lambda)-N_2$ ignored vertices. There are at least two additional vertices $v_1$ and $v_2$.}
    \label{fig:allInOne}
\end{figure}

    We first assume that $|\lambda|>2$. 
    By Lines~\ref{alg:line:chained} and~\ref{alg:line:y3link}, the tree $T_\lambda$ contains a unique junction vertex, namely the $(f(\lambda)-N_2+1)$-th vertex of $l_2$ (see Figure~\ref{fig:allInOne}). 
    Condition~\ref{cond:ign} ensures that this vertex exists.

    This junction vertex induces three branches, denoted $b_1$, $b_2$ and $b_3$ in Figure~\ref{fig:allInOne}. 
    Branch $b_2$ contains exactly $f(\lambda)-N_2$ vertices, while branches $b_1$ and $b_3$ contain respectively $|l_1|+1$ and at least $|l_3|+1$ vertices. 
    Since $\lambda$ satisfies Condition~\ref{cond:ign}, we have
    \begin{displaymath}
        |l_1|+1>f(\lambda)-N_2\qquad \text{and}\qquad |l_3|+1>f(\lambda)-N_2.
    \end{displaymath}
    Hence, $b_2$ is strictly the unique shortest branch.

    Any maximum subpath of colour $c$ in $H$ is obtained by taking the two longest branches incident to the junction vertex. Therefore, the unique maximum subpath of colour $c$ is obtained by excluding $b_2$. Its number of vertices is 
    \begin{displaymath}
        \left(\sum_{l_i\in\lambda}|l_i|\right)+|\lambda|-1-(f(\lambda)-N_2)=N_2.
    \end{displaymath}
    Consequently, the maximum subpath of colour $c$ in $H$ is unique and contains $N_2$ vertices.
    
    Secondly, assume that $|\lambda|=2$. 
    By Line~\ref{alg:line:y2link}, vertex $v_1$ is adjacent to the $(|l_1|-y)$-th vertex of $l_1$ and the $(y'+1)$-th vertex of $l_2$, where 
    \begin{displaymath}
        y=\min\left\{f(\lambda)-N_2,\left\lfloor\frac{|l_1|}{2}\right\rfloor-1\right\}\qquad \text{and}\qquad y'=f(\lambda)-N_2-y.
    \end{displaymath}
    Hence, $y+y'=f(\lambda)-N_2$.

    \begin{figure}[ht]
    \centering
    \begin{adjustbox}{max width=.7\textwidth}
        \begin{tikzpicture}[node distance=1cm]
            \node (x0) {$\ldots$};
            \node[band, right of=x0, right=-.6cm, fill=white, minimum width=4cm] (l1) {};
            \node[voidNode, right of=l1, right=1.25cm] (x1) {};
            \draw ([xshift=.1cm,yshift=.3cm]x1.north west) node {\Large$u$};
            \node[band, right of=x1, right=-.6cm, fill=white, minimum width=2cm] (l2) {};
            \node[right of=l2, right=.2cm] (x2) {$\ldots$};
            \node[band, right of=x2, right=-.4cm, fill=white, minimum width=1.5cm] (l3) {};
            \node[voidNode, right of=l3, right=0cm] (x3) {};
            \draw ([xshift=.1cm,yshift=.3cm]x3.north west) node {\Large$w$};
            \node[band, right of=x3, right=-.6cm, fill=white, minimum width=3cm] (l4) {};
            \node[right of=l4, right=.5cm] (x4) {$\ldots$};
            \node[voidNode, above of=x2, above=-.35cm, left=.2cm] (v1) {};
            \draw ([xshift=.25cm,yshift=.1cm]v1.north east) node {\Large$v_1$};
    
            \draw[-] (l1) -- (x1);
            \draw[-] (x1) -- (l2);
            \draw[-] (l2) -- (x2);
            \draw[-] (x2) -- (l3);
            \draw[-] (l3) -- (x3);
            \draw[-] (x3) -- (l4);
            
            \draw[-] (x1) to[out=40, in=180] (v1);
            \draw[-] (v1) to[out=0, in=145] (x3);
            
            \draw [decorate,decoration={brace,mirror,amplitude=6pt,raise=0pt}]
            ([yshift=-3pt]l1.south west) -- ([yshift=-3pt]l1.south east) 
            node [black,midway,yshift=-12pt] {\Large$p$};
            \draw [decorate,decoration={brace,mirror,amplitude=6pt,raise=0pt}]
            ([yshift=-3pt]l2.south west) -- ([yshift=-3pt]l2.south east) 
            node [black,midway,yshift=-12pt] {\Large$q$};
            \draw [decorate,decoration={brace,mirror,amplitude=10pt,raise=5pt}]
            ([yshift=-13pt]l1.south west) -- ([yshift=-13pt]l2.south east) 
            node [black,midway,yshift=-23pt] {\Large$l_1$};
            \draw [decorate,decoration={brace,mirror,amplitude=6pt,raise=0pt}]
            ([yshift=-3pt]l3.south west) -- ([yshift=-3pt]l3.south east) 
            node [black,midway,yshift=-12pt] {\Large$r$};
            \draw [decorate,decoration={brace,mirror,amplitude=6pt,raise=0pt}]
            ([yshift=-3pt]l4.south west) -- ([yshift=-3pt]l4.south east) 
            node [black,midway,yshift=-12pt] {\Large$s$};
            \draw [decorate,decoration={brace,mirror,amplitude=10pt,raise=5pt}]
            ([yshift=-13pt]l3.south west) -- ([yshift=-13pt]l4.south east) 
            node [black,midway,yshift=-23pt] {\Large$l_2$};
    
        \end{tikzpicture}
    \end{adjustbox}
    \caption{Illustration of a solution graph $H$ for one instance of $\mathcal{J}$. The set $\lambda_H$ contains two subpaths $l_1$ and $l_2$ with $p+q+1$ and $r+s+1$ vertices, respectively. There is one additional vertex $v_1$.}
    \label{fig:allInTwo}
\end{figure}

    We distinguish two cases.

    Assume first that \begin{math}f(\lambda)-N_2>\left\lfloor\frac{|l_1|}{2}\right\rfloor-1.\end{math}
    Then \begin{math}y=\left\lfloor\frac{|l_1|}{2}\right\rfloor-1.\end{math}
    Since $\lambda$ satisfies Condition~\ref{cond:twochains}, we obtain 
    \begin{math}
        y'\leq\left\lfloor\frac{|l_2|}{2}\right\rfloor-1.
    \end{math}
    Therefore, both neighbours of $v_1$, denoted $u$ and $w$ in Figure~\ref{fig:allInTwo}, are junction vertices of $T_\lambda$.

    Let $p,q$ be the branches incident to $u$ in $l_1$, and $r,s$ those incident to $w$ in $l_2$. 
    By construction, $|q|=y$ and $|r|=y'$, while 
    \begin{math}|p|=|l_1|-y-1>|q|\end{math} and \begin{math}|s|=|l_2|-y'-1>|r|.\end{math}
    Hence, $q$ and $r$ are the shortest branches incident to the junction vertices. 
    If we excluded both $q$ and $r$, the number of vertices in the subpath of colour $c$ obtained from $p$, $v_1$ and $s$, is 
    \begin{displaymath}
        |l_1|+|l_2|+1-y-y'=f(\lambda)-(f(\lambda)-N_2)=N_2. 
    \end{displaymath}
    Since every subpath contained entirely in $l_1$ or $l_2$ has fewer than $N_2$ vertices, this subpath is the unique maximum subpath of colour $c$ in $H$.

    Assume now that \begin{math}f(\lambda)-N_2\leq\left\lfloor\frac{|l_1|}{2}\right\rfloor-1.\end{math}
    Then \begin{math}y=f(\lambda)-N_2\end{math} and $y'=0$, so the neighbour of $v_1$ in $l_1$ is the unique junction vertex of $T_\lambda$.
    Its three branches contain respectively $|l_1|-y-1$, $y$ and $|l_2|+1$ vertices. 
    Since $\lambda$ satisfies Condition~\ref{cond:ign}, \begin{math}|l_2|+1>f(\lambda)-N_2=y\end{math}, and since \begin{math}y\leq\left\lfloor\frac{|l_1|}{2}\right\rfloor-1\end{math}, we also obtain 
    \begin{displaymath}
        |l_1|-y-1\geq\left\lceil\frac{|l_1|}{2}\right\rceil>y.
    \end{displaymath}
    Thus, the branch of size $y$ is the unique shortest branch. Therefore, the unique maximum subpath of colour $c$ is obtained from the two other branches and its order is 
    \begin{displaymath}
        |l_1|+|l_2|+1-y=f(\lambda)-(f(\lambda)-N_2)=N_2.
    \end{displaymath}

    Therefore, $H$ contains a unique maximum subpath of colour $c$, and this subpath contains exactly $N_2$ vertices.
\end{proof}

We now consider the subpath isomorphic to $L_1$ and its conditions of existence in a graph returned by Algorithm~$\mathcal{A}_1$ defined below.

\subsubsection{Additional Occurrences and Their Repair}

The preceding lemmas guarantee the desired copy of $L_2$, but the copy of $L_1$ inserted at the first line of $\mathcal{A}_1$ need not yet be unique. 
The following lemmas characterise the possible copies of $L_1$ in the graph, starting with the necessary conditions for a set of vertices to induce a subgraph isomorphic to $L_1$.

\begin{lemma}\label{lem:l1inH}
    Given a graph $H$ constructed by Algorithm~$\mathcal{A}_1$ and a subset $W$ of vertices in $H$, that induces a subpath isomorphic to $L_1$, then it satisfies Conditions~\ref{wN1},~\ref{wch} and~\ref{wnonc} :
    \begin{enumerate}[(a)]
        \setcounter{enumi}{3}
        \item subset $W$ contains exactly $N_1$ vertices ; \label{wN1}
        \item vertices in $W$ induce a subpath in $H$ ; \label{wch}
        \item subset $W$ contains all the vertices not coloured in $c$ in $H$. \label{wnonc}
    \end{enumerate}
\end{lemma}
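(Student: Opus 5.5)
Conditions~\ref{wN1} and~\ref{wch} follow directly from the definition of isomorphism, so the plan is to dispose of them first and spend the real effort on Condition~\ref{wnonc}. If $H[W]\approx L_1$, the isomorphism is a bijection between $W$ and $V(L_1)$, so $|W|=N_1$; this is Condition~\ref{wN1}. Isomorphism also preserves adjacency and non-adjacency, so $H[W]$ is a path whose consecutive vertices are adjacent in $H$ and whose non-consecutive vertices are not. Hence $W$ induces a subpath of $H$, which is Condition~\ref{wch}.

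For Condition~\ref{wnonc} I will count colours. An isomorphism preserves colours, so $W$ contains, for every colour $c'\neq c$, exactly as many vertices of colour $c'$ as $L_1$ does. Next I use the construction: $\mathcal{A}_1$ creates only vertices of colour $c$ (Line~\ref{alg:line:v} and Line~\ref{alg:line:a}). Therefore the vertices of $H$ not coloured in $c$ are exactly those of the copy $H_1$ built at Line~\ref{alg:line:l1}. Consequently, for each colour $c'\neq c$, $H$ has exactly the same number of vertices of colour $c'$ as $L_1$.

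Combining the two counts, for each $c'\neq c$ the set $W$ contains as many vertices of colour $c'$ as $H$ has in total. Since $W\subseteq V(H)$, $W$ must contain all of them, which gives Condition~\ref{wnonc}. One degenerate case needs a remark: if $L_1$ has no vertex of colour other than $c$, the condition is vacuous. I expect no real obstacle here. The only point needing care is the claim that $\mathcal{A}_1$ never adds vertices of other colours, and this is immediate from inspecting every vertex-creating line of the algorithm.
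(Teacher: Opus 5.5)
Your proposal is correct and follows essentially the same route as the paper: Conditions~\ref{wN1} and~\ref{wch} come directly from the colour-preserving isomorphism, and Condition~\ref{wnonc} from the observation that $\mathcal{A}_1$ adds only vertices of colour $c$, so $H$ and $L_1$ have the same number of non-$c$ vertices and $W$ must contain all of them. The only difference is that you count per colour where the paper counts all non-$c$ vertices together; either count suffices.
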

\begin{proof}
    Because $H_W\approx L_1$, the colour-preserving isomorphism gives \begin{math}|W|=|V(L_1)|=N_1\end{math}, and $H[W]$ is an induced path.
    Algorithm~$\mathcal{A}_1$ adds only vertices of colour $c$ to the canonical copy $H_1$.
    Hence, $H$ and $L_1$ contain the same number of vertices whose colour differs from $c$.
    Every such vertex must occur in any coloured copy of $L_1$, so all vertices of $H$ not coloured in $c$ belong to $W$.
    Thus, Conditions~\ref{wN1},~\ref{wch}, and~\ref{wnonc} hold.
\end{proof}

We now consider the cases where a set of vertices inducing a subgraph isomorphic to $L_1$ contains vertices outside $V(H_1)$.

\begin{lemma}\label{lem:l1wVi}
    Given an instance in $\mathcal{J}$ with its set $\Lambda$ and a perfect subset $\lambda$ of $\Lambda$ with $|\lambda|>1$, let $H$ be the graph returned by $\mathcal{A}_1$.
    We consider a subset of vertices $W$ in graph $H$ that contains at least one vertex $v_i$ with 
    \begin{math}1\leq i\leq |\lambda|-1\end{math} created at Line~\ref{alg:line:v}.
    Subset $W$ inducing a subpath $H_W\approx L_1$ implies that $\lambda$ satisfies the following Conditions~\ref{w2ext},~\ref{wf3} and~\ref{wl3} :
    \begin{enumerate}[(a)]
        \setcounter{enumi}{6}
        \item $|\lambda|=2$ with $l_1$ and $l_2$ containing the extremities of path $H_1$ ; \label{w2ext}
        \item \begin{math}N_2>f(\lambda)\end{math} and $f(\lambda)=3$ ; \label{wf3}
        \item there exists a subpath \begin{math}l\in\Lambda\setminus\lambda\end{math} with $|l|=3$ ; \label{wl3}
    \end{enumerate}
\end{lemma}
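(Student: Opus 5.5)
The plan is to exploit the rigidity given by Lemma~\ref{lem:l1inH}: $W$ contains every vertex of $H$ not coloured in $c$, $|W|=N_1$, and $H[W]$ is an induced path. Since $\mathcal{A}_1$ only adds vertices of colour $c$, the non-$c$ vertices of $H$ are exactly those of $H_1$. They are adjacent to one another only through edges of $H_1$, and they are linked to the rest of the graph only through the extremities of the monochromatic subpaths of $\Lambda$.

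The first step fixes some $v_i\in W$. Its only neighbours are a vertex of $l_i$ and a vertex of $l_{i+1}$. Because $v_i$ cannot be an endpoint of $W$ adjacent to nothing else useful (I will check the endpoint case separately), $W$ enters and leaves $v_i$ through these two neighbours. Together with the segment $\sigma$ of $H_1$ joining them, $v_i$ therefore closes a cycle $C$. The segment $\sigma$ contains non-$c$ vertices, namely the vertices separating $l_i$ from $l_{i+1}$, and these lie in $W$. Since $H[W]$ is acyclic, $W$ must omit at least one vertex of $C$. The omitted vertex cannot be non-$c$, and it cannot be $v_i$. So $W\cap C$ is $C$ minus a gap lying inside some $c$-coloured part of $\sigma$.

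The second step shows that the two sides of this gap are ends of $W$, and from that derives Condition~\ref{w2ext}. Suppose some non-$c$ vertex of $H_1$ lies outside $\sigma$ (before $l_1$, after $l_z$, or between other consecutive subpaths of $\lambda$). It must belong to $W$ and be connected to $W\cap C$ through $H_1$. That connection attaches at a vertex of $C$ which already has two $W$-neighbours, creating a vertex of degree $3$ in $H[W]$, which is a contradiction. Hence every non-$c$ vertex lies in $\sigma$. This forces $|\lambda|=2$, $i=1$, and forces $l_1,l_2$ to contain the extremities of $H_1$.

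A small subtlety arises when $N_2<f(\lambda)$: there $v_1$ attaches to interior vertices, so junction vertices appear. The same degree argument then forbids $W$ from using both branches at a junction. I expect the bookkeeping of where exactly $v_1$ attaches in the cases of Lines~\ref{alg:line:y2link} and~\ref{alg:line:y3link} to be the fiddliest part. The aim is to show that this case either yields an extra non-$c$ vertex outside $C$ or a colour-sequence mismatch.

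The third step is a counting and colour-sequence comparison, which I expect to be the main obstacle. By now $H[W]$ is the cycle $C$ cut at a gap inside some $l\in\Lambda\setminus\lambda$ (a gap inside $l_1$ or $l_2$ would leave $v_1$ dangling or would duplicate $H_1$ without using $v_1$). Since $l_1\ni$ extremity and $l_2\ni$ extremity, $W$ reads as follows: a piece of $l$, then the non-$c$ part, then $l_1$, $v_1$, $l_2$ as one merged monochromatic block, then the non-$c$ part, then the other piece of $l$. For this to be colour-isomorphic to $L_1$, the monochromatic blocks must match in length under the reversal or rotation induced by the non-$c$ vertices, which are all fixed. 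In particular, the end blocks of $W$ (the two pieces of $l$) must match $|l_1|$ and $|l_2|$. The merged block, of size $|l_1|+|l_2|+1=f(\lambda)$, must match $|l|$. Since the pieces are separated by a nonempty gap, $|W|=N_1$ forces the gap to be exactly one vertex, by comparing $|W|=N_1$ with $|C|=N_1+1$ minus the gap size. Combining these facts gives $|l|=|l_1|+|l_2|+1$ and $|l|-1=|l_1|+|l_2|$. To pin down $|l_1|=|l_2|=1$, I would use that the gap splits $l$ into two nonempty pieces and the block order along $L_1$ is fixed. That gives $f(\lambda)=3=|l|$, which is Condition~\ref{wl3} and the equality in Condition~\ref{wf3}.

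Finally, $N_2>f(\lambda)$ follows by elimination. Condition~\ref{cond:twochains} with $|l_1|=|l_2|=1$ rules out $N_2<f(\lambda)$. The case $N_2=f(\lambda)=3=|l|$ would contradict the fact, used in Lemma~\ref{lem:l2-added}, that every $l\in\Lambda$ has $|l|<N_2$ for instances in $\mathcal{J}$.
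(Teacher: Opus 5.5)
Your skeleton matches the paper's: the cycle $C$ formed by $v_i$ and the segment $\sigma$ of $H_1$ between its neighbours $u$ and $w$, the fact that $W$ keeps every non-$c$ vertex, and degree-three obstructions. (The paper splits $H_1$ into $p_u$ before $u$, $p$ strictly between $u$ and $w$, and $p_w$ after $w$.) There is a genuine gap, however, and it lies exactly in the case you postpone.

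Your degree argument needs $u$ and $w$ to already have two $W$-neighbours on $C$, i.e.\ their neighbours inside $\sigma$ must be non-$c$. That holds when $N_2\ge f(\lambda)$, by Line~\ref{alg:line:y1link}. For $N_2<f(\lambda)$, Lines~\ref{alg:line:y2link} and~\ref{alg:line:y3link} may attach $v_1$ to vertices that are not the facing ends of $l_1$ and $l_2$. The neighbour of $u$ in $\sigma$ has colour $c$ when $|\lambda|=2$ and $y\ge1$. The neighbour of $w$ in $\sigma$ has colour $c$ whenever $y'\ge1$, which always holds when $|\lambda|>2$. The gap can then sit next to $u$ or $w$, which becomes an end of $H[W\cap C]$ free to continue outside $C$, so no vertex of degree three appears.

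You only state an aim for this case. Your closing elimination of $N_2<f(\lambda)$ via Condition~\ref{cond:twochains} is circular, because $|l_1|=|l_2|=1$ was obtained only through steps that presuppose $N_2\ge f(\lambda)$. The endpoint case for $v_i$ is also never treated.

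The missing idea is to use the count $|W|=N_1$ in every case. Since $W$ can use at most one side of $u$ and at most one side of $w$, this count forces at least one of $u,w$ to be an extremity of $H_1$. Both are forced in the endpoint case and when $W$ stays inside $C$. The paper's observation then finishes the argument:
\begin{itemize}
\item for $|\lambda|=2$ and $N_2<f(\lambda)$, one has $y\le\lfloor|l_1|/2\rfloor-1<|l_1|-1$ and $y'<|l_2|-1$, so $v_1$ never touches an extremity;
\item for $|\lambda|>2$, an extremity at $u$ gives $|l_1|=1$, contradicting Condition~\ref{cond:ign} because $f(\lambda)-N_2\ge1$.
\end{itemize}

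Your third step is also not valid as written. Because $v_1$ is attached to the facing ends of $l_1$ and $l_2$, the vertices of $l_1\setminus\{u\}$ and $l_2\setminus\{w\}$ lie outside $C$. They cannot be in $W$, since that would give degree three at $u$ or $w$. Hence the block of $H[W]$ containing $v_1$ is $\{u,v_1,w\}$, not $l_1\cup\{v_1\}\cup l_2$. Also, $|C|=N_1+1$ does not follow from your second step, which only shows that no non-$c$ vertex lies outside $\sigma$.

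Moreover, $|l|=|l_1|+|l_2|+1$ and $|l|-1=|l_1|+|l_2|$ are the same equation, so block matching cannot force $|l_1|=|l_2|=1$. For instance, $|l_1|=2$, $|l_2|=3$, $|l|=6$ satisfies everything you list.

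The right argument is again counting. When $N_2\ge f(\lambda)$, your degree argument gives $W\subseteq C$, hence $N_1=|W|\le|C|-1=N_1-|p_u|-|p_w|$. This forces $p_u=p_w=\emptyset$ and a one-vertex gap, so $|l_1|=|l_2|=1$ and $f(\lambda)=3$ immediately. Only then do the ends of $H[W]$ settle the rest: they must be $c$-singletons like those of $H_1$, so the gap vertex and its two neighbours form some $l\in\Lambda\setminus\lambda$ with $|l|=3$. Finally, $N_2>3$ follows from $|l|<N_2$.
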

\begin{proof}
    \begin{figure}[ht]
    \centering
    \begin{subfigure}{0.5\textwidth}
        \centering
        \begin{adjustbox}{max width=\linewidth}
            \begin{tikzpicture}[node distance=1 cm]
                \node[] (x1) {$\ldots$} ;
                \draw ([xshift=-.25cm,yshift=-.2cm]x1.south east) node {$p_u$};
                \node[voidNode, right of=x1] (x2) {} ;
                \draw ([xshift=-.15cm,yshift=-.2cm]x2.south east) node {$u$};
                \node[band, right of=x2, right=-.5cm, minimum width=2cm] (p) {} ;
                \draw ([xshift=1cm,yshift=-.2cm]p.south west) node {$p$};
                \node[band, right of=x2, fill=black] (p_notc) {} ;
                \node[voidNode, above of=p, below=0.2cm] (x3) {} ;
                \draw ([xshift=.25cm,yshift=.1cm]x3.north east) node {$v_i$};
                \node[voidNode, right of=p, right=.3cm] (x4) {} ;
                \draw ([xshift=-.15cm,yshift=-.2cm]x4.south east) node {$w$};
                \node[right of=x4] (x5) {$\ldots$} ;
                \draw ([xshift=-.25cm,yshift=-.2cm]x5.south east) node {$p_w$};
                
                \draw[-] (x1) -- (x2) ;
                \draw[-] (x2) to[out=70, in=-170] (x3) ;
                \draw[-] (x3) to[out=-10, in=110] (x4) ;
                \draw[-] (x4) -- (x5) ;
                \draw[-] (x2) -- (p) ;
                \draw[-] (x4) -- (p) ;
            \end{tikzpicture}
        \end{adjustbox}
        \caption{Elementary cycle that contains no extremity of $H_1$.}
        \label{fig:cycle-middle}
    \end{subfigure}\hfill
    \begin{subfigure}{0.45\textwidth}
        \centering
        \begin{adjustbox}{max width=\linewidth}
            \begin{tikzpicture}[node distance=0.6 cm]
                \node[voidNode] (x1) {} ;
                \draw ([xshift=-.15cm,yshift=-.2cm]x1.south east) node {$u$};
                \node[pointNode, right of=x1] (x2) {} ;
                \node[right of=x2, right=-.2cm] (x3) {$\ldots$} ;
                \draw ([xshift=-.25cm,yshift=-.2cm]x3.south east) node {$p$};
                \node[pointNode, right of=x3, right=0cm] (x4) {} ;
                \node[voidNode, right of=x4] (x5) {} ;
                \draw ([xshift=-.15cm,yshift=-.2cm]x5.south east) node {$w$};
                \node[right of=x5, right=0cm] (x6) {$\ldots$} ;
                \draw ([xshift=-.25cm,yshift=-.2cm]x6.south east) node {$p_w$};
                \node[voidNode, above of=x3] (vi) {} ;
                \draw ([xshift=.25cm,yshift=.1cm]vi.north east) node {$v_i$};
                
                \draw[-] (x1) -- (x2) ;
                \draw[-] (x2) -- (x3) ;
                \draw[-] (x3) -- (x4) ;
                \draw[-] (x4) -- (x5) ;
                \draw[-] (x5) -- (x6) ;
                \draw[-] (x1) to[out=70, in=-170] (vi) ;
                \draw[-] (vi) to[out=-10, in=110] (x5) ;
            \end{tikzpicture}
        \end{adjustbox}
        \caption{Elementary cycle that contains one extremity of $H_1$.}
        \label{fig:cycle-border}
    \end{subfigure}\hfill
    \caption{Illustration for elementary cycle on graph $H_{\lambda}$ obtain with Algorithm $\mathcal{A}_1$}
    \label{fig:l1-cycle}
\end{figure}
    By Lemma~\ref{lem:l1inH}, $W$ has $N_1$ vertices, induces a path, and contains every vertex of $H_1$ not coloured in $c$.
    Choose a vertex $v_i\in W$, and denote its two neighbours in consecutive members of $\lambda$ by $u$ and $w$.
    Let $p_u$, $p$, and $p_w$ be the portions of $H_1$ lying respectively before $u$, strictly between $u$ and $w$, and after $w$, as in Figure~\ref{fig:cycle-middle}.
    Since $u$ and $w$ belong to distinct subpaths of $\Lambda$, $p$ contains a vertex not coloured in $c$, and every such vertex belongs to $W$.

    Observe that if \begin{math}N_2>f(\lambda)\end{math}, there is a path $a$ and, according to Lines~\ref{alg:line:alink1},~\ref{alg:line:alink2} and \ref{alg:line:alink3}, one of its extremities is adjacent to a vertex of colour $c$ before or after $p$ in $H_1$.
    Observe that whenever $|\lambda|=2$ and \begin{math}N_2<f(\lambda)\end{math}, Lines~\ref{alg:line:y2} and~\ref{alg:line:y2link} apply in the construction of $H$. As $\lambda$ is perfect and any subpath in $\lambda$ has strictly more than \begin{math}f(\lambda)-N_2\geq 1\end{math} vertices, we obtain 
    \begin{displaymath}
        y\leq\left\lfloor{|l_1|}/{2}\right\rfloor-1<|l_1|-1 \qquad \text{and}\qquad y'\leq\left\lfloor{|l_2|}/{2}\right\rfloor-1<|l_2|-1,
    \end{displaymath}
    implying that neither neighbour of $v_1$ are an extremity of $H_1$. 
    
    We first suppose that $W$ exactly one of $u,w$ belongs to $W$, say $u\in W$ and $w\notin W$. 
    Since $H[W]$ is a path which contains $v_i$ and the non-$c$ vertices not coloured in $c$ of $p$, either a path $a$ exists or not, $W$ is confined to $p\cup\{u,v_i\}$.
    Comparing $|W|=|H_1|$ then forces \begin{math}p_u=p_w=\emptyset\end{math} and therefore $u$ and $w$ to be the two extremities of $H_1$.
    It also forces the extremities of $H[W]$ to be $v_i$ and the neighbour of $w$ in $p$.
    By the previous observation, if $u$ and $w$ are extremities of $H_1$ and $|\lambda|=2$, then \begin{math}N_2\geq f(\lambda)\end{math}.  
    Therefore, by Line~\ref{alg:line:y1link}, both extremities of $p$ are not coloured in $c$.
    Hence, one extremity of $H[W]$ is not coloured in $c$, whereas both extremities of $L_1$ are coloured in $c$, a contradiction.
    The case $w\in W$, $u\notin W$ is symmetric.

    Suppose next that $u,w\in W$ and that $W$ contains a vertex of, say, $p_w$.
    To remain an induced path through $v_i$ and the non-$c$ vertices of $p$, the set $W$ contains neither $p_u$ nor the neighbour of $w$ in $p$.
    The equality $|W|=|H_1|$ consequently forces $p_u=\emptyset$ and the latter neighbour to be the only omitted vertex of $H_1$.
    Hence, $u$ is an extremity of $H_1$ and $|l_1|=1$, as in Figure~\ref{fig:cycle-border}.
    By the previous observation, if $u$ is an extremity of $H_1$ and \begin{math}N_2<f(\lambda)\end{math}, then $|\lambda|>2$ and Lines~\ref{alg:line:y3}--\ref{alg:line:y3link} apply in the construction of $H$.
    If \begin{math}N_2<f(\lambda)\end{math}, these Lines and Condition~\ref{cond:ign} give the contradiction 
    \begin{math}1=|l_1|>f(\lambda)-N_2\geq1.\end{math}
    However, if \begin{math}N_2\geq f(\lambda)\end{math}, by Line~\ref{alg:line:y1link}, the omitted neighbour is not coloured in $c$, contradicting Condition~\ref{wnonc}.
    The case where $W$ contains a vertex of $p_u$ instead of $p_w$ is symmetric.

    Hence, $u,w\in W$ and \begin{math}W\cap(p_u\cup p_w)=\emptyset\end{math}.
    If a path $a$ exists, no vertex of the additional path $a$ can belong to $W$: otherwise, by previous observations, path $a$ must be adjacent to either $u$ or $w$, hence the path condition and Line~\ref{alg:line:y1link} force $W$ to omit an adjacent non-$c$ vertex of $p$, contrary to Condition~\ref{wnonc}.
    Thus, \begin{math}W\subseteq p\cup\{u,v_i,w\}\end{math}.
    Comparing again $|W|=|H_1|$ shows that $p_u=p_w=\emptyset$ and that $W$ is obtained from $V(H_1)$ by replacing exactly one vertex of $p$ with $v_i$.
    Therefore, $u$ and $w$ are the extremities of $H_1$, $|\lambda|=2$, and $v_i=v_1$, proving Condition~\ref{w2ext}.
    By the previous observation, if $u$ is an extremity of $H_1$ and $|\lambda|=2$, then \begin{math}N_2\geq f(\lambda)\end{math}.
    Therefore, by Line~\ref{alg:line:y1link}, both extremities of $p$ are not coloured in $c$ which gives $|l_1|=|l_2|=1$, and hence $f(\lambda)=3$.
    The omitted vertex of $p$ must be coloured in $c$; together with the two extremities of $H[W]$ coloured in $c$, it forms a maximal subpath $l\in\Lambda\setminus\lambda$ of length three.
    Since every member of $\Lambda$ has length strictly less than $N_2$, we obtain \begin{math}N_2>3=f(\lambda)\end{math}.
    This proves Conditions~\ref{wf3} and~\ref{wl3}.
\end{proof}

\begin{lemma}\label{lem:l1wA}
    Given an instance in $\mathcal{J}$ with its set $\Lambda$ and a perfect subset $\lambda$ of $\Lambda$ with \begin{math}N_2>f(\lambda)\end{math}, let graph $H$ be the graph returned by $\mathcal{A}_1$.
    We consider a subset of vertices $W$ in graph $H$ that contains at least one vertex of path $a$.
    Subset $W$ inducing a subpath $H_W\approx L_1$ implies that the following Conditions~\ref{wextc},~\ref{w1ext} and~\ref{wa} are satisfied :
    \begin{enumerate}[(a)]
        \setcounter{enumi}{9}
        \item both extremities of $H_1$ are end subpaths of $\Lambda$, denoted $l$ and $l'$ hereafter;\label{wextc}
        \item an extremity of path $a$ is adjacent to the extremity of $H_1$ in $l\in\lambda$ ; \label{w1ext}
        \item we have $|l'|>|l|$ and \begin{math}N_2-f(\lambda)\geq|l'|-|l|\end{math}. \label{wa}
    \end{enumerate}
\end{lemma}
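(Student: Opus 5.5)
We plan to mirror the structure of the proof of Lemma~\ref{lem:l1wVi}. By Lemma~\ref{lem:l1inH}, $W$ has exactly $N_1$ vertices, induces a path in $H$, and contains every vertex of $H_1$ not coloured in $c$. Since $N_2>f(\lambda)$, the path $a$ exists. By Lines~\ref{alg:line:alink1}--\ref{alg:line:alink3}, exactly one extremity of $a$ is adjacent to exactly one vertex $t$ of $H_1$, and $t$ is an extremity of an end subpath of $\lambda$. Every vertex of $a$ has colour $c$ and has no other neighbour in $H$.

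First, because $W$ induces a connected path and meets $a$, it must contain $t$ together with a contiguous segment $a'\subseteq a$ starting at the attached extremity of $a$. The vertices of $a$ have degree at most two, and the only other neighbours of $t$ lie in $H_1$ or are $v_1$. So $H[W]$ consists of $a'$ followed by $t$, followed by a path leaving $t$ inside $H_1$, possibly detouring through the vertices $v_i$.

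If $W$ contained some $v_i$, Lemma~\ref{lem:l1wVi} would force $|\lambda|=2$, $f(\lambda)=3$ and $|l_1|=|l_2|=1$, with $l_1,l_2$ containing both extremities of $H_1$. In that configuration we will check directly that $t$, a singleton extremity of $H_1$, cannot simultaneously carry $a'$, its $H_1$-neighbour and $v_1$ while $W$ still contains all non-$c$ vertices. So we may assume $W\cap\{v_i\}=\emptyset$. Then $W$ consists of $a'$ followed by a subpath of $H_1$ starting at $t$.

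Next, $W$ must contain all non-$c$ vertices, and $t$ has degree three in $H$ unless it is an extremity of $H_1$. The path leaving $t$ must cover the whole non-$c$ part of $H_1$, which lies on both sides of $t$ when $t$ is interior. This forces $t$ to be an extremity of $H_1$, which gives Condition~\ref{w1ext} with $l\in\lambda$ the subpath containing $t$. Moreover, $H[W]$ is $a'$ followed by $H_1$ minus its last $|a'|$ vertices, and by Condition~\ref{wnonc} those removed vertices all have colour $c$. So the far extremity of $H_1$ lies in a subpath $l'\in\Lambda$ with $|l'|>|a'|\geq 1$. Since $L_1$ has colour-$c$ extremities at both ends (otherwise no path through $a'$ could be isomorphic), both extremities of $H_1$ lie in subpaths of $\Lambda$, giving Condition~\ref{wextc}.

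The main obstacle is Condition~\ref{wa}, which we obtain by comparing the isomorphism with $L_1$. The map $H[W]\to L_1$ either preserves or reverses orientation. If it preserves it, the $c$-run at the start of $H[W]$ would have length $|a'|+|l|$ instead of $|l|$, which is impossible since $|a'|\geq 1$. So it reverses orientation. The initial run of $H[W]$ then has $|a'|+|l|$ vertices and must equal $|l'|$, and the final run, of length $|l'|-|a'|$, must equal $|l|$. Hence $|a'|=|l'|-|l|>0$, so $|l'|>|l|$, and $|a'|\leq|a|=N_2-f(\lambda)$. We must still confirm that the interior sequences match under reversal, but this is a symmetry property of $L_1$ that is not part of the statement. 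Only the run lengths at the two ends are needed for the claimed conditions.
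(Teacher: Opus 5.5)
Your proof is correct and follows essentially the paper's route: Lemma~\ref{lem:l1inH}, a two-sided argument forcing the attachment vertex of $a$ to be an extremity of $H_1$, Condition~\ref{wnonc} forcing the far extremity of $H_1$ to be coloured $c$, and a comparison of the end runs of $H[W]$ with those of $L_1$. You depart from the paper in two places, both more careful than its corresponding one-line steps: you exclude the vertices $v_i$ via Lemma~\ref{lem:l1wVi} plus a direct degree-three check at the singleton $t$, and you give an explicit orientation-reversal argument yielding $|a'|=|l'|-|l|$.
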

\begin{proof}
    \begin{figure}[ht]
    \centering
    \begin{subfigure}{0.45\textwidth}
        \centering
        \begin{adjustbox}{max width=\linewidth}
            \begin{tikzpicture}[node distance=1cm]
                \node[band, minimum width=1cm] (x1) {} ;
                \draw ([xshift=-.25cm,yshift=.1cm]x1.south west) node {$a$};
                \node[below of=x1] (x2) {$\ldots$} ;
                \node[pointNode, right of=x2] (x3) {} ;
                \draw ([xshift=-.15cm,yshift=-.2cm]x3.south east) node {$u$};
                \node[band, right of=x3, right=-.4cm, minimum width=3cm] (x4) {} ;
                \draw ([xshift=-1.5cm,yshift=-.2cm]x4.south east) node {$l$};
                \node[pointNode, right of=x4, right=.9cm] (x5) {} ;
                \draw ([xshift=-.15cm,yshift=-.2cm]x5.south east) node {$w$};
                \node[right of=x5] (x6) {$\ldots$} ;
                \node[voidNode, above of=x5] (x7) {} ;
                \node[right of=x7] (x8) {$\ldots$} ;
                
                \node[right of=x3, left=.15cm] (x9) {} ;
                \node[left of=x5, right=.15cm] (x10) {} ;
                    
                \draw[-] (x2) -- (x3) ;
                \draw[-] (x3) -- (x4) ;
                \draw[-] (x4) -- (x5) ;
                \draw[-] (x5) -- (x6) ;
                
                \draw[-] (x1) to[out=0, in=90] (x9) ;
                \draw[-] (x10) to[out=90, in=180] (x7) ;
                \draw[-] (x7) -- (x8) ;
            \end{tikzpicture}
        \end{adjustbox}
        \caption{Additional path $a$ has an extremity adjacent to one vertex in the middle of path $H_1$.}
        \label{fig:chainA-middle}
    \end{subfigure}\hfill
    \begin{subfigure}{0.48\textwidth}
        \centering
        \begin{adjustbox}{max width=\linewidth}
            \begin{tikzpicture}[node distance=1 cm]
                \node[band, minimum width=1cm] (x1) {} ;
                \draw ([xshift=.25cm,yshift=.1cm]x1.south east) node {$a$};
                \node[band, below of=x1, below=-.4cm, minimum width=2cm] (x2) {} ;
                \draw ([xshift=-1cm,yshift=-.2cm]x2.south east) node {$l$};
                \node[pointNode, right of=x2, right=0.3cm] (x3) {} ;
                \draw ([xshift=-.15cm,yshift=-.2cm]x3.south east) node {$u$};
                \node[right of=x3] (x4) {$\ldots$} ;
                \node[pointNode, right of=x4] (x5) {} ;
                \draw ([xshift=-.15cm,yshift=-.2cm]x5.south east) node {$w$};
                \node[band, right of=x5, right=-.5cm, minimum width=2cm] (x6) {} ;
                \draw ([xshift=-1cm,yshift=-.2cm]x6.south east) node {$l'$};
                
                \draw[-] (x1) to[out=180, in=180] (x2) ;
                \draw[-] (x2) -- (x3) ;
                \draw[-] (x3) -- (x4) ;
                \draw[-] (x4) -- (x5) ;
                \draw[-] (x5) -- (x6) ;
            \end{tikzpicture}
        \end{adjustbox}
        \caption{Additional path $a$ has an extremity adjacent to one extremity of path $H_1$.}
        \label{fig:chainA-border}
    \end{subfigure}
    \caption{Illustrations of an additional path $a$ position on graph $H$ returned by Algorithm $\mathcal{A}_1$.}
    \label{fig:l1-chainA}
\end{figure}
    By Lemma~\ref{lem:l1inH}, $W$ has $N_1$ vertices, induces a path, and contains all vertices of $H_1$ not coloured in $c$.
    Let $l\in\lambda$ be the maximal subpath coloured in $c$ to which $a$ is attached.
    By Lines~\ref{alg:line:alink1}--\ref{alg:line:alink3}, one extremity of path $a$ is adjacent to
    one outer extremity of $l$.

    The subpath $l$ must contain an extremity of $H_1$.
    Otherwise, let $u$ and $w$ be the closest vertices not coloured in $c$ of $H_1$ on the two sides of $l$, as in Figure~\ref{fig:chainA-middle}.
    As non-$c$ vertices, both belong to $W$.
    Since $W$ also contains a vertex of $a$, the attachment extremity of $l$ would have degree three in $H[W]$, contradicting the fact that $H[W]$ is a path.
    Moreover, the attachment extremity is therefore also an extremity of $H_1$, proving Condition~\ref{w1ext}.
    Without loss of generality, assume that $l$ contains the left extremity of $H_1$.

    By Lines~\ref{alg:line:y1link} and \ref{alg:line:chained}, any vertex adjacent to a vertex $v_i$ is its subpath's extremity, hence its third neighbour is not coloured in $c$.
    Therefore, Conditions~\ref{wch} and~\ref{wnonc} prevent $W$ from using any vertex $v_i$.
    Since $W$ contains a vertex of $a$ adjacent to one extremity of $L_1$, if the other extremity of $L_1$ is not coloured in $c$, it would then have to contain every vertex of $H_1$, contradicting $|W|=|H_1|$.
    Thus, both extremities of $H_1$ lie in end subpaths of $\Lambda$.
    Denote the other one by $l'$. This proves Condition~\ref{wextc}.

    Finally, since both extremities of $l$ are adjacent to one vertex in $a$ and to a vertex not coloured in $c$, respectively, the path condition forces $W$ to contain all vertices of $l$.
    Hence, the corresponding end subpath of $H[W]$ has length between $|l|+1$ and $|l|+|a|$.
    Since $H[W]\approx L_1$, it must correspond to the other end subpath $l'$ of $L_1$. 
    Consequently, \begin{math}|l|<|l'|\leq |l|+|a|\end{math}.
    Line~\ref{alg:line:test} attaches $a$ to the longer end subpath in $\lambda$; the inequality therefore implies $l'\notin\lambda$.
    As $|a|=N_2-f(\lambda)$, we conclude that \begin{math}N_2-f(\lambda)\geq |l'|-|l|\end{math}, proving Condition~\ref{wa}.
\end{proof}

The preceding lemmas characterise the possible obstructions of additional copies of $L_1$ distinct from the canonical copy $H_1$. 
We now show that each obstruction can be repaired by the following result on Algorithm~$\mathcal{A}_1$.

\begin{lemma}\label{lem:algo2}
    Given an instance in $\mathcal{J}$ with its set $\Lambda$ and a perfect subset $\lambda$ of $\Lambda$, let graph $H$ be the graph returned by $\mathcal{A}_1$. 
    If $H$ is not a solution graph, then there exists another perfect subset $\lambda_\star$ of $\Lambda$ with \begin{math}|\lambda|\geq|\lambda_\star|\end{math} and \begin{math}f(\lambda)\leq f(\lambda_\star)\leq N_2\end{math} for which Algorithm~$\mathcal{A}_1$ returns a solution graph with fewer vertices than $H$. 
\end{lemma}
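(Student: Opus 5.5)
The plan is to first locate exactly why $H$ can fail to be a solution graph, and then to repair each failure by changing $\lambda$ locally, iterating until a solution is reached. By Lemma~\ref{lem:algo1}, $H$ is connected. By Lemmas~\ref{lem:l2-added} and~\ref{lem:l2-ign}, $H$ contains a unique maximum subpath of colour $c$, of order $N_2$; since every $l\in\Lambda$ satisfies $|l|<N_2$, this is the unique copy of $L_2$. Edge-covering should also hold automatically, and I will check it quickly. Every edge of $H_1$ is covered by $H_1$. Every edge incident to some $v_i$ or to $a$ lies on the maximum path described in those lemmas, because only the shortest branches $b_2$, $q$ and $r$ are excluded, and they consist of vertices of $H_1$.

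Hence $H$ fails to be a solution only through an additional induced copy $H[W]\approx L_1$ with $W\neq V(H_1)$. By Lemma~\ref{lem:l1inH} we have $|W|=N_1$, so $W$ must contain some $v_i$ or some vertex of $a$. Lemma~\ref{lem:l1wVi} or Lemma~\ref{lem:l1wA} then applies. In particular, $a$ exists in both cases, so $N_2>f(\lambda)$; whenever $N_2\le f(\lambda)$, $H$ is already a solution.

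I then treat the two obstructions separately. In the first case, Conditions~\ref{w2ext}--\ref{wl3} hold: $\lambda=\{l_1,l_2\}$ with $|l_1|=|l_2|=1$, both at the extremities of $H_1$, $f(\lambda)=3<N_2$, and some $l\in\Lambda\setminus\lambda$ has $|l|=3$. I take $\lambda_\star=\{l\}$. Then $f(\lambda_\star)=3=f(\lambda)\le N_2$ and $|\lambda_\star|<|\lambda|$. Perfection is immediate, since Condition~\ref{cond:twochains} is void and $|l|=3>f(\lambda_\star)-N_2$. By Lemma~\ref{lem:Hsize}, the new graph has exactly one vertex fewer. Moreover, $l$ is interior to $L_1$ (it was the omitted part of $p$), so Condition~\ref{wextc} fails, and Lemma~\ref{lem:l1wA} forbids any extra copy; Lemma~\ref{lem:l1wVi} is vacuous for $|\lambda_\star|=1$.

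In the second case, Conditions~\ref{wextc}--\ref{wa} hold: $a$ hangs at the $H_1$-extremity of an end subpath $l\in\lambda$, the opposite end subpath $l'$ of $\Lambda$ satisfies $l'\notin\lambda$, and $0<|l'|-|l|\le N_2-f(\lambda)$. I take $\lambda_\star=(\lambda\setminus\{l\})\cup\{l'\}$. Then $|\lambda_\star|=|\lambda|$ and $f(\lambda_\star)=f(\lambda)+|l'|-|l|\in\left(f(\lambda),N_2\right]$. Perfection holds because $f(\lambda_\star)\le N_2$ makes Condition~\ref{cond:twochains} void and Condition~\ref{cond:ign} trivial. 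By Lemma~\ref{lem:Hsize}, the order drops by $|l'|-|l|>0$.

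The main obstacle is that the graph produced from $\lambda_\star$ may itself still carry an obstruction. I would handle this by iteration rather than by a one-shot argument. Each repair keeps $f\le N_2$, does not increase $|\lambda|$, does not decrease $f$, and strictly decreases the order. In fact, the pair $(f(\lambda),-|\lambda|)$ increases lexicographically, and it ranges over a finite set, so the process terminates. All the stated properties are transitive along the chain, so the final $\lambda_\star$ satisfies $|\lambda|\ge|\lambda_\star|$, $f(\lambda)\le f(\lambda_\star)\le N_2$ and yields a strictly smaller graph. By the first paragraph, the final graph is a solution graph, because neither obstruction applies to it.
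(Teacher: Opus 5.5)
Your proof is correct and follows the paper's decomposition. It makes the same reduction to an extra copy $H[W]\approx L_1$ meeting some $v_i$ or the path $a$, appeals to Lemmas~\ref{lem:l1wVi} and~\ref{lem:l1wA}, and uses the same replacements $\lambda_\star=\{l\}$ and $\lambda_\star=(\lambda\setminus\{l\})\cup\{l'\}$.

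The only difference is the closing step. The paper asserts that one repair already produces a solution graph. That requires checking that $\mathcal{A}_1(\lambda_\star)$ avoids both exceptional configurations: in the swap case, $l\notin\lambda_\star$ rules out Condition~\ref{w2ext}, and an attachment of $a$ at $l'$ would force $|l|>|l'|$. You instead iterate, with termination guaranteed by the strictly decreasing order (or your lexicographic potential), and the required inequalities propagate by transitivity.

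Your route needs no such verification, so it is more robust. That verification is also the tersest point of the paper's argument: the remark that $f(\lambda_\star)\le N_2$ excludes Lemma~\ref{lem:l1wVi} is not by itself a reason, since Condition~\ref{wf3} is compatible with it. The paper's direct check, in exchange, yields the slightly stronger fact that a single repair always suffices.

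One small slip: in your first case, Condition~\ref{wextc} does not fail for $\{l\}$. The extremities of $H_1$ still lie in the singleton end subpaths $l_1,l_2$. What fails is Condition~\ref{w1ext}, since $a$ is attached to $l$, which contains no extremity of $H_1$. This is harmless given your iteration.
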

\begin{proof}
    Since $\lambda$ is perfect, Lemmas~\ref{lem:algo1},~\ref{lem:l2-added}, and~\ref{lem:l2-ign} show that $H$ is connected and contains a unique induced copy $H_2\approx L_2$.
    Since all edges created after Line~\ref{alg:line:l1} are incident to a vertex of $H_2$, the canonical copy $H_1$ remains induced and $H_1\approx L_1$.
    Moreover, these edges belong to $H_2$, so $H$ is edge-covered by $\{L_1,L_2\}$.
    Hence, if $H$ is not a solution graph, it contains another induced copy $H[W]\approx L_1$, induced by a vertex set $W\neq V(H_1)$.
    Since every vertex outside $H_1$ is either a vertex $v_i$ or a vertex of $a$, we distinguish two cases.

    Suppose first that $W$ contains a vertex $v_i$.
    By Lemma~\ref{lem:l1wVi}, $|\lambda|=2$, $f(\lambda)=3<N_2$, and there is an $l\in\Lambda\setminus\lambda$ with $|l|=3$.
    Let $\lambda_\star=\{l\}$, which is perfect because $f(\lambda_\star)=f(\lambda)$, and $|\lambda_\star|<|\lambda|$.
    Let \begin{math}H_\star=\mathcal{A}_1(\lambda_\star)\end{math}.
    The direct properties above apply to $H_\star$, and Lemma~\ref{lem:Hsize} gives $|V(H_\star)|<|V(H)|$.
    Moreover, since the two extremities of $H_1$ belong to $\lambda$, $l$ contains no extremity of $H_1$.
    Thus, $\lambda_\star$ satisfies neither the exceptional conditions of Lemma~\ref{lem:l1wVi} nor those of Lemma~\ref{lem:l1wA}.
    Consequently, $H_\star$ has no additional copy of $L_1$ and is a solution graph.

    Suppose now that $W$ contains a vertex of $a$.
    Lemma~\ref{lem:l1wA} implies that there exist two end subpaths $l$ and $l'$ of $H_1$ such that $l\in\lambda$, $l'\notin\lambda$, and \begin{math}0<|l'|-|l|\leq N_2-f(\lambda)\end{math}.
    Define \begin{math}\lambda_\star=(\lambda\setminus\{l\})\cup\{l'\}\end{math}.
    It follows that 
    \begin{math}
        |\lambda_\star|=|\lambda|\text{ and }f(\lambda)<f(\lambda_\star)=f(\lambda)+|l'|-|l|\leq N_2
    \end{math}
    Since $f(\lambda_\star)\leq N_2$, Conditions~\ref{cond:twochains} and \ref{cond:ign} are automatic, and $\lambda_\star$ is non-empty; hence it is perfect.
    For $|V(H_\star)|<|V(H)|$, the direct properties above apply again, and Lemma~\ref{lem:Hsize} gives $|V(H_\star)|<|V(H)|$.
    Since $f(\lambda_\star)\leq N_2$, $\lambda_\star$ does not satisfy the exceptional configuration of Lemma~\ref{lem:l1wVi}; and the replacement also removes the exceptional attachment described by Lemma~\ref{lem:l1wA}.
    Consequently, $H_\star$ has no additional copy of $L_1$ and is a solution graph.

    As required, both cases lead to
    \begin{math}|V(H_\star)|<|V(H)|,~ |\lambda_\star|\leq|\lambda|~ \text{and}~ f(\lambda)\leq f(\lambda_\star)\leq N_2.\end{math}
\end{proof}

In the following, $H^*$ denotes an optimal graph of an instance in $\mathcal{J}$.
The following lemmas show that, when searching for an optimal graph, it suffices to consider the graphs produced by Algorithm~$\mathcal{A}_1$ from perfect subsets.
For a solution graph $H$, let $\lambda_H$ denote its associated subpath set.

\subsubsection{Normalising an Optimal Solution}

We now prove the converse structural direction. 
Starting from an arbitrary optimal solution, we show that one can retain optimality while obtaining a perfect subpath set. 
Together with the repair provided by Lemma~\ref{lem:algo2}, this establishes that the canonical construction captures the optimum.

The next two lemmas show that Conditions~\ref{cond:notempty} and \ref{cond:twochains} are satisfied by every solution.

\begin{lemma}\label{lem:sol-lambda-a}
    Given an instance in $\mathcal{J}$ and a solution graph $H$, its subpath set $\lambda_H$ satisfies Condition~\ref{cond:notempty}.
\end{lemma}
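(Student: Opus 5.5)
We must show $|\lambda_H|>0$, i.e.\ that at least one maximal subpath of $\Lambda$ contains a used vertex. Equivalently, $V_{1+2}\neq\emptyset$. The plan is to argue by contradiction: assume $V_{1+2}=\emptyset$, so the vertex sets of $H_1$ and $H_2$ are disjoint, and derive that $H$ is disconnected.

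The key step uses the edge-covering constraint. Every edge of $H$ belongs to $H_1$ or to $H_2$: a covering copy is isomorphic to $L_1$ or $L_2$, and by $\epsilon(H,P)=0$ with the unary fingerprint these copies are exactly $H_1$ and $H_2$. By the remark following Definition~\ref{def:set}, every vertex of $H$ lies in $V(H_1)\cup V(H_2)$.

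Now suppose some edge $e=(u,w)$ has $u\in V(H_1)$ and $w\in V(H_2)$. Since $V(H_1)\cap V(H_2)=\emptyset$, $e$ lies inside neither $H_1$ nor $H_2$, so $e$ is uncovered, a contradiction. Hence no edge joins $V(H_1)$ to $V(H_2)$. Both sets are non-empty, because the instance lies in $\mathcal{J}$: the sample is without inclusion, so both copies exist and are distinct. Therefore $H$ splits into two components, contradicting connectivity.

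It follows that $V(H_1)\cap V(H_2)\neq\emptyset$. Take $v$ in this intersection. Since $H_2$ is monochromatic, $v$ has colour $c$, so $v\in V_{1+2}$. As a vertex of colour $c$ in $H_1\approx L_1$, $v$ lies in some maximal subpath $l\in\Lambda$, read through the isomorphism $H_1\approx L_1$ that identifies $\Lambda$ with the maximal $c$-subpaths of $H_1$. By Definition~\ref{def:lambda}, $l\in\lambda_H$, so $|\lambda_H|>0$. The only delicate point is this identification of $\Lambda$ inside $H_1$, which is immediate from the isomorphism, so no step is a real obstacle.
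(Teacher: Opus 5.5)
Your proof is correct and takes essentially the paper's route. Connectivity plus the edge-covering constraint (with the unary fingerprint making $H_1$ and $H_2$ the only covering copies) forces $H_1$ and $H_2$ to share a vertex. The paper phrases this directly, by taking an edge from $H_1$ to $V_{2-1}$ and noting it must lie in $H_2$, whereas you phrase it as a contradiction with connectivity.

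One small correction: $V(H_1)$ and $V(H_2)$ are non-empty because of the unary fingerprint and $\epsilon(H,P)=0$, not because the sample is without inclusion. Your argument in fact never needs the without-inclusion property.
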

\begin{proof}
    Since the instance belongs to $\mathcal{J}$, $L_1$ does not contain $L_2$.
    Thus, $H_2$ has a vertex outside $H_1$.
    As $H$ is connected and every vertex belongs to $H_1\cup H_2$, some edge $(u,v)$ joins a vertex $u\in H_1$ to a vertex $v\in V_{2-1}$.
    This edge cannot be covered by $H_1$, and therefore belongs to $H_2$.
    Hence, $u\in V_{1+2}$, so $u$ belongs to a member of $\lambda_H$ and $\lambda_H\neq\emptyset$.
\end{proof}

\begin{lemma}\label{lem:sol-lambda-b}
    Given an instance in $\mathcal{J}$ and a solution graph $H$, its subpath set $\lambda_H$ satisfies Condition~\ref{cond:twochains}. 
\end{lemma}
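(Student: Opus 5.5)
We argue directly from the structure of a solution graph $H$. Suppose $\lambda_H=\{l_1,l_2\}$ and $N_2<f(\lambda_H)$; otherwise Condition~\ref{cond:twochains} holds vacuously. We must show that $f(\lambda_H)-N_2\leq\lfloor|l_1|/2\rfloor+\lfloor|l_2|/2\rfloor-2$.

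\emph{Step 1: the shape of $H_2$ near $l_1$ and $l_2$.} By definition of $\lambda_H$, $H_2$ meets both $l_1$ and $l_2$. These are distinct maximal subpaths of colour $c$ in $H_1$, separated by a vertex not coloured in $c$, and $H_2$ is monochromatic. Hence $H_2$ leaves each $l_i$ through a vertex $u_i\in l_i$ that is adjacent to an added vertex of $V_{2-1}$. Within $l_i$, $H_2$ uses a contiguous segment: because $H_2$ is an induced path of colour $c$, $H_2\cap l_i$ is a subpath of $l_i$ with $u_i$ as one of its ends.

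The number of ignored vertices in $l_1\cup l_2$ is $|l_1|+|l_2|-|V_{1+2}|$. The added vertices on the connection between the two subpaths number at least one. Any other added vertices lie outside $l_1\cup l_2$ and only increase $|H_2|$. Therefore
\[
|V_{1-2}\cap(l_1\cup l_2)|\geq f(\lambda_H)-N_2 .
\]

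\emph{Step 2: bounding the ignored vertices in each $l_i$.} The key claim is that at most $\lfloor|l_i|/2\rfloor-1$ vertices of $l_i$ are ignored. Let $q_i$ be the portion of $l_i$ on the ignored side of $u_i$, and $p_i$ the used portion, including $u_i$. If $q_i=\emptyset$, the claim is immediate. Otherwise $u_i$ has degree three in the colour-$c$ subgraph: two neighbours in $l_i$ and one added neighbour. This creates a second colour-$c$ path that replaces $p_i$ by $q_i$.

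By uniqueness of $H_2$ as the longest colour-$c$ induced path, we need $|q_i|<|p_i|-1$ strictly, up to the count of $u_i$. Otherwise the path that uses $q_i$ instead of $p_i$ would be at least as long, and one could then shrink it to a second induced copy of $L_2$. The same swap, with the roles of the two branches exchanged, also gives a contradiction when $|q_i|\geq|p_i|$. Writing $|l_i|=|p_i|+|q_i|$, the strict inequality $|q_i|<|p_i|$ yields $|q_i|\leq\lfloor(|l_i|-1)/2\rfloor$.

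To sharpen this to $\lfloor|l_i|/2\rfloor-1$, I would use the fact that the branch through $q_i$, continued through $u_i$, must not produce an induced path of exactly $N_2$ vertices. The subtle point is that a path strictly longer than $N_2$ contains sub-copies of $L_2$. The exact parity bookkeeping here, distinguishing even from odd $|l_i|$, is the step I expect to be the main obstacle. I would handle it by comparing the branch lengths $|p_i|-1$ and $|q_i|$ at the junction $u_i$, exactly as in the proof of Lemma~\ref{lem:l2-ign}.

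\emph{Step 3: conclusion.} Summing the bounds of Step 2 over $i=1,2$ and combining with Step 1 gives the required inequality. One further check is needed: $H_2$ cannot re-enter $l_i$ at a second place without creating a chord or a cycle, which would contradict the induced-path requirement for $H_1$ or $H_2$. This guarantees that all ignored vertices of $l_i$ lie in the single branch $q_i$.
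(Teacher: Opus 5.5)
Your route is essentially the paper's: you bound the ignored vertices subpath by subpath, using a branch comparison at the junction where $H_2$ leaves $l_i$. However, the statement is false as written, and your argument breaks at the same place as the paper's.

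\textbf{Counterexample with a singleton subpath.} Take colours $c\neq d$ and $L_1=y\,u\,x\,t_1t_2t_3t_4\,z$, with $y,x,z$ of colour $d$ and all other vertices of colour $c$; let $N_2=5$. This instance is in $\mathcal{J}$ (e.g.\ with $\alpha=9$). Let $H$ be $L_1$ plus one vertex $v$ of colour $c$ adjacent to $u$ and $t_2$.
\begin{itemize}
\item The colour-$c$ vertices induce a tree in which $t_2$ has the three branches $v\,u$, $t_1$ and $t_3t_4$. Hence $u\,v\,t_2t_3t_4$ is the only copy of $L_2$.
\item A copy of $L_1$ must omit exactly one colour-$c$ vertex. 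Omitting any one other than $v$ leaves a disconnected graph or gives $u$ degree $3$, so $L_1$ occurs once.
\item The new edges $uv,vt_2$ lie in $H_2$.
\end{itemize}
So $H$ is a solution graph (even an optimal one) with $\lambda_H=\{\{u\},\{t_1,\dots,t_4\}\}$ and $f(\lambda_H)-N_2=1$, whereas $\lfloor 1/2\rfloor+\lfloor 4/2\rfloor-2=0$.

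In your Step~2 this is the case $q_1=\emptyset$ that you call immediate. Zero ignored vertices does not satisfy $0\le\lfloor|l_1|/2\rfloor-1=-1$. In the paper's proof the same case slips through at ``$H_2$ is unique, hence $p\neq q$'', which fails for $p=q=0$. Singleton subpaths need separate treatment (their per-subpath bound is $0$, not $-1$), and Condition~\ref{cond:twochains} must be amended before any proof can succeed.

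\textbf{The per-subpath bound also fails for $|l_i|\ge2$.} Independently of the above, $H_2$ may continue into added vertices on both sides of its segment in $l_i$. Take $L_1=y\,t_1t_2\,x\,s_1\cdots s_6\,z$ (same colouring convention) with $N_2=8$; this is in $\mathcal{J}$ for large $\alpha$. Add colour-$c$ vertices $a_1,a_2,v$ with edges $a_2a_1$, $a_1t_1$, $t_1v$, $vs_3$.
\begin{itemize}
\item The colour-$c$ subgraph is a tree whose only path on $8$ vertices is $a_2a_1t_1vs_3s_4s_5s_6$.
\item $L_1$ occurs once, and the new edges lie in $H_2$.
\end{itemize}
Yet $t_2$ is ignored, although $\lfloor|l_1|/2\rfloor-1=0$. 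Your swap compares $q_1=\{t_2\}$ with $p_1\setminus\{u_1\}=\emptyset$ and would demand $1<0$. The branch that actually has to beat $t_2$ at $t_1$ is the added path $a_1a_2$.

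Condition~\ref{cond:twochains} still holds here ($1\le2$), but only thanks to $a_1,a_2$. These are exactly the added vertices your Step~1 discards as ``only increasing $|H_2|$''. A correct argument must keep the exact count $N_2=|V_{1+2}|+|V_{2-1}|$ and charge each ignored stretch to the (possibly added) branch that dominates it.

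Your Step~3 does not address this configuration. Its justification is also incomplete, since a cycle through colour-$c$ vertices does not by itself violate inducedness of $H_1$ or $H_2$.

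\textbf{A minor point.} The parity sharpening you flag as the main obstacle is not one. In the single-junction picture, uniqueness gives $|q_i|\le|p_i|-2$. Together with $|p_i|+|q_i|=|l_i|$, this is exactly $|q_i|\le\lfloor|l_i|/2\rfloor-1$.
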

\begin{proof}
    All cases are immediate except when $\lambda_H=\{l_1,l_2\}$ and $f(\lambda_H)>N_2$. We therefore consider this case.
    Since path $H_2$ uses vertices from two distinct maximal subpaths $l_1$ and $l_2$, there is a non-empty subpath $L$ whose endpoints are adjacent to vertices $u\in l_1$ and $w\in l_2$, respectively. 
    Thus, $u$ and $w$ are junction vertices of the subtree $T_\lambda$ of $H$.
    Each of the junction vertices $u$ and $w$ has three branches, two of which are contained in $l_1$ and $l_2$, respectively. 
    Denoting their orders by $p,q$ and $r,s$, we have $|l_1|=p+q+1$ and $|l_2|=r+s+1$. The configuration is as in Figure~\ref{fig:allInTwo}, with the path $L$ replacing $v_1$.

    Since $H_2$ is the longest subpath of colour $c$ in $H$, and it contains $N_2$ vertices including $u,L,w$, we have \begin{math}N_2=\max\{p,q\}+\max\{r,s\}+|L|+2\end{math}.
    
    Since $H_2$ is unique, we have $p\neq q$, $r\neq s$, hence 
    \begin{displaymath}
        N_2\geq \left\lceil\frac{|l_1|}{2}\right\rceil+ \left\lceil\frac{|l_2|}{2}\right\rceil+|L|+2.
    \end{displaymath}
    
    If Condition~\ref{cond:twochains} failed, then, since \begin{math}f(\lambda_H)=|l_1|+|l_2|+1\end{math}, integrality would give 
    \begin{displaymath}
        N_2\leq \left\lceil\frac{|l_1|}{2}\right\rceil+ \left\lceil\frac{|l_2|}{2}\right\rceil+2.
    \end{displaymath}
    
    These inequalities imply $|L|\leq0$, contradicting the fact that $L$ is non-empty.
    Thus, Condition~\ref{cond:twochains} holds.
\end{proof}

Since Condition~\ref{cond:ign} may fail for optimal solutions, we first establish two auxiliary results before proving that an optimal solution satisfying \ref{cond:ign} exists.

\begin{lemma}\label{lem:prfsubset}
    Given an instance in $\mathcal{J}$, a solution graph $H$ with $N_2<f(\lambda_H)$ and a perfect subset $\lambda_\star\subsetneq\lambda_H$.
    If $N_2-f(\lambda_\star)\leq|\lambda_H|-|\lambda_\star|$, then there is a solution graph with at most $|V(H)|$ vertices and a perfect subpath set.
\end{lemma}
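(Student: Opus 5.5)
The plan is to run Algorithm~$\mathcal{A}_1$ on the perfect subset $\lambda_\star$ and compare orders. I will use Lemma~\ref{lem:Hsize} for the canonical graph and a lower bound on $|V(H)|$ for the given solution graph $H$.

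\textbf{Lower bound on $|V(H)|$.} Every vertex of $H$ lies in $V(H_1)\cup V_{2-1}$, so $|V(H)|=N_1+|V_{2-1}|$. The path $H_2$ meets each member of $\lambda_H$ and, by definition, no other member of $\Lambda$. Consecutive members of $\lambda_H$ along $H_2$ are separated in $H_1$ by non-$c$ vertices, which $H_2$ cannot contain. Hence between any two consecutive members visited by $H_2$ there is at least one vertex of $V_{2-1}$. Traversing $H_2$ therefore gives $|V_{2-1}|\geq|\lambda_H|-1$, and so
\[
|V(H)|\geq N_1+|\lambda_H|-1 .
\]

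\textbf{Order of the canonical graph.} Let $H_\star=\mathcal{A}_1(\lambda_\star)$. By Lemma~\ref{lem:Hsize},
\[
|V(H_\star)|=N_1+|\lambda_\star|-1+\max\{0,N_2-f(\lambda_\star)\}.
\]
If $N_2\leq f(\lambda_\star)$, this equals $N_1+|\lambda_\star|-1\leq N_1+|\lambda_H|-1$, because $\lambda_\star\subsetneq\lambda_H$. Otherwise it equals $N_1+|\lambda_\star|-1+N_2-f(\lambda_\star)$, which by the hypothesis $N_2-f(\lambda_\star)\leq|\lambda_H|-|\lambda_\star|$ is again at most $N_1+|\lambda_H|-1$. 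In both cases $|V(H_\star)|\leq|V(H)|$.

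\textbf{From $H_\star$ to a solution.} If $H_\star$ is a solution graph, its subpath set is $\lambda_\star$, which is perfect, and we are done. Otherwise, Lemma~\ref{lem:algo2} supplies a perfect $\lambda_{\star\star}$ with $\mathcal{A}_1(\lambda_{\star\star})$ a solution graph having fewer vertices than $H_\star$, hence at most $|V(H)|$. It remains to check that the subpath set of $\mathcal{A}_1(\mu)$ is exactly $\mu$ for any perfect $\mu$. By Lemmas~\ref{lem:l2-added} and~\ref{lem:l2-ign}, the copy $H_2$ lies in the tree $T_\mu$ and uses at least one vertex of every member of $\mu$. It uses none of $\Lambda\setminus\mu$, since those members form separate components by Lemma~\ref{lem:tree}.

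\textbf{Main obstacle.} I expect the delicate step to be the lower bound, specifically justifying that each transition of $H_2$ between distinct members of $\lambda_H$ costs a distinct added vertex. Inside $H_1$ two different maximal $c$-subpaths are never adjacent, and $H_2$ avoids non-$c$ vertices. Thus any edge of $H_2$ leaving a member of $\lambda_H$ goes either to a vertex of $V_{2-1}$ or within that member, and the count follows by walking along $H_2$. I will state this carefully, since the hypothesis $N_2<f(\lambda_H)$ is not even needed for this bound.
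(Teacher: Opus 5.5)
Your proposal is correct and follows essentially the same route as the paper. You bound $|V(\mathcal{A}_1(\lambda_\star))|\leq N_1+|\lambda_H|-1$ via Lemma~\ref{lem:Hsize} and the hypothesis, lower-bound $|V(H)|$ by $N_1+|\lambda_H|-1$ because consecutive members of $\lambda_H$ along $H_2$ need distinct added connectors, and fall back on Lemma~\ref{lem:algo2} if $\mathcal{A}_1(\lambda_\star)$ is not a solution. Two remarks. The paper justifies the connector bound by ``otherwise $H_1$ would not be induced''; this inducedness is what makes distinct members non-adjacent in $H$ and not just in $H_1$, so state it explicitly. Your extra check that the subpath set of $\mathcal{A}_1(\mu)$ is exactly $\mu$ fills in a step the paper leaves implicit.
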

\begin{proof}
    Let $H_\star=\mathcal{A}_1(\lambda_\star)$.
    By Lemma~\ref{lem:Hsize}, $|V(H_\star)|= N_1+|\lambda_\star|-1+ \max\{0,N_2-f(\lambda_\star)\}$.
    Since $\lambda_\star\subsetneq\lambda_H$ and $N_2-f(\lambda_\star)\leq|\lambda_H|-|\lambda_\star|$, we obtain $|\lambda_\star|<|\lambda_H|$ and $|\lambda_\star|+N_2-f(\lambda_\star)\leq|\lambda_H|$.
    Therefore, $|V(H_\star)|\leq N_1+|\lambda_H|-1$.
    
    Along $H_2$, two consecutive members of $\lambda_H$ require distinct vertices outside $H_1$ to join them; otherwise, $H_1$ would not be induced.
    Consequently, $|V(H)|\geq N_1+|\lambda_H|-1\geq |V(H_\star)|$.
    
    If $H_\star$ is not a solution graph, Lemma~\ref{lem:algo2} yields a strictly smaller solution graph having a perfect subpath set.
    In either case, there exists a solution graph with a perfect subpath set and at most $|V(H)|$ vertices.
\end{proof}

\begin{lemma}\label{lem:sol-max-ign}
    Given an instance in $\mathcal{J}$, a solution graph $H$ with $f(\lambda_H)>N_2$ and $|\lambda_H|>2$, let $l_a$ and $l_z$ be the subpaths of $\lambda_H$ containing the two extremities of $H_2$, respectively.
    The number of \emph{ignored} vertices in the subpaths of $\lambda_H$ is at least $f(\lambda_H)-N_2$ and at most 
    \begin{displaymath}
        \left\lfloor\frac{|l_a|}{2}\right\rfloor+\left\lfloor\frac{|l_z|}{2}\right\rfloor-2+\sum_{l_i\in\lambda_H\setminus\{l_a,l_z\}}\left(|l_i|-1\right).
    \end{displaymath}
    
\end{lemma}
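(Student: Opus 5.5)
The plan is to count vertices of colour $c$ along $H_2$ and then study, for each subpath of $\lambda_H$, how many of its vertices can be ignored. Write $I$ for the number of ignored vertices lying in subpaths of $\lambda_H$, and $k=|\lambda_H|$.

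\textbf{Lower bound.} Between two consecutive members of $\lambda_H$ along $H_2$, the path $H_2$ must pass through at least one vertex outside $H_1$. As in the proof of Lemma~\ref{lem:prfsubset}, a direct connection would contradict that $H_1$ is induced. So $H_2$ contains at least $k-1$ added vertices. Its used vertices are exactly the vertices of the subpaths of $\lambda_H$ that are not ignored, giving $\sum_{l_i\in\lambda_H}|l_i|-I$ of them. Hence
\[
N_2\geq \sum_{l_i\in\lambda_H}|l_i|-I+k-1=f(\lambda_H)-I,
\]
which gives $I\geq f(\lambda_H)-N_2$.

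\textbf{Upper bound.} The used vertices of each $l_i\in\lambda_H$ form a contiguous segment. Indeed, $H_2$ is an induced path of colour $c$, and $l_i$ is an induced path inside $H_1$, so $H_2\cap l_i$ is a union of subpaths of $l_i$. A gap between two pieces would create a cycle through $H_2$ and the gap, contradicting that $H_2$ is an induced path that only meets $l_i$ consistently. More simply, if two pieces were separated, then $H_2$ would leave and re-enter $l_i$, and uniqueness or maximality of $H_2$ would be violated. This is the step I would check most carefully.

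For an interior subpath $l_i\notin\{l_a,l_z\}$, at least one vertex is used by definition of $\lambda_H$. So at most $|l_i|-1$ of its vertices are ignored.

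For an end subpath, say $l_a$, $H_2$ enters $l_a$ at a junction vertex $u$ coming from outside $H_1$, and then continues inside $l_a$ in one direction. So $l_a$ splits at $u$ into two branches of sizes $p$ and $q$ with $p+q+1=|l_a|$. $H_2$ keeps one branch and ignores the other. Since $H_2$ is the unique maximum subpath of colour $c$ (as in the proof of Lemma~\ref{lem:sol-lambda-b}), the kept branch must be strictly longer than the ignored one. Otherwise, swapping the branches would give another copy of $L_2$ of at least the same length. Therefore the ignored branch has at most $\lfloor(|l_a|-1)/2\rfloor\leq\lfloor|l_a|/2\rfloor$ vertices.

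\textbf{Closing the gap to the $-2$ term.} The stated bound needs one less per end subpath, that is, at most $\lfloor|l_a|/2\rfloor-1$ ignored vertices in $l_a$. I would obtain this by a parity refinement. When $|l_a|$ is odd, strictness gives $q\leq(|l_a|-1)/2-\tfrac12$, hence $q\leq\lfloor|l_a|/2\rfloor-1$ by integrality. When $|l_a|$ is even, the bound $\lfloor(|l_a|-1)/2\rfloor=|l_a|/2-1$ already holds directly.

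Summing the end-subpath bounds for $l_a$ and $l_z$ with the interior bounds gives the stated maximum. The main obstacle is justifying rigorously the contiguity of used vertices within each $l_i$ and the branch comparison at the end subpaths, both relying on uniqueness of $H_2$. I would present these carefully with the branch notation of Lemma~\ref{lem:sol-lambda-b}.
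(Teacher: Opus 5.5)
Your argument follows the paper's proof step for step: at least $|\lambda_H|-1$ added vertices for the lower bound, at most $|l_i|-1$ ignored vertices in an interior member, and a branch comparison at the entry vertex of $l_a$ and $l_z$. It breaks at the same point as the paper's proof.

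The claim that the kept branch is strictly longer than the ignored one is empty when $|l_a|=1$. Both branches then have size $0$, swapping them changes nothing, and $l_a$ contributes $0$ ignored vertices, whereas the term you add is $\lfloor 1/2\rfloor-1=-1$. Your parity step shows the symptom: for $|l_a|=1$ it returns $q\le -1$. The paper's case ``no junction vertex in $l_a$'' has the identical defect.

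This cannot be repaired, because the statement is false there. Take $L_1=a\,x\,m_1m_2m_3\,y\,z$, with $x,y$ of a colour $d\neq c$ and all other vertices of colour $c$, $N_2=5$, and any $\alpha\geq 9$. Let $H$ be $L_1$ plus new $c$-vertices $v_1,v_2$ with edges $av_1$, $v_1m_2$, $m_2v_2$, $v_2z$.
\begin{itemize}
\item The $c$-vertices induce a spider centred at $m_2$ with legs of lengths $1,1,2,2$. So $a\,v_1m_2v_2\,z$ is the unique induced $c$-path on $5$ vertices, and none is longer.
\item A copy of $L_1$ must contain $x,y$, hence their only neighbours $a,m_1,m_3,z$. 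Its middle vertex must be a common neighbour of one of $a,m_1$ and one of $m_3,z$, which forces $m_2$. So $H_1$ is unique.
\item Every new edge lies in $H_2$.
\end{itemize}
Thus $H$ is a solution graph with $\lambda_H=\Lambda$, $|\lambda_H|=3$, $f(\lambda_H)=7>N_2$, $l_a=\{a\}$, $l_z=\{z\}$, and two ignored vertices $m_1,m_3$. The claimed upper bound is $0+0-2+2=0$.

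What the branch argument actually yields is the bound with each end term replaced by $\max\{0,\lfloor |l|/2\rfloor-1\}$, equivalently the statement under $|l_a|,|l_z|\geq 2$. The inequality the paper derives from this lemma in the proof of Lemma~\ref{lem:opt-lambda-c} must be rechecked with that correction.

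Independently, the contiguity step you flagged is not proved. A cycle formed by an ignored gap and a detour of $H_2$ through added vertices does not by itself contradict uniqueness or maximality of $H_2$, and the claim is false in general. Take $L_1=e\,w_1w_2w_3\,e'$ with $e,e'$ not of colour $c$, $N_2=6$, and new $c$-vertices $t_1,s_1,s_2,t_2$ forming $H_2=t_1w_1s_1s_2w_3t_2$. This gives a solution graph in which $H_2$ meets the maximal subpath $w_1w_2w_3$ in two separate pieces. The interior bound does not need contiguity. However, your end-subpath analysis, like the paper's, assumes that $H_2$ meets $l_a$ in a single segment running from the entry vertex to the extremity of $H_2$. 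A complete proof of the corrected bound must therefore also handle repeated visits of $H_2$ to an end subpath.
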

\begin{proof}
    Since $H_1$ is induced, each pair of consecutive members of $\lambda_H$ along $H_2$ must be separated by at least one \emph{added} vertex.
    Thus, at least $|\lambda_H|-1$ of the $N_2$ vertices of $H_2$ are \emph{added} vertices.
    At most $N_2-(|\lambda_H|-1)$ vertices belonging to members of $\lambda_H$ can consequently be \emph{used}.
    The number of \emph{ignored} vertices is therefore at least 
    \begin{displaymath}
        \sum_{l_i\in\lambda_H}|l_i|-\bigl(N_2-(|\lambda_H|-1)\bigr)=f(\lambda_H)-N_2.
    \end{displaymath}

    For the upper bound, every member of $\lambda_H$ contains at least one \emph{used} vertex.
    A member other than $l_a$ and $l_z$ therefore contains at most $|l_i|-1$ \emph{ignored} vertices.
    Consider now $l_a$.
    If $H_2$ has no junction vertex in $l_a$, maximality of $H_2$ forces all vertices of $l_a$ to be \emph{used}.
    Otherwise, the junction separates a \emph{used} branch from an \emph{ignored} branch.
    Since branches of equal length would produce two induced copies of $L_2$, the \emph{used} branch must be strictly longer.
    Hence, the \emph{ignored} branch has at most $\lfloor |l_a|/2\rfloor-1$ vertices.
    The same argument applies to $l_z$.
    Summing the bounds, we obtain the claimed expression 
    \begin{displaymath}
        \left\lfloor\frac{|l_a|}{2}\right\rfloor+ 
        \left\lfloor\frac{|l_z|}{2}\right\rfloor-2+ 
        \sum_{l_i\in\lambda_H\setminus\{l_a,l_z\}}(|l_i|-1).
    \end{displaymath}
\end{proof}

We can now show that an optimal solution satisfying Condition~\ref{cond:ign} always exists.

\begin{lemma}\label{lem:opt-lambda-c}
    Given an instance in $\mathcal{J}$, there exists an optimal graph $H^*$ such that its subpath set $\lambda^*$ satisfies Condition~\ref{cond:ign}.
\end{lemma}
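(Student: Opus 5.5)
We argue by taking an optimal graph $H^*$ and, if its subpath set $\lambda^*=\lambda_{H^*}$ violates Condition~\ref{cond:ign}, producing another solution graph with at most $|V(H^*)|$ vertices whose subpath set is perfect. That graph is then also optimal. First observe that Condition~\ref{cond:ign} can only fail when $f(\lambda^*)>N_2$. Indeed, every $l\in\Lambda$ has $|l|\geq 1$, so if $f(\lambda^*)\leq N_2$ then $|l|>0\geq f(\lambda^*)-N_2$. So assume $f(\lambda^*)>N_2$. If $|\lambda^*|=1$, then $f(\lambda^*)=|l_1|<N_2$ because the sample is without inclusion, a contradiction. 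Hence $|\lambda^*|\geq2$. Lemmas~\ref{lem:sol-lambda-a} and~\ref{lem:sol-lambda-b} already give Conditions~\ref{cond:notempty} and~\ref{cond:twochains} for $\lambda^*$.

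The natural candidate is $\lambda_\star=\lambda^*\setminus B$, where $B$ is the set of ``bad'' subpaths $l\in\lambda^*$ with $|l|\leq f(\lambda^*)-N_2$. We then want to apply Lemma~\ref{lem:prfsubset} with $\lambda_H=\lambda^*$. This needs two things.

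First, $\lambda_\star$ must be perfect. Removing a subpath $l$ lowers $f$ by $|l|+1$, so $f(\lambda_\star)-N_2$ decreases, and every survivor satisfies $|l|>f(\lambda^*)-N_2\geq f(\lambda_\star)-N_2$. Hence Condition~\ref{cond:ign} holds for $\lambda_\star$. We also need $\lambda_\star\neq\emptyset$ and, when $|\lambda_\star|=2$ with $f(\lambda_\star)>N_2$, Condition~\ref{cond:twochains}. If either fails, we would instead iterate: repeatedly remove a smallest subpath while $f>N_2$ and Condition~\ref{cond:ign} fails, stopping at the first perfect set.

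Second, we need $N_2-f(\lambda_\star)\leq|\lambda^*|-|\lambda_\star|=|B|$. Since
\begin{displaymath}
f(\lambda_\star)=f(\lambda^*)-\sum_{l\in B}(|l|+1),
\end{displaymath}
the inequality becomes $\sum_{l\in B}|l|\leq f(\lambda^*)-N_2$. This may be false if $B$ contains several short subpaths. In that case we remove them one at a time, in increasing order of size, and stop at the first stage where $f\leq N_2$ or Condition~\ref{cond:ign} holds. At that moment the removed mass exceeds the excess $f(\lambda^*)-N_2$ by less than the last removed size, which is at most the excess. The slack $|B'|$ compensates, up to checking integrality at the boundary.

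The main obstacle is proving that $\lambda_\star$ is non-empty and satisfies Condition~\ref{cond:twochains}. This is where Lemma~\ref{lem:sol-max-ign} enters, when $|\lambda^*|>2$. The ignored vertices number at least $f(\lambda^*)-N_2$ but at most $\lfloor|l_a|/2\rfloor+\lfloor|l_z|/2\rfloor-2+\sum(|l_i|-1)$. So the excess is bounded by what the end subpaths $l_a,l_z$ of $H_2$ can absorb, plus the interior subpaths. I would argue that $l_a$ and $l_z$ are never bad: each contains at least $\lceil|l|/2\rceil$ used vertices forming the long branch, and comparing with the lower bound on ignored vertices should force $|l_a|,|l_z|>f(\lambda^*)-N_2$ once the bad interior subpaths are accounted for. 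A bad subpath $l$ contributes at most $|l|-1<$ excess ignored vertices, and its removal reduces the excess by $|l|+1$. So after removal $l_a,l_z$ remain, which gives non-emptiness. When exactly $\{l_a,l_z\}$ remain with $f>N_2$, Condition~\ref{cond:twochains} follows from the same upper bound restricted to these two subpaths. The case $|\lambda^*|=2$ is handled directly: Condition~\ref{cond:twochains} gives excess $\leq\lfloor|l_1|/2\rfloor+\lfloor|l_2|/2\rfloor-2<\min(|l_1|,|l_2|)$ up to small cases, so Condition~\ref{cond:ign} already holds. Finally, Lemma~\ref{lem:prfsubset} (or Lemma~\ref{lem:algo2}) yields a solution graph with a perfect subpath set and at most $|V(H^*)|$ vertices, which is therefore optimal.
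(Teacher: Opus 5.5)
\paragraph{Where the proposal works.} Your main line is the paper's: repeatedly delete a member that is bad for the \emph{current} set, then apply Lemma~\ref{lem:prfsubset}. The accounting is correct once stated precisely. Each deleted $l$ satisfies $|l|\le f(\mu)-N_2$ for the current $\mu$, so the final deficit is $N_2-f(\lambda')\le 1\le|\lambda^*|-|\lambda'|$. Non-emptiness is not an obstacle: a singleton has $f=|l|<N_2$, so the deletion stops there at the latest.

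\paragraph{The gap: the exceptional pair.} The real difficulty is when deletion stops at a pair $\{l_1,l_2\}$ with $f>N_2$ that violates Condition~\ref{cond:twochains}. You dispose of this case by asserting that the end subpaths $l_a,l_z$ of $H_2$ are never bad, but you give no proof. No proof can come from the bounds of Lemma~\ref{lem:sol-max-ign} alone, because those hold for every solution graph, and your claim fails for some solution graphs. Take $L_1=x_0\,a_1a_2\,x_1\,m_1\cdots m_5\,x_2\,z_1z_2z_3\,x_3$, with the $x_i$ of pairwise distinct colours other than $c$, and $N_2=8$. Add $v_1$ adjacent to $a_2,m_3$ and $v_2$ adjacent to $m_3,z_1$. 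This is a solution graph with $H_2=a_1a_2v_1m_3v_2z_1z_2z_3$, yet $f(\lambda_H)-N_2=4\ge|l_a|,|l_z|$. So any valid argument here must use the optimality of $H^*$, which you only invoke in your last sentence.

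\paragraph{The case $|\lambda^*|=2$.} Your treatment here is also false. The inequality $\lfloor|l_1|/2\rfloor+\lfloor|l_2|/2\rfloor-2<\min\{|l_1|,|l_2|\}$ fails as soon as the two lengths are unbalanced, and Condition~\ref{cond:ign} genuinely fails for some optimal graphs. Take $L_1=x_0\,a_1\cdots a_6\,x_1\,b_1\cdots b_{10}\,x_2$, $N_2=11$, and one added vertex $v$ adjacent to $a_4$ and $b_5$. Then:
\begin{itemize}
\item $H_2=a_1\cdots a_4\,v\,b_5\cdots b_{10}$ is the unique longest colour-$c$ path, and $H_1$ is unique;
\item the order $N_1+1$ is optimal;
\item but $f(\lambda_H)-N_2=6=|l_1|$, and $l_1$ contains an end of $H_2$, which again contradicts your ``never bad'' principle.
\end{itemize}
This case is harmless if you simply run the deletion, which yields the perfect singleton $\{l_2\}$.

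\paragraph{What the paper does instead.} The paper handles the exceptional pair by accepting that an end subpath may be deleted. Your (correct) computation that a surviving pair $\{l_a,l_z\}$ satisfies Condition~\ref{cond:twochains} is exactly its comparison of $N_2\ge\lceil|l_a|/2\rceil+\lceil|l_z|/2\rceil+2r-1$ with $N_2\le\lceil|l_1|/2\rceil+\lceil|l_2|/2\rceil+2$, which shows $\{l_1,l_2\}\neq\{l_a,l_z\}$. The paper then takes a deleted end subpath $l$ and proves $\max\{|l|+|l_1|,|l_2|\}\ge N_2-r+1$, where $r=|\lambda^*|$. It concludes using the candidates $\{l,l_1\}$ or $\{l_2\}$, together with $|V(H^*)|\ge N_1+r-1$ and Lemma~\ref{lem:algo2}. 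That second half is what your proposal is missing.
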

\begin{proof}
    Choose an optimal graph $H^*$ and write $\lambda^*=\lambda_{H^*}$.
    By Lemmas~\ref{lem:sol-lambda-a} and~\ref{lem:sol-lambda-b}, $\lambda^*$ satisfies Conditions~\ref{cond:notempty} and \ref{cond:twochains}.
    Suppose that it does not satisfy Condition~\ref{cond:ign}.
    Then $f(\lambda^*)>N_2$.

    Starting with $\mu_0=\lambda^*$, repeatedly remove a member $l_i\in\mu_i$ such that $|l_i|\leq f(\mu_i)-N_2$, and set $\mu_{i+1}=\mu_i\setminus\{l_i\}$.
    The process terminates at a non-empty subset $\lambda'$, because every singleton satisfies Condition~\ref{cond:ign}.
    By construction, $\lambda'$ satisfies that condition.

    The quantity controlled by a deletion is the deficit $d(\mu)=N_2-f(\mu)$.
    Indeed, 
    \begin{displaymath}
        f(\mu_{i+1})=f(\mu_i)-|l_i|-1 \qquad d(\mu_{i+1})=d(\mu_i)+|l_i|+1\leq1, \forall i\geq0.
    \end{displaymath}
    Since at least one deletion occurs, this yields
    \begin{equation}\label{eq:deletion-deficit}
       N_2-f(\lambda')\leq1 \leq |\lambda^*|-|\lambda'|.
    \end{equation}

    If $\lambda'$ satisfies Condition~\ref{cond:twochains}, it is perfect.
    Lemma~\ref{lem:prfsubset}, together with \eqref{eq:deletion-deficit}, 
    then gives a solution graph with at most $|V(H^*)|$ vertices and a perfect subpath set.
    By optimality, this solution is optimal, and the result follows.

    It remains to consider the only possible exceptional case: $\lambda'=\{l_1,l_2\}$ and Condition~\ref{cond:twochains} fails.
    Put $r=|\lambda^*|$; then $r>2$.
    Let $l_a$ and $l_z$ be the members of $\lambda^*$ containing the two extremities of $H_2$, and let $k$ be the number of \emph{ignored} vertices in the members of $\lambda^*$.
    By Lemma~\ref{lem:sol-max-ign},
    \begin{displaymath}
        f(\lambda^*)-N_2\leq k\leq
        \left\lfloor\frac{|l_a|}{2}\right\rfloor+
        \left\lfloor\frac{|l_z|}{2}\right\rfloor-2+
        \sum_{l_i\in\lambda^*\setminus\{l_a,l_z\}}(|l_i|-1).
    \end{displaymath}
    
    Rearranging gives
    \begin{equation}\label{eq:end-subpaths-lower}
       N_2\geq
       \left\lceil\frac{|l_a|}{2}\right\rceil+
       \left\lceil\frac{|l_z|}{2}\right\rceil+2r-1.
    \end{equation}
    Since Condition~\ref{cond:twochains} fails for $\{l_1,l_2\}$,
    \begin{equation}\label{eq:two-subpaths-upper}
       N_2\leq
       \left\lceil\frac{|l_1|}{2}\right\rceil+
       \left\lceil\frac{|l_2|}{2}\right\rceil+2.
    \end{equation}
    Equations~\eqref{eq:end-subpaths-lower} and \eqref{eq:two-subpaths-upper} show that $\{l_1,l_2\}\neq\{l_a,l_z\}$, since otherwise $2r-1\leq2$, contrary to $r>2$.

    Thus, one extremity of $H_2$ belongs to a removed member.
    Without loss of generality, let this member be $l=l_a$ and let $l_z=l_1$.
    From \eqref{eq:end-subpaths-lower} and $r\geq3$,
    \begin{equation}\label{eq:l-l1-lower}
       N_2\geq
       \left\lceil\frac{|l|}{2}\right\rceil+
       \left\lceil\frac{|l_1|}{2}\right\rceil+5.
    \end{equation}
    
    Suppose, for contradiction, that $|l|+|l_1|\leq N_2-r$ and $|l_2|\leq N_2-r$.
    Therefore, $|l_1|\leq N_2-r-|l|$, and \eqref{eq:two-subpaths-upper} would imply 
    \begin{displaymath}
        N_2 \leq \frac{N_2-r-|l|+1}{2} +\frac{N_2-r+1}{2}+2 =N_2-r-\frac{|l|}{2}+3,
    \end{displaymath}
    hence $r+|l|/2\leq3$, contradicting $r\geq3$ and $|l|\geq1$.
    Consequently, \begin{math}\max\{|l|+|l_1|,|l_2|\}\geq N_2-r+1\end{math}, and we consider the two cases separately.

    Suppose first that $|l|+|l_1|\geq N_2-r+1$, and set $\lambda_\star=\{l,l_1\}$.
    Equation~\eqref{eq:l-l1-lower} ensures Condition~\ref{cond:twochains} and Condition~\ref{cond:notempty} is immediate.
    Then 
    \begin{displaymath}
        N_2-f(\lambda_\star) =N_2-|l|-|l_1|-1\leq r-2.
    \end{displaymath}
    The instance belonging to $\mathcal{J}$ implies that $|l|+1\leq N_2$ and $|l_1|+1\leq N_2$.
    Since \begin{math}f(\lambda_\star)=|l|+|l_1|+1\end{math}, we obtain $|l|,|l_1|\geq f(\lambda_\star)-N_2$.
    If both inequalities are strict, $\lambda_\star$ also satisfies Condition~\ref{cond:ign} and is perfect.
    Lemma~\ref{lem:prfsubset} then produces an optimal solution with a perfect subpath set.
    If equality holds for one member, say $|l|=f(\lambda_\star)-N_2$, then $N_2=|l_1|+1$.
    The singleton $\{l_1\}$ is perfect, and Algorithm~$\mathcal{A}_1$ constructs a graph with $N_1+1$ vertices.
    Lemma~\ref{lem:algo2} guarantees a solution of no greater order.
    Since $r>2$, this order is strictly smaller than $|V(H^*)|$, contradicting the optimality of $H^*$.
    Thus, this equality case cannot occur.
    The case of equality for $l_1$ is symmetric.

    Finally, suppose that $|l_2|\geq N_2-r+1$.
    The singleton $\lambda_\star=\{l_2\}$ is perfect, and Lemma~\ref{lem:Hsize} gives 
    \begin{displaymath}
        |V(\mathcal{A}_1(\lambda_\star))|=N_1+\max\{0,N_2-|l_2|\}\leq N_1+r-1.
    \end{displaymath}
    Since $H^*$ contains at least $r-1$ added vertices, 
    \begin{displaymath}
        |V(H^*)|\geq N_1+r-1\geq|V(\mathcal{A}_1(\lambda_\star))|.
    \end{displaymath}
    If the constructed graph is not a solution, Lemma~\ref{lem:algo2} replaces it by a strictly smaller solution graph with a perfect subpath set, contradicting the optimality of $H^*$.
    Thus, in every case, there exists an optimal solution graph with a perfect subpath set.
\end{proof}

Lemmas~\ref{lem:sol-lambda-a},~\ref{lem:sol-lambda-b} and~\ref{lem:opt-lambda-c} imply the following corollary. 

\begin{corollary}\label{cor:opt-lambda}
    Given an instance in $\mathcal{J}$, there exists an optimal graph $H^*$ whose subpath set $\lambda_{H^*}$ is perfect. 
\end{corollary}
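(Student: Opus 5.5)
The plan is to combine the three preceding lemmas. Lemma~\ref{lem:opt-lambda-c} does the real work; Lemmas~\ref{lem:sol-lambda-a} and~\ref{lem:sol-lambda-b} supply the remaining two conditions of Definition~\ref{def:perfect}.

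First, invoke Lemma~\ref{lem:opt-lambda-c}. It gives an optimal graph $H^*$ whose subpath set $\lambda^*=\lambda_{H^*}$ satisfies Condition~\ref{cond:ign}. We must be careful here, because its proof does not always return a graph whose own subpath set is $\lambda^*$. In the branches that go through Lemma~\ref{lem:prfsubset} or Lemma~\ref{lem:algo2}, the resulting solution graph comes from Algorithm~$\mathcal{A}_1$ on a perfect subset, possibly after one repair step. It has at most $|V(H^*)|$ vertices, so by minimality it is itself optimal. We therefore read Lemma~\ref{lem:opt-lambda-c} as delivering either the original optimal graph, if it already satisfies Condition~\ref{cond:ign}, or an optimal graph of the form $\mathcal{A}_1(\lambda_\star)$ with $\lambda_\star$ perfect.

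Second, we need Conditions~\ref{cond:notempty} and~\ref{cond:twochains}. These hold for the subpath set of every solution graph by Lemmas~\ref{lem:sol-lambda-a} and~\ref{lem:sol-lambda-b}, so they hold in particular for the optimal graph obtained in the first step. Together with Condition~\ref{cond:ign}, this makes its subpath set perfect.

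The main obstacle is checking that, for a graph $H=\mathcal{A}_1(\lambda_\star)$ which is a solution, the subpath set $\lambda_H$ of Definition~\ref{def:lambda} equals $\lambda_\star$. Only then does Condition~\ref{cond:ign} transfer from $\lambda_\star$ to $\lambda_H$. The argument uses Lemmas~\ref{lem:tree}, \ref{lem:l2-added} and~\ref{lem:l2-ign}. The unique maximum $c$-coloured subpath $H_2$ lies inside the tree $T_{\lambda_\star}$, which contains exactly the members of $\lambda_\star$. Moreover, each member of $\lambda_\star$ contributes at least one used vertex: in the case $N_2<f(\lambda_\star)$, the branches excluded from $H_2$ are strictly shorter than the corresponding member, by Condition~\ref{cond:ign}. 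Hence $\lambda_H=\lambda_\star$, which completes the proof.
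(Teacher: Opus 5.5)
Your proposal is correct and follows the paper's route: the paper obtains the corollary directly by combining Lemmas~\ref{lem:sol-lambda-a}, \ref{lem:sol-lambda-b} and~\ref{lem:opt-lambda-c}. Your extra check that $\lambda_{\mathcal{A}_1(\lambda_\star)}=\lambda_\star$ for a perfect $\lambda_\star$ (via Lemmas~\ref{lem:tree}, \ref{lem:l2-added}, \ref{lem:l2-ign} and Condition~\ref{cond:ign}) is sound, and it makes explicit an identification that the paper uses without stating it in Lemmas~\ref{lem:prfsubset} and~\ref{lem:opt-lambda-c}.
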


Given a set $\Lambda$, we define a function $q$ such that for every perfect subset $\lambda\subseteq\Lambda$, 
$q(\lambda)=|\lambda|-1+\max\{0,N_2-f(\lambda)\}$.

We now establish that Algorithm~$\mathcal{A}_1$ returns an optimal graph when applied to a perfect subset $\lambda$ that minimises the quantity $q(\lambda)$.

\begin{lemma}\label{lem:algo3}
    Given an instance in $\mathcal{J}$ and a perfect subset $\lambda$ of $\Lambda$ which minimises the quantity $q(\lambda)$, Algorithm~$\mathcal{A}_1$ returns an optimal graph. 
\end{lemma}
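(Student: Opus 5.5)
The plan is to compare the graph $\mathcal{A}_1(\lambda)$ with an optimal graph via the quantity $q$, in three steps.

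First, I will show that $\mathcal{A}_1(\lambda)$ is a solution graph. Suppose it is not. Lemma~\ref{lem:algo2} then yields a perfect subset $\lambda_\star$ such that $\mathcal{A}_1(\lambda_\star)$ is a solution graph with strictly fewer vertices than $\mathcal{A}_1(\lambda)$. By Lemma~\ref{lem:Hsize}, the order of $\mathcal{A}_1(\mu)$ is $N_1+q(\mu)$ for every perfect $\mu$. Hence $q(\lambda_\star)<q(\lambda)$, which contradicts the minimality of $q(\lambda)$. Lemma~\ref{lem:algo2} only requires $\lambda$ to be perfect, so no further case analysis is needed here.

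Second, I will prove a lower bound valid for every solution graph $H$: $|V(H)|\geq N_1+q(\lambda_H)$. Every vertex of $H$ lies in $H_1\cup H_2$, so $|V(H)|=N_1+|V_{2-1}|$, and it suffices to bound the number of added vertices in two ways.
\begin{itemize}
\item As in the proofs of Lemmas~\ref{lem:prfsubset} and~\ref{lem:sol-max-ign}, consecutive members of $\lambda_H$ along $H_2$ must be joined by at least one added vertex, since $H_1$ is induced. This gives $|V_{2-1}|\geq|\lambda_H|-1$.
\item The used vertices all lie in members of $\lambda_H$, so $|V_{1+2}|\leq\sum_{l_i\in\lambda_H}|l_i|$. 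Therefore $|V_{2-1}|=N_2-|V_{1+2}|\geq N_2-f(\lambda_H)+|\lambda_H|-1$.
\end{itemize}
Taking the maximum of the two bounds gives $|V_{2-1}|\geq q(\lambda_H)$. Note that $q$ is defined by the same formula for any subset, so this bound does not require $\lambda_H$ to be perfect.

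Third, I will conclude with Corollary~\ref{cor:opt-lambda}. It provides an optimal graph $H^*$ whose subpath set $\lambda^*$ is perfect. Combining the steps,
\[
|V(\mathcal{A}_1(\lambda))|=N_1+q(\lambda)\leq N_1+q(\lambda^*)\leq|V(H^*)|.
\]
The first inequality uses the minimality of $q(\lambda)$ over perfect subsets, and the second is the lower bound of the second step. Since $\mathcal{A}_1(\lambda)$ is a solution graph by the first step, it is optimal.

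I expect the main obstacle to be the lower bound in the second step. One must justify carefully that the junctions between distinct members of $\lambda_H$ along $H_2$ each consume a distinct added vertex. It must also be checked that no used vertex lies outside the members of $\lambda_H$, which holds by Definition~\ref{def:lambda}.
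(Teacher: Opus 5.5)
Your proposal is correct and follows the paper's proof step for step. You use Lemma~\ref{lem:algo2} with Lemma~\ref{lem:Hsize} and the minimality of $q$ to show that $\mathcal{A}_1(\lambda)$ is a solution, count added vertices to get $|V(H^*)|\geq N_1+q(\lambda^*)$, and invoke Corollary~\ref{cor:opt-lambda} to obtain a perfect $\lambda^*$ to compare against. The differences are cosmetic: you state the lower bound for every solution graph, and you skip the paper's unneeded extra step establishing the equality $|V(H^*)|=N_1+q(\lambda^*)$ via Lemma~\ref{lem:algo2} applied to $\lambda^*$, since only the inequality is used.
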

\begin{proof}
    Let $H=\mathcal A_1(\lambda)$.
    By Lemma~\ref{lem:Hsize}, $|V(H)|=N_1+q(\lambda)$.

    The direct properties of the construction show that $H$ is connected, contains the canonical induced copy of $L_1$, and contains a unique induced copy of $L_2$.
    If $H$ were not a solution graph, Lemma~\ref{lem:algo2} would provide a perfect subset $\lambda_\star$ for which $\mathcal A_1(\lambda_\star)$ is a solution graph strictly smaller than $H$.
    Lemma~\ref{lem:Hsize} would then give $q(\lambda_\star)<q(\lambda)$, contradicting the choice of $\lambda$.
    Hence, $H$ is a solution graph.

    It remains to prove optimality.
    By Corollary~\ref{cor:opt-lambda}, choose an optimal graph $H^*$ whose subpath set $\lambda^*$ is perfect.
    In the unique copy $H_2\approx L_2$, at least $|\lambda^*|-1$ vertices outside $H_1$ are needed to join consecutive members of $\lambda^*$;
    moreover, at most $\sum_{l\in\lambda^*}|l|$ vertices of $H_2$ lie in $H_1$.
    Therefore, $|V_{2-1}|\geq \max\left\{|\lambda^*|-1, N_2-\sum_{l\in\lambda^*}|l|\right\}=q(\lambda^*)$, and hence $|V(H^*)|\geq N_1+q(\lambda^*)$.

    On the other hand, Lemma~\ref{lem:Hsize} gives $|V(\mathcal A_1(\lambda^*))|=N_1+q(\lambda^*)$.
    If this graph were not a solution, Lemma~\ref{lem:algo2} would yield a solution graph of smaller order, contradicting the optimality of $H^*$.
    Thus, $|V(H^*)|=N_1+q(\lambda^*)$.
    Finally, $q(\lambda)\leq q(\lambda^*)$ by the choice of $\lambda$, whereas the optimality of $H^*$ gives $|V(H^*)|\leq|V(H)|$.
    Consequently, equality holds throughout, and $H$ is optimal.
\end{proof}

The preceding results yield the structural characterisation underlying the algorithm. 
For an instance in $\mathcal{J}$, let $\operatorname{OPT}(L_1,L_2)$ denote its minimum solution order. 
Then
\begin{equation}\label{eq:structural-characterisation}
    \operatorname{OPT}(L_1,L_2)
    =N_1+\min_{\substack{\lambda\subseteq\Lambda\setminus\lambda\text{ perfect}}}
    \max\{q(\lambda)\}.
\end{equation}
Indeed, Corollary~\ref{cor:opt-lambda} guarantees an optimal solution with a perfect subpath set, and Lemma~\ref{lem:algo3} shows that minimising the right-hand side produces an optimal solution. 
Equation~\eqref{eq:structural-characterisation} therefore separates the structural argument from the dynamic programme below.

\subsection{Optimising over Perfect Subsets}\label{sec:subpb}
For $\lambda\subseteq\Lambda$, write \begin{math}\sum\lambda=\sum_{l\in\lambda}|l|\end{math} for its total subpath length.

Consider an instance $(L_1,N_2,c)$ in $\mathcal{J}$, with the corresponding set $\Lambda$.
Since instances in $\mathcal{J}$ are \emph{without inclusion} (see Definition~\ref{def:ech}), every $l\in\Lambda$ satisfies $0<|l|<N_2$.

By Lemma~\ref{lem:algo3}, it remains to minimise the quantity $q(\lambda)$ over the perfect subsets $\lambda$ of $\Lambda$.
By Definition~\ref{def:funcf}, we have 
\begin{displaymath}
    q(\lambda)=\max\{|\lambda|-1,N_2-\sum\lambda\}.
\end{displaymath}

\begin{problem}[$\mathcal{P}$] \label{pb:P} 
    \item []\textbf{ Given } an integer $N_2>0$ and a set of subpaths $\Lambda$ where every $l\in\Lambda$ satisfies $0<|l|<N_2$, determine a subset $\lambda\subseteq\Lambda$ such that :
    \begin{displaymath}
        \lambda=\mathop{\arg\min}_{\substack{\lambda\subseteq\Lambda\setminus \lambda\text{ is perfect}}}
        \left(q(\lambda)=\max\left\{|\lambda|-1,\,N_2-\sum\lambda\right\}\right)
    \end{displaymath}
\end{problem}

Given a solution $\lambda$ of the problem $\mathcal{P}$, the following lemma establishes the range of $\sum\lambda$.

\begin{lemma}\label{lem:range}
    Given an instance of $\mathcal{P}$ and $\lambda$ its solution, $\sum\lambda< 2\times N_2$.
\end{lemma}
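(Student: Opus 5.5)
We argue by contradiction: assume $\lambda$ is an optimal perfect subset for $\mathcal{P}$ with $\sum\lambda\geq 2N_2$, and show that some perfect subset has strictly smaller $q$.

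Since $\sum\lambda\geq 2N_2>N_2$, we have $N_2-\sum\lambda<0$, so $q(\lambda)=|\lambda|-1$ and $f(\lambda)-N_2=\sum\lambda+|\lambda|-1-N_2\geq N_2+|\lambda|-1$. Condition~\ref{cond:ign} requires every $l\in\lambda$ to satisfy $|l|>f(\lambda)-N_2\geq N_2$. This contradicts the standing hypothesis $|l|<N_2$ of $\mathcal{P}$, which comes from the instance being \emph{without inclusion}.

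So no perfect subset, optimal or not, has $\sum\lambda\geq 2N_2$. Hence any solution of $\mathcal{P}$ satisfies $\sum\lambda<2N_2$. The inequality $|\lambda|\geq 1$ (Condition~\ref{cond:notempty}) is only needed to ensure $f(\lambda)-N_2\geq N_2$, which holds since $|\lambda|-1\geq 0$.

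The argument uses only Condition~\ref{cond:ign} together with the bound $|l|<N_2$, so no exchange argument on $q$ is required. The only point to check is the arithmetic $f(\lambda)-N_2\geq N_2$, which follows directly from Definition~\ref{def:funcf}. This also gives the sharper bound $\sum\lambda<2N_2-|\lambda|+1$, which I would note for the dynamic programme, since it bounds the table dimension.
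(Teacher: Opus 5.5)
Your proof is correct and is essentially the paper's argument: both derive a contradiction from Condition~\ref{cond:ign} together with $|l|<N_2$, with Condition~\ref{cond:notempty} guaranteeing that some $l\in\lambda$ exists; neither uses optimality, so the bound holds for every perfect subset. The only difference is that the paper passes through $M=\max_{l\in\Lambda}|l|$ to obtain the intermediate bound $\sum\lambda<N_2+M$, whereas you compare with $N_2$ directly.
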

\begin{proof}
    Let $M=\max_{l\in\Lambda}|l|$.
    Suppose, for contradiction, that $\sum\lambda-N_2\geq M$.
    Since $\lambda$ is perfect, every $l\in\lambda$ satisfies $|l|>f(\lambda)-N_2 =\sum\lambda+|\lambda|-1-N_2 \geq M+|\lambda|-1\geq M$.
    This contradicts $|l|\leq M$.
    Hence, $\sum\lambda<N_2+M<2N_2$, because $M<N_2$.
\end{proof}

Observe first that no vertex can be \emph{ignored} when the subset contains a single subpath.
Moreover, any subset containing at least three subpaths automatically satisfies Conditions~\ref{cond:notempty} and~\ref{cond:twochains}. 
Consequently, the structural constraints defining a perfect subset differ significantly between the cases $1\leq|\lambda|\leq2$ and $3\leq|\lambda|\leq|\Lambda|$.
We therefore study these two cases separately.

\begin{lemma}\label{lem:lmbd12poly}
    Given an instance of $\mathcal{P}$, we can compute a perfect subset $\lambda\subseteq\Lambda$ of minimum quantity with at most two subpaths in $O(|\Lambda|^2)$ time.
\end{lemma}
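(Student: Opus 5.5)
The approach is brute-force enumeration: every subset of $\Lambda$ with at most two subpaths is examined directly. There are $|\Lambda|$ singletons and $\binom{|\Lambda|}{2}$ pairs, so $O(|\Lambda|^2)$ candidates in total. For each candidate, perfectness is tested and $q(\lambda)$ is evaluated in $O(1)$ time, and the minimiser is retained. Before enumerating, one pass over $\Lambda$ records the lengths $|l|$ (available from a scan of $L_1$). Each $f(\lambda)$ is then an $O(1)$ expression: $f(\{l\})=|l|$ and $f(\{l,l'\})=|l|+|l'|+1$.

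For a singleton $\{l\}$ the check is trivial. Condition~\ref{cond:notempty} holds, Condition~\ref{cond:twochains} is vacuous, and Condition~\ref{cond:ign} holds because $|l|<N_2$ gives $f(\{l\})-N_2<0<|l|$. Hence every singleton is perfect, with $q(\{l\})=N_2-|l|$. The best singleton is therefore a longest member of $\Lambda$, found in $O(|\Lambda|)$.

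For a pair $\{l_i,l_j\}$, ordered as in $L_1$ so that $l_i$ plays the role of $l_1$ in Definition~\ref{def:perfect}, there are two cases.
\begin{itemize}
\item If $N_2\geq f$, Conditions~\ref{cond:twochains} and~\ref{cond:ign} hold automatically, the latter because $f-N_2\leq 0<|l|$. Then $q=\max\{1,N_2-f\}$.
\item If $N_2<f$, we test $f-N_2\leq\lfloor|l_i|/2\rfloor+\lfloor|l_j|/2\rfloor-2$ together with $\min\{|l_i|,|l_j|\}>f-N_2$, and then $q=1$.
\end{itemize}
Each test is a constant number of integer operations. Condition~\ref{cond:twochains} is symmetric in the two members, so the ordering along $L_1$ does not actually matter here.

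The only step needing care is that the formula for $q$ in Problem~$\mathcal{P}$ matches Lemma~\ref{lem:algo3}'s $q(\lambda)=|\lambda|-1+\max\{0,N_2-f(\lambda)\}$. Since $N_2-\sum\lambda=N_2-f(\lambda)+|\lambda|-1$, the two expressions coincide, so evaluating either form is correct. Comparing the best singleton with the best pair gives the required subset in $O(|\Lambda|^2)$ time overall. This is also optimal with respect to Problem~$\mathcal{P}$ restricted to subsets of size at most two, because all such subsets were examined.
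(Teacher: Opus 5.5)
Your enumeration of the $|\Lambda|$ singletons and $\binom{|\Lambda|}{2}$ pairs, with constant-time checks of Conditions~\ref{cond:notempty}--\ref{cond:ign}, is exactly the paper's argument. Your explicit case analysis is a correct elaboration of it: every singleton is perfect with $q=N_2-|l|$, and pairs with $N_2\geq f$ are automatically perfect.

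There is, however, an off-by-one slip in your first pair bullet. For a pair, $\sum\lambda=f(\lambda)-1$. So when $N_2\geq f$ the quantity is $\max\{1,N_2-\sum\lambda\}=1+(N_2-f)$, not $\max\{1,N_2-f\}$. Taken literally, your formula undervalues such pairs by one whenever $N_2-f\geq 1$.

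For example, take $N_2=10$ and subpath lengths $5,4,4$. Your formula assigns value $1$ to both $\{5,4\}$ and $\{4,4\}$, but $\{4,4\}$ actually has quantity $2$. So the retained candidate could be wrong, and so could the order $N_1+q$ that is later compared with $\alpha$.

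Your closing paragraph already states the correct identity $N_2-\sum\lambda=N_2-f(\lambda)+|\lambda|-1$. Using it in that bullet repairs the proof.
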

\begin{proof}
    Enumerate the $|\Lambda|$ singletons and the $\binom{|\Lambda|}{2}$ two-element subsets.
    For each candidate, Conditions~\ref{cond:notempty}--\ref{cond:ign} and the quantity $\max\{|\lambda|-1,N_2-\sum\lambda\}$ can be evaluated in constant time.
    Retaining a feasible candidate of minimum quantity therefore takes $O(|\Lambda|^2)$ time.
\end{proof}

It remains to handle the case with $3\leq|\lambda|\leq|\Lambda|$.
We introduce the following auxiliary problem, which asks whether there exists a perfect subset with prescribed cardinality $K$ and total length $\Sigma$.

\begin{problem}[$\mathcal{P}_{K,\Sigma}$] \label{pb:PKS}
    \item []\textbf{ Given :} an instance $(N_2,\Lambda)$ of $\mathcal{P}$ and two integers $K\geq3$ and $\Sigma\in[K,2\times N_2-1]$. 
    \item []\textbf{ Question :} Does there exist a subset $\lambda\subseteq\Lambda$ such that $|\lambda|=K$, $\sum\lambda=\Sigma$ and every $l\in\lambda$ satisfies $|l|>\Sigma+K-1-N_2$?
\end{problem}

We denote by $\lambda_{K,\Sigma}$ a solution of $\mathcal{P}_{K,\Sigma}$, whenever one exists.

\begin{lemma}\label{lem:PKSperfect}
    Given a solution $\lambda_{K,\Sigma}$ of the problem $\mathcal{P}_{K,\Sigma}$, then $\lambda_{K,\Sigma}$ is a perfect subset whose quantity equals $\max\{K-1,N_2-\Sigma\}$.
\end{lemma}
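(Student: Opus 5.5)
The plan is to verify the three conditions of Definition~\ref{def:perfect} directly for $\lambda=\lambda_{K,\Sigma}$, and then compute its quantity from the formula for $q$. First, from $|\lambda|=K$ and $\sum\lambda=\Sigma$, Definition~\ref{def:funcf} gives
\begin{displaymath}
    f(\lambda)=\Sigma+K-1.
\end{displaymath}
This identity is what ties the constraint in $\mathcal{P}_{K,\Sigma}$ to the perfectness conditions.

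For Condition~\ref{cond:notempty}, note that $K\geq3$, so $|\lambda|=K>0$. Condition~\ref{cond:twochains} only restricts subsets of the form $\{l_1,l_2\}$. Since $|\lambda|=K\geq3$, the hypothesis of that condition never applies, and the condition holds vacuously. For Condition~\ref{cond:ign}, the problem $\mathcal{P}_{K,\Sigma}$ requires every $l\in\lambda$ to satisfy $|l|>\Sigma+K-1-N_2$. By the identity above, this is exactly $|l|>f(\lambda)-N_2$. Hence $\lambda$ is perfect.

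For the quantity, I would use the rewritten form $q(\lambda)=\max\{|\lambda|-1,\,N_2-\sum\lambda\}$ from the start of Section~\ref{sec:subpb}. Substituting $|\lambda|=K$ and $\sum\lambda=\Sigma$ gives $\max\{K-1,N_2-\Sigma\}$.

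I do not expect a genuine obstacle. The only step needing care is this rewriting of $q$. The original definition is $q(\lambda)=|\lambda|-1+\max\{0,N_2-f(\lambda)\}$. Substituting $f(\lambda)=\sum\lambda+|\lambda|-1$ and pulling the additive term $|\lambda|-1$ inside the maximum turns it into the stated form. I would spell out this one line so that the equality of the quantity is explicit rather than taken from the surrounding text.
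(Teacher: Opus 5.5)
Your proof is correct and follows the paper's own argument essentially verbatim: Conditions~\ref{cond:notempty} and~\ref{cond:twochains} hold because $K\geq 3$, Condition~\ref{cond:ign} is exactly the defining inequality of $\mathcal{P}_{K,\Sigma}$ once $f(\lambda)=\Sigma+K-1$, and the quantity follows by substitution. Your explicit one-line check that $|\lambda|-1+\max\{0,N_2-f(\lambda)\}=\max\{|\lambda|-1,N_2-\sum\lambda\}$ is a small but welcome addition, since the paper simply takes the rewritten form of $q$ from the surrounding text.
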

\begin{proof}
    Since $|\lambda_{K,\Sigma}|=K\geq3$, Conditions~\ref{cond:notempty} and~\ref{cond:twochains} hold automatically.
    Moreover, $f(\lambda_{K,\Sigma})=\Sigma+K-1$, and the defining inequality of $\mathcal P_{K,\Sigma}$ is exactly Condition~\ref{cond:ign}.
    Thus, $\lambda_{K,\Sigma}$ is perfect.
    Substituting its cardinality and total length in the definition of the quantity gives $\max\{K-1,N_2-\Sigma\}$.
\end{proof}

Given a set $\Lambda$ and values $N_2$, $K$ and $\Sigma$, let $\Lambda_{K,\Sigma}$ be the subset of $\Lambda$ such that 
\begin{displaymath}
    \Lambda_{K,\Sigma}=\left\{l\in\Lambda\;\middle|\;|l|>\Sigma+K-1-N_2\right\}.
\end{displaymath}
By definition, every solution of $\mathcal{P}_{K,\Sigma}$ is a subset of $\Lambda_{K,\Sigma}$.
Once $\Lambda$ is sorted by decreasing subpath length, every $\Lambda_{K,\Sigma}$ is a prefix of this ordering. 
The sorting takes $O(|\Lambda|\log|\Lambda|)$ time, and the endpoint of the relevant prefix can subsequently be found from the threshold $\Sigma+K-1-N_2$ by binary search. 
This cost is dominated by the dynamic programmes below.

We will solve $\mathcal{P}_{K,\Sigma}$ through the following auxiliary problem.

\begin{problem}[$\mathcal{Q}_{i,k,\sigma}$] \label{pb:Qiks}
    \item []\textbf{ Given :} set $\Lambda_{K,\Sigma}=\{l_1,\ldots,l_{|\Lambda_{K,\Sigma}|}\}$ and three integers $i$, $k$ and $\sigma$ satisfying $0\leq i\leq|\Lambda_{K,\Sigma}|$, $0\leq k\leq K$ and $0\leq\sigma\leq\Sigma$. 
    \item []\textbf{ Question :} Does there exist a subset $\lambda\subseteq\{l_1,\ldots,l_i\}$ such that $|\lambda|=k$ and $\sum\lambda=\sigma$?
\end{problem}

Whenever $\mathcal{Q}_{i,k,\sigma}$ admits a solution, we denote one of its solutions by $\lambda_{i,k,\sigma}$.

\begin{lemma}\label{lem:PKSpoly}
    Given an instance of $\mathcal{P}_{K,\Sigma}$ and set $\Lambda_{K,\Sigma}$, we can determine whether it admits a solution and construct one, whenever it exists, in $O(|\Lambda|\cdot K \cdot\Sigma)$ time.
\end{lemma}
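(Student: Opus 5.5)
We solve $\mathcal{P}_{K,\Sigma}$ by a standard cardinality-constrained subset-sum dynamic programme over the auxiliary problems $\mathcal{Q}_{i,k,\sigma}$. The key observation is that the threshold condition $|l|>\Sigma+K-1-N_2$ in $\mathcal{P}_{K,\Sigma}$ depends only on the fixed parameters $K$ and $\Sigma$, not on the chosen subset. Hence, once we restrict to $\Lambda_{K,\Sigma}$, every subset of $\Lambda_{K,\Sigma}$ automatically satisfies it. Conversely, every solution of $\mathcal{P}_{K,\Sigma}$ lies in $\Lambda_{K,\Sigma}$. Therefore $\mathcal{P}_{K,\Sigma}$ admits a solution if and only if $\mathcal{Q}_{|\Lambda_{K,\Sigma}|,K,\Sigma}$ does, and any solution of the latter is a solution of the former.

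We then fill a Boolean table $T[i][k][\sigma]$, for $0\le i\le|\Lambda_{K,\Sigma}|$, $0\le k\le K$ and $0\le\sigma\le\Sigma$, where $T[i][k][\sigma]$ is true exactly when $\mathcal{Q}_{i,k,\sigma}$ has a solution. The base case is $T[0][k][\sigma]=\text{true}$ iff $k=\sigma=0$. The recurrence is
\begin{displaymath}
T[i][k][\sigma]=T[i-1][k][\sigma]\ \vee\ \bigl(k\ge1 \wedge \sigma\ge|l_i| \wedge T[i-1][k-1][\sigma-|l_i|]\bigr).
\end{displaymath}
Correctness follows by induction on $i$: a subset of $\{l_1,\ldots,l_i\}$ either avoids $l_i$ or contains it, and removing $l_i$ yields a subset of $\{l_1,\ldots,l_{i-1}\}$ with cardinality $k-1$ and sum $\sigma-|l_i|$. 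Conversely, adding $l_i$ to a witness of $T[i-1][k-1][\sigma-|l_i|]$ gives a witness of $T[i][k][\sigma]$.

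To construct $\lambda_{K,\Sigma}$ we store, for each true entry, which branch of the recurrence was used. Alternatively, we backtrack from $T[|\Lambda_{K,\Sigma}|][K][\Sigma]$: at step $i$, if $T[i-1][k][\sigma]$ is true we skip $l_i$, otherwise we include $l_i$ and move to $(i-1,k-1,\sigma-|l_i|)$. This backtracking runs in $O(|\Lambda|)$ time.

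For the complexity, the table has $(|\Lambda_{K,\Sigma}|+1)(K+1)(\Sigma+1)$ entries and each entry is computed in constant time. Since $|\Lambda_{K,\Sigma}|\le|\Lambda|$ and $K\ge3$, $\Sigma\ge K$, this is $O(|\Lambda|\cdot K\cdot\Sigma)$. Computing $\Lambda_{K,\Sigma}$ itself, as the prefix of the sorted list described before the statement, costs $O(|\Lambda|)$ or less and is therefore dominated. There is no genuine obstacle here. The only point needing care is the reduction step: the filtering to $\Lambda_{K,\Sigma}$ must be fixed before the DP, and this is legitimate precisely because the threshold does not depend on the subset chosen.
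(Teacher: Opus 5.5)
Your proposal is correct and matches the paper's proof: you reduce $\mathcal{P}_{K,\Sigma}$ to $\mathcal{Q}_{|\Lambda_{K,\Sigma}|,K,\Sigma}$, fill the same include/exclude Boolean table over $(i,k,\sigma)$ in constant time per entry, and recover a witness from stored predecessors or by backtracking. Your remark that the filtering is legitimate because the threshold depends only on $K$ and $\Sigma$, not on the chosen subset, spells out what the paper compresses into ``by definition of $\Lambda_{K,\Sigma}$''.
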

\begin{proof}
    Put $m=|\Lambda_{K,\Sigma}|$.
    By definition of $\Lambda_{K,\Sigma}$, the problem $\mathcal P_{K,\Sigma}$ has a solution if and only if $\mathcal Q_{m,K,\Sigma}$ does.

    For $0\leq i\leq m$, $0\leq k\leq K$, and $0\leq\sigma\leq\Sigma$, let $Q[i,k,\sigma]$ denote the truth value of $\mathcal Q_{i,k,\sigma}$.
    The boundary conditions and recurrence are
    \begin{displaymath}
        Q[i,k,\sigma] = 
        \left\{
        \begin{array}{ll}
            \texttt{False}        & \text{if } i<k \text{ or } \sigma<k,        \\[0.2cm]
            \texttt{False}        & \text{if } k=0 \text{ and } \sigma>0,       \\[0.2cm]
            \texttt{False}        & \text{if } \sigma=0 \text{ and } k>0,       \\[0.2cm]
            \texttt{True}         & \text{if } k=0 \text{ and } \sigma=0,       \\[0.2cm]
            Q[i-1,k,\sigma]       & \text{if } \sigma<|l_i|,                    \\[0.2cm]
            Q[i-1,k,\sigma]\lor Q[i-1,k-1,\sigma-|l_i|] & \text{otherwise.}
        \end{array}
        \right.
    \end{displaymath}
    Indeed, a feasible subset either excludes $l_i$, or contains $l_i$ and leaves a subset of cardinality $k-1$ and total length $\sigma-|l_i|$ among the first $i-1$ elements.
    The converse follows by the same decomposition, which proves the recurrence.

    There are $O(mK\Sigma)$ states, each evaluated in constant time.
    Thus, the table is computed in $O(mK\Sigma)=O(|\Lambda|K\Sigma)$ time.
    Storing one valid predecessor for every true state reconstructs a solution, when one exists, within
    the same bound.
\end{proof}

To solve the problem $\mathcal{P}$, we combine the search for a perfect subset of minimum quantity and at most two subpaths with the search for a perfect subset of minimum quantity and at least three subpaths.

\begin{lemma}\label{lem:subset-poly}
    Given an instance of $\mathcal{P}$, we can solve it in $O(|\Lambda|^3\cdot{N_2}^2)$ time.
\end{lemma}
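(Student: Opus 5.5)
The plan is to split the optimisation according to the cardinality of $\lambda$, exactly as prepared above.

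\textbf{Small subsets.} Lemma~\ref{lem:lmbd12poly} already computes, in $O(|\Lambda|^2)$ time, a perfect subset of minimum quantity among those with one or two subpaths.

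\textbf{Larger subsets.} For subsets with at least three subpaths, I would enumerate every pair $(K,\Sigma)$ with $3\leq K\leq|\Lambda|$ and $K\leq\Sigma\leq 2N_2-1$. For each pair, build the prefix $\Lambda_{K,\Sigma}$ of the length-sorted $\Lambda$. Then solve $\mathcal{P}_{K,\Sigma}$ with the dynamic programme of Lemma~\ref{lem:PKSpoly}, and record $\max\{K-1,N_2-\Sigma\}$ whenever a solution exists.

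Finally, return the candidate of minimum quantity among these and the small-subset candidate.

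\textbf{Correctness.} By Lemma~\ref{lem:PKSperfect}, every recorded $\lambda_{K,\Sigma}$ is perfect and its quantity is the recorded value, so the output is a perfect subset. Conversely, let $\lambda$ be an optimal solution of $\mathcal{P}$ with $|\lambda|\geq3$. Set $K=|\lambda|$ and $\Sigma=\sum\lambda$. Then $\Sigma\geq K$, since every $|l|\geq1$, and $\Sigma\leq 2N_2-1$ by Lemma~\ref{lem:range}, so the pair $(K,\Sigma)$ lies in the enumerated range. Condition~\ref{cond:ign} for $\lambda$ reads $|l|>\Sigma+K-1-N_2$ for all $l\in\lambda$, so $\lambda$ solves $\mathcal{P}_{K,\Sigma}$. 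The dynamic programme may return a different witness, but that witness has the same quantity $\max\{K-1,N_2-\Sigma\}$, because the quantity depends only on $K$ and $\Sigma$. Hence the overall minimum is attained.

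\textbf{Complexity.} There are $O(|\Lambda|\cdot N_2)$ pairs $(K,\Sigma)$, and each costs $O(|\Lambda|\cdot K\cdot\Sigma)=O(|\Lambda|^2 N_2)$. This gives $O(|\Lambda|^3N_2^2)$ in total. The one-time sort and the binary searches cost $O(|\Lambda|\log|\Lambda|)$ plus $O(\log|\Lambda|)$ per pair, and the small-subset search costs $O(|\Lambda|^2)$. All of these are dominated by the main term.

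A sharper bound would follow from reusing one table per threshold, but this is unnecessary for the stated complexity. The main point requiring care is the completeness argument: the range restriction on $\Sigma$ and the observation that the quantity depends only on $(K,\Sigma)$.
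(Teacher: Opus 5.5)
Your proposal is correct and follows the paper's proof essentially step for step. It uses the same split: subsets of cardinality at most two are handled by Lemma~\ref{lem:lmbd12poly}, and larger ones by enumerating $3\le K\le|\Lambda|$, $K\le\Sigma\le 2N_2-1$ and applying Lemmas~\ref{lem:PKSpoly} and~\ref{lem:PKSperfect}, which gives the same $O(|\Lambda|^3N_2^2)$ count. Your completeness argument — an optimal $\lambda$ with $|\lambda|\ge 3$ solves $\mathcal{P}_{|\lambda|,\sum\lambda}$, and any returned witness has the same quantity — only spells out what the paper compresses into ``these ranges are exhaustive''.
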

\begin{proof}
    First compute an optimal perfect subset of cardinality at most two by Lemma~\ref{lem:lmbd12poly}.
    For cardinality at least three, enumerate $3\leq K\leq|\Lambda|$ and $K\leq\Sigma\leq2N_2-1$.
    
    These ranges are exhaustive: positivity of the subpath lengths gives $\Sigma\geq K$, and Lemma~\ref{lem:range} gives $\Sigma<2N_2$ for an optimal perfect subset.
    For each pair $(K,\Sigma)$, solve $\mathcal P_{K,\Sigma}$ by Lemma~\ref{lem:PKSpoly}; whenever it is feasible, Lemma~\ref{lem:PKSperfect} gives a perfect subset of quantity $\max\{K-1,N_2-\Sigma\}$.
    Comparing all feasible candidates with the best candidate of cardinality at most two therefore solves $\mathcal P$.

    Each pair is processed in $O(|\Lambda|K\Sigma)$ time, so the total running time is
    \begin{displaymath}
        O\left(
        \sum\limits_{K=3}^{|\Lambda|}
        \sum\limits_{\Sigma=K}^{2\times N_2-1}
        (|\Lambda|\cdot K\cdot\Sigma)
        \right).
    \end{displaymath}
    Using $\sum_{K=3}^{|\Lambda|}K=O(|\Lambda|^2)$ and $\sum_{\Sigma=K}^{2N_2-1}\Sigma=O(N_2^2)$ yields $O(|\Lambda|^3N_2^2)$.
    The initial sorting and the $O(|\Lambda|^2)$ enumeration from Lemma~\ref{lem:lmbd12poly} are dominated by this bound.
\end{proof}

We now conclude the proof of Theorem~\ref{the:classP}.

\begin{proof}{\emph{of Theorem}~\ref{the:classP}}
    Let an instance in $\mathcal I$ be given, and put $N_i=|V(L_i)|$.
    Let $c$ be the colour of the monochromatic path $L_2$ and let $\Lambda$ be the set of maximal subpaths coloured in $c$ of $L_1$.
    A linear scan of $L_1$ constructs $\Lambda$ and simultaneously tests whether some member has length at least $N_2$, which is equivalent to $L_2$ being included in $L_1$.
    Together with the test $\alpha>\max\{N_1,N_2\}$, this decides in linear time whether the instance belongs to $\mathcal J$.
    Instances outside $\mathcal J$ are handled in polynomial time by Lemma~\ref{lem:IexJ}.

    Suppose therefore that the instance belongs to $\mathcal J$.
    Then $0<|l|<N_2$ for every $l\in\Lambda$, so $(N_2,\Lambda)$ is an instance of $\mathcal P$.
    Since $|\Lambda|\leq N_1$, Lemma~\ref{lem:subset-poly} computes a perfect subset $\lambda^*$ of minimum quantity in 
    \begin{displaymath}
        O(|\Lambda|^3N_2^2)=O(N_1^3N_2^2).
    \end{displaymath}
    Lemmas~\ref{lem:Hsize} and \ref{lem:algo3} then show that $\mathcal A_1(\lambda^*)$ is an optimal graph of order 
    \begin{displaymath}
        N_1+\max\left\{|\lambda^*|-1, N_2-\sum\lambda^*\right\}.
    \end{displaymath}
    Hence, the original instance is feasible precisely when this value is at most $\alpha$.

    Finally, $\mathcal A_1$ constructs the optimal graph in $O(N_1|\lambda^*|)=O(N_1^2)$ time, which is dominated by the preceding optimisation.
    Both the decision and constructive versions are therefore polynomial-time solvable.
\end{proof}

\section{Conclusion}
This paper establishes several complexity results for SGSC. 
We first prove that SGSC is $\Sigma_2^P$-complete for instances with three colours, a sample $\mathcal{F}$ consisting of two graphs, and a unitary fingerprint $P$. 
We then show that the problem is $NP$-hard when the sample consists of only two paths with an arbitrary number of colours. 
On the positive side, we give a polynomial-time algorithm for two-path instances when one of the paths is monochromatic.

These results raise the question of where exactly the boundary between polynomial-time and NP-complete instances lies. 
In particular, is this boundary determined by the diversity of colours, or rather by how colours are mixed along the two paths? 
One possible tractable class would consist of paths that are concatenations of monochromatic subpaths, with no colour appearing in more than one subpath. 
Determining whether such a structural condition is sufficient for polynomial-time solvability, and more generally identifying the precise boundary between the two regimes, remains open.

Further directions include extending the polynomial-time approach to monochromatic trees or to $k$-coloured paths, for fixed or unrestricted $k$. 
Motivated by the molecular-graph applications discussed in the introduction, another natural extension is to minimise the size of a graph containing a polynomial number of copies of fixed small graphs. 
This variant lies in $NP$, since testing and counting occurrences of fixed-size graphs can be performed in polynomial time. 

Finally, these questions motivate a parameterised study of SGSC with parameters such as $|\mathcal{F}|$, $\sum_i P_i$, and $\max_{G_i\in\mathcal{F}} |G_i|$.

\vspace{1em}
\noindent\rule{\textwidth}{0.4pt}
\vspace{0.5em}

\noindent\textbf{AI Disclosure}

We used GPT-6 to enhance readability and check spelling. No AI tools have contributed to the mathematical content of the manuscript.

\nocite{*}
\bibliographystyle{abbrvnat}
\bibliography{biblio}
\label{sec:biblio}

\end{document}